\documentclass[prd,twocolumn,superscriptaddress,floatfix,amsmath,amssymb,amsfonts,nofootinbib,longbibliography]{revtex4-2}

\usepackage{tcolorbox}
\usepackage{float} 
\usepackage{scalerel}
\usepackage[normalem]{ulem}
\usepackage[english]{babel}
\usepackage{graphicx}
\usepackage{dcolumn}
\usepackage{bm}
\usepackage{blindtext}
\usepackage{verbatim}
\usepackage{relsize}
\usepackage{mathrsfs}
\usepackage{musicography}
\usepackage{amsmath}
\usepackage{blindtext}
\usepackage{cancel}
\usepackage{physics}
\usepackage{epstopdf}
\usepackage{mathtools}
\usepackage{blindtext}
\usepackage{tensor}
\usepackage{color}
\usepackage[usenames,dvipsnames]{pstricks}
\usepackage{epsfig}
\usepackage{pst-grad} 
\usepackage{pst-plot} 
\usepackage{hyperref}
\usepackage{verbatim}
\usepackage{slashed}
\usepackage{dsfont}
\usepackage[english]{babel}
\usepackage{amsmath,amssymb,amsthm}  
\usepackage{leftindex,enumerate}
\usepackage{enumitem, kantlipsum}  

\renewcommand{\iff}{\Leftrightarrow}


\newcommand{\mf}{\mathsf}

\newcommand{\ii}{\mathrm{i}}

\newcommand{\tc}[1]{\textsc{#1}}

\newcommand{\Ostate}{\zeta}
\newtheorem{thm}{Theorem}

\newtheorem{symm}{Symmetry}
\newtheorem{prop}{Proposition}
\newtheorem{lemma}{Lemma}
\newtheorem{Corollary}{Corollary}

\DeclareMathOperator\erfi{erfi}

\definecolor{goldenrod}{rgb}{0.85, 0.65, 0.13}

\allowdisplaybreaks[1] 

\begin{document}

\title{Interference of communication and field correlations in entanglement harvesting}

\author{Matheus H. Zambianco}
\email{mhzambia@uwaterloo.ca}

\affiliation{Department of Applied Mathematics, University of Waterloo, Waterloo, Ontario, N2L 3G1, Canada}
\affiliation{Institute for Quantum Computing, University of Waterloo, Waterloo, Ontario, N2L 3G1, Canada}

\author{Adam Teixid\'{o}-Bonfill}
\email{adam.teixido-bonfill@uwaterloo.ca}

\affiliation{Department of Applied Mathematics, University of Waterloo, Waterloo, Ontario, N2L 3G1, Canada}
\affiliation{Institute for Quantum Computing, University of Waterloo, Waterloo, Ontario, N2L 3G1, Canada}
\affiliation{Perimeter Institute for Theoretical Physics, Waterloo, Ontario, N2L 2Y5, Canada}

\author{Eduardo Mart\'{i}n-Mart\'{i}nez}
\email{emartinmartinez@uwaterloo.ca}

\affiliation{Department of Applied Mathematics, University of Waterloo, Waterloo, Ontario, N2L 3G1, Canada}
\affiliation{Institute for Quantum Computing, University of Waterloo, Waterloo, Ontario, N2L 3G1, Canada}
\affiliation{Perimeter Institute for Theoretical Physics, Waterloo, Ontario, N2L 2Y5, Canada}

\begin{abstract}
We reveal that the information exchange between particle detectors and their ability to harvest correlations from a quantum field can interfere constructively and destructively. This allows for scenarios where the presence of entanglement in the quantum field is actually detrimental to the process of getting the two detectors entangled. \\
\end{abstract}

\maketitle

\section{Introduction}

There has been a recent push to study entanglement in Quantum Field Theory through particle detector models. Unlike other techniques, particle detector models provide a way to implement localized measurements that make sense in terms of physically implementable settings and are self-regularized and hence free from from spurious divergences. The family of relativistic quantum information protocols known as {\it entanglement harvesting} consists of localized particle detectors that get entangled by extracting pre-existing entanglement from a quantum field. This was first explored by Valentini and Reznik~\cite{Valentini1991, Reznik2003} and further developed later on for different kinds of fields and backgrounds (see, e.g.,~\cite{MenicucciEntanglingPower,HarvestingBHLaura,Pozas-Kerstjens:2015,Pozas2016,boris,freefall,correlation2023Mann, KojiMasahiroPure,Tjoa2020,Cong2019,UDWAds,Henderson2019,Barman2022, HarvestingSuperposed,Liu2021}).


Nonetheless, as first pointed out by \cite{ericksonNew}, not all entanglement acquired by the detectors comes from the correlations in the quantum field. In fact, when the detectors are in causal contact, they can also get entangled via communication. In~\cite{ericksonNew} it was argued that one can split the acquired entanglement acquired by the detectors into two different sources: {\it communication-mediated} entanglement and {\it genuinely harvested} entanglement.

In previous literature it was argued that these two contributions always combine so that there is more entanglement when communication through the field is allowed than in spacelike separation. However, perhaps surprisingly, we show that this is not always the case.

Namely, we find that, in general, the contributions of the communication component and the contribution coming from the harvesting of pre-existing entanglement in the field can actually {\it interfere}. This interference can be constructive, amplifying even more the amount of entanglement acquired by the detectors, but also destructive so that the harvesting contribution and the communication contribution cancel and the detectors fail to get entangled. This phenomenon is not apparent in the setups considered in recent studies in which this splitting is analyzed (such as \cite{ericksonNew,Lensing,lindel2023entanglement}). We show that this phenomenon was not visible because for all the commonly chosen highly symmetric setups (see, e.g., \cite{twist2022, hectorMass, Henderson2019, HarvestingDelocalized, bandlimitedHarv2020, Ng1, SachsMannEdu,entangledDetectors2023,Shahnewaz2023,RalphOlson1,RalphOlson2}) both sources contribute constructively to the entanglement between detectors. In this paper, we provide sufficient conditions, fulfilled by commonly explored scenarios, such that the interference between the two sources is guaranteed to be constructive. 

Then, guided by these conditions we provide examples in Minkowski and De Sitter spacetimes where destructive interference causes the detectors to not get entangled while in causal contact, even though communication alone or genuine harvesting alone would have entangled the detectors. Conversely, we also find scenarios where the constructive interference is stronger than in the setups commonly explored in the literature, further enhancing the entanglement acquired by the detectors.

Our manuscript is organized as follows. In Sec.~\ref{sec:Entanglement Harvesting} we review the protocol of entanglement harvesting, to establish notation and build the necessary tools that will be used later on. In Sec.~\ref{sec:Communication}, we review the splitting of the entanglement acquired by detectors into a communication component and a genuine harvesting component. Sec.~\ref{sec:mathematical_results} presents the main results, namely the conditions that the spacetime, the detectors, and the field Feynmann propagator should meet to guarantee constructive interference. In Sec.~\ref{sec:results}, we show numerical examples in Minkowski and De Sitter spacetimes where these conditions are violated. There, we find that interference can go through the full range from completely destructive to fully constructive. The concluding remarks appear in Sec. \ref{sec:Conclusions}.

\section{Entanglement Harvesting}\label{sec:Entanglement Harvesting}

For the sake of completeness and establishing notation, we review the protocol of entanglement harvesting following, e.g., ~\cite{Pozas-Kerstjens:2015}. Consider two Unurh-DeWitt detectors~\cite{Unruh1976,DeWitt,Unruh-Wald,Jorma,Schlicht}, A and B, following timelike trajectories $\mf z_{\tc{a}}(\tau_{\tc{a}})$ and $\mf z_{\tc{b}}(\tau_{\tc{b}})$ in an $n + 1$ dimensional spacetime $\mathsf{M}$. Here, we assume that $\tau_{\tc{a}}$ and $\tau_{\tc{b}}$ are the respective proper times of the detectors.

The detectors are considered to be two-level quantum systems whose internal Hamiltonians  are given by
\begin{equation}
    \hat{H}_{j} = \Omega_{j} \hat{\sigma}_j^{+} \hat{\sigma}_j^{-},
    \label{H_detectors_AB}
\end{equation}
for $j = \tc{A},\tc{B}$. Here, $\Omega_{j}$ represents the (proper) internal energy gap, while $\hat{\sigma}_j^{+} = |e_j \rangle \langle g_j|$ and $\hat{\sigma}_j^{-}=|g_j \rangle \langle e_j|$ are ladder operators between the respective ground and excited states.

The detectors are coupled to a massless real scalar quantum field $\hat{\phi}(\mf x)$ defined on the spacetime $\mathsf{M}$. The field is assumed to satisfy the Klein-Gordon equation,
\begin{equation}
    (\nabla^{\mu}\nabla_{\mu} - \xi R)\hat{\phi}(\mf x) = 0,
    \label{KG_eq}
\end{equation}
where $\xi$ is a constant, and $R$ is the scalar curvature. Assuming $\{u_{\bf k}(\mf x), u_{\bf k}^{*}(\mf x')\}$ to be a complete, orthonormal set of solutions to Eq.~\eqref{KG_eq}, we can write an explicit expression for the field as
\begin{equation}
    \hat{\phi}(\mf x) = \int{\dd^{n}{\bm k} \  (u_{\bm  k}(\mf x)\hat{a}_{\bm k} + u_{\bm k}(\mf x)^{*}\hat{a}_{\bm k}^{\dagger})}.
    \label{real_scalar_field}
\end{equation}
Here, the ladder operators satisfy the usual bosonic canonical commutation relations, namely
\begin{equation}
    [\hat{a}_{\bm k}, \hat{a}_{\bm p}^{\dagger}] = \delta^{(n)}(\bm k - \bm p).
\end{equation}

We consider the usual covariant prescription for the Unruh-DeWitt detector~\cite{TalesBrunoEdu2020} 
where the interactions happen in a finite region of spacetime according to the following interaction Hamiltonian density\footnote{Notice that this construction assumes that there is a foliation where the spacetime smearing factorizes into a switching and smearing functions $\Lambda_j(\tau_j,\bm x)=\chi_j(\tau_j)F_j(\bm x)$ and that there is a family of frames in which all the spatial points in the support of $F_j(\bm x)$ remain at the same coordinate distance to a ``center of mass'' of the detector $j$. The detector's internal forces keep the detector cohesive. The trajectory of this center of mass $\mathsf{z}_j(\tau_j)$ defines the proper time $\tau_j$ associated with the detector $j$ (i.e., the center of mass is assumed to carry the internal clock of the detector). The details and assumptions that go into this construction can be found in~\cite{TalesBrunoEdu2020,EduTalesBruno2021}, where it is discussed that this kind of construction is reasonable to model systems like atomic probes.} 
\begin{equation}
    \hat{h}_{I}(\mf x) = \lambda \big[\Lambda_{\tc{a}}(\mf x) \hat{\mu}_{\tc{a}}(\tau_\tc{a}) + \Lambda_{\tc{b}}(\mf x) \hat{\mu}_{\tc{b}}(\tau_\tc{b})\big]\hat{\phi}(\mf x).
    \label{H_I_harvesting}
\end{equation}
Here, the detectors' monopole moments are given by
\begin{equation}
    \hat{\mu}_{j}(\tau_j) = e^{\ii \Omega_{j} \tau_{j}} \hat{\sigma}^{+}_{j} + e^{-\ii \Omega_{j} \tau_j} \hat{\sigma}^{-}_{j}.\label{eq:defMonopole}
\end{equation}
and $\Lambda_j({\mf x})$ are the spacetime smearing functions. 

Time evolution in the interaction picture is thus implemented through the operator\footnote{Notice that for spatially smeared detectors there is an ambiguity in the definition of time ordering. However, it was shown in~\cite{EduTalesBruno2021} that if we keep all our predictions to leading order in the perturbative parameter $\lambda$ and choose suitable initial states, then the evolution will not depend on the time ordering chosen, so this ambiguity is irrelevant for this work.}
    \begin{equation}
    \hat{U}_{I} = {\cal T} \exp \left( -\ii\int \dd V  \hat{h}_{\tc{I}}(\mf x)\right),
    \label{U_I}
\end{equation}
where $\dd V$ is the invariant spacetime volume element, that for a given choice of coordinates $(t,\bm x)$ takes the form $\dd V= \dd t \dd^{n}\bm x\sqrt{-g}$, where $g$ is the determinant of the metric tensor. 

As it is common in entanglement harvesting protocols, we can assume that the initial state of the field is a zero-mean Gaussian state $\hat{\rho}_{\phi, 0}$ (like for example the Minkowski vacuum in flat spacetime and the conformal vacuum in DeSitter), whereas the detectors can be set in arbitrary pure states $\ket{\psi_{\tc{a}}}$ and $\ket{\psi_{\tc{b}}}
$. The initial state of the full system is then written as 
\begin{equation}
    \hat{\rho}_{0} = |\psi_{\tc{a}} \rangle \langle \psi_\tc{a}| \otimes |\psi_\tc{b} \rangle \langle \psi_\tc{b}| \otimes \hat{\rho}_{\phi, 0}.
    \label{rho_0_harvesting}
\end{equation}

With this choice, notice that the detectors' initial state has no correlations whatsoever. Nonetheless, after time evolving the state $\hat{\rho}_{0}$ with Eq.~\eqref{U_I}, the detectors will generically evolve to a mutually entangled state. This is true even if the detectors are spacelike separated. This is so because the detectors have harvested pre-existing entanglement in the field's initial state. This fact gives the name of {\it entanglement harvesting} to this phenomenon.

The state of the detectors after they couple to the field is given by $\hat{\rho}_{\tc{ab}} =  \Tr_{\phi}[\hat{U}_{I} \hat{\rho}_{0} \hat{U}_{I}^{\dagger}]$. In order to express this state in matrix form, we consider states $\ket{\Ostate_\tc{a}}$ and $\ket{\Ostate_\tc{b}}$ so that ${\cal B} = \{\ket{\psi_\textsc{a} \psi_{\textsc{b}}}, \ket{\psi_\textsc{a} \Ostate_\tc{b}}, \ket{\Ostate_\tc{a} \psi_{\textsc{b}}}, \ket{\Ostate_\tc{a} \Ostate_\tc{b}}\}$ is an orthonormal basis. In terms of the previously defined ground and excited states of the detectors, we can write
\begin{align}
    \nonumber\ket{g_{{j}}} &= \cos{\alpha_{{j}}} \ket{\psi_{{j}}} - e^{\ii \beta_{{j}}} \sin{\alpha_{{j}}} \ket{\Ostate_j}, \\
    \ket{e_{{j}}} &= e^{-\ii \beta_{{j}}}\sin{\alpha_{{j}}} \ket{\psi_{{j}}} +  \cos{\alpha_{j}} \ket{\Ostate_j},\label{eq:states}
\end{align}
for some $\alpha_{j}, \beta_{j} \in [0, 2\pi]$. Thus, in the basis ${\cal B}$ we have
\begin{align}
    \hat{\rho}_{\textsc{ab}} &= \left[
\begin{array}{cccc}
1-\mathcal{L}_\textsc{aa}-\mathcal{L}_\textsc{bb} & \mathcal{X}^* & \mathcal{Y}^* & \mathcal{M}^\ast \\
\mathcal{X} & \mathcal{L}_\textsc{bb} & \mathcal{L}_\textsc{ab}^*  & 0 \\
\mathcal{Y} & \mathcal{L}_\textsc{ab} & \mathcal{L}_\textsc{aa} &0\\
\mathcal{M} & 0 & 0 & 0 \\
\end{array}\right] +O(\lambda^4) \;,
\label{rho_AB_harvesting}
\end{align}
with 
\begin{align} 
    \mathcal{L}_{{ij}}= \big[ &\cos^2{\alpha_{{i}}}\cos^2{\alpha_{{j}}} \,L_{{ij}}(\Omega_{i}, \Omega_{j})\nonumber\\
    &- \cos^2{\alpha_{{i}}}\sin^2{\alpha_{{j}}} \,e^{-2\ii \beta_{{j}}} \nonumber \,L_{{ij}}(\Omega_{i}, -\Omega_{j})\\ & - \sin^2{\alpha_{{i}}}\cos^2{\alpha_{{j}}}\,e^{2\ii \beta_{{i}}} \,L_{{ij}}(-\Omega_{i}, \Omega_{j}) \nonumber\\
    & + \sin^2{\alpha_{{i}}}\sin^2{\alpha_{{j}}}\,e^{2\ii (\beta_{{i}} - \beta_{{j}} )} \,L_{{ij}}(-\Omega_{i}, -\Omega_{j}) \label{eq:LijGen} \big],\\
    \mathcal{M}= \big[&\cos^2{\alpha_{\tc{a}}}\cos^2{\alpha_{\tc{b}}} \,M(\Omega_{\tc{a}}, \Omega_{\tc{b}})\nonumber\\
    &- \cos^2{\alpha_{\tc{a}}}\sin^2{\alpha_{\tc{b}}}\, e^{2\ii \beta_{\tc{b}}} \, M(\Omega_{\tc{a}}, -\Omega_{\tc{b}})\ \nonumber \\ 
    &- \sin^2{\alpha_{\tc{a}}}\cos^2{\alpha_{\tc{b}}}\,e^{2\ii \beta_{\tc{a}}} \,M(-\Omega_{\tc{a}}, \Omega_{\tc{b}})\ \nonumber\\
    &+ \sin^2{\alpha_{\tc{a}}}\sin^2{\alpha_{\tc{b}}}\,e^{2\ii (\beta_{\tc{a}} + \beta_{\tc{b}} )} \,M(-\Omega_{\tc{a}}, -\Omega_{\tc{b}})\big] \label{eq:MGen},
\end{align}
where 
\begin{align}
    L_{ij}(\Omega_{i}, \Omega_{j})  =  \lambda^2 \int\dd V \dd V'& e^{\ii (\Omega_{i} \tau_i-\Omega_{j}\tau_j')} \nonumber\\ \times 
    &\Lambda_{i}(\mf x) \Lambda_{j}(\mf x')W(\mf x', \mf x),\label{Lij_harvesting} \\
    M(\Omega_{\tc{a}}, \Omega_{\tc{b}}) = -\lambda^2 \int \dd V \dd V' & e^ {\ii( \Omega_{\tc{a}}\tau_\tc{a}+\Omega_{\tc{b}} \tau_\tc{b}')} \nonumber \\ \times & \Lambda_{\tc{a}}(\mf x)\Lambda_{\tc{b}}(\mf x')G_{\tc{f}}(\mf x, \mf x').
    \label{M_harvesting}
\end{align}
The terms ${\cal L}_{ij}$ and ${\cal M}$ are usually called {\it local noise} and {\it non-local correlations}, respectively. The expressions for ${\cal X}$ and ${\cal Y}$ are cumbersome and, since we will not use them here, we refer the interested reader to reference \cite{statedependenceHector}. In equations~\eqref{Lij_harvesting} and~\eqref{M_harvesting}, $W(\mf x, \mf x')$ is the {\it Wightman function},
\begin{equation}
    W(\mf x, \mf x') = \Tr[\hat{\phi}(\mf x)\hat{\phi}(\mf x') \hat{\rho}_{\phi, 0}],
\end{equation}
whereas $G_{\tc{f}}(\mf x, \mf x')$ represents the {\it Feynman propagator}, which can be written as
\begin{equation}
\begin{aligned}
    G_{\tc{f}}(\mf x, \mf x') &= \Tr[{\cal T}\hat{\phi}(\mf x)  \hat{\phi}(\mf x') \hat{\rho}_{\phi,0}]  \\&= \Theta(t - t') W(\mf x, \mf x') + 
    \Theta(t' - t) W(\mf x', \mf x). \label{eq:defGF}
\end{aligned}
\end{equation}
Here, $\Theta(t)$ is the Heaviside step function, and $t$ describes any time coordinate.

The amount of entanglement between the detectors can be quantified by the {\it negativity}~\cite{VidalNegativity}. For a system of two qubits, Negativity is a faithful entanglement measure, and it is defined as the absolute sum of the {\it negative eigenvalues} of the partially transposed density matrix of the two qubits $\hat{\rho}^{t_{\tc{a}}}$. In particular, for the state $\hat{\rho}_{\tc{ab}}$ described by Eq.~\eqref{rho_AB_harvesting} we have~\cite{Pozas-Kerstjens:2015}
\begin{equation}
   {\cal N}=  \max\{0, {\cal V}\}+ {\cal O}(\lambda ^{4}),
\end{equation}
with
\begin{equation}
    {\cal V} = \sqrt{|{\cal M}|^{2} +\left(\frac{{\cal L}_{\tc{aa}}
    - {\cal L}_{\tc{bb}}}{2}\right)^2 } - \frac{{\cal L}_{\tc{aa}} + {\cal L}_{\tc{bb}}}{2}.
    \label{V_negativity_general}
\end{equation}
According to expression~\eqref{V_negativity_general}, we can see that, in general, entanglement emerges as a competition between the non-local term ${\cal M}$ and the local noise terms (${\cal L}_{\tc{aa}}$ and ${\cal L}_{\tc{bb}}$). For the particular case where \mbox{$\mathcal{L}_{\tc{aa}} = \mathcal{L}_{\tc{bb}} = \mathcal{L}$} (such as identical inertial detectors in Minkowski spacetime), the negativity reads
\begin{equation}
    {\cal N} = \max\{0, |{\cal M}| - {\cal L}\} + {\cal O}(\lambda ^{4}).
\end{equation}
When the detectors are in causal contact, it is possible to trace back the entanglement created between the detectors to two different sources~\cite{ericksonNew}. On the one hand, the detectors can get entangled via communication through the field ({\it signaling}) and, on the other hand, the detectors can get entangled by extracting the entanglement already present~\cite{vacuumEntanglement,vacuumBell} in the field state ({\it harvesting}). In the next section, we are going to review how one can distinguish between these two different contributions.

\section{Communication vs genuine harvesting}\label{sec:Communication}

When two detectors in spacelike separation get entangled through their interaction with the field, it is clear that this entanglement has to come from the pre-existing correlations in the field itself. However, when they are in causal contact, the detectors could very well be gaining entanglement both through genuine harvesting from the field state and also through the direct exchange of information through the field. The techniques to separate these two contributions were first laid out in \cite{ericksonNew}, where the authors use the state dependence of the symmetric (real) and anti-symmetric (imaginary) parts of the Wightman function to split the negativity into signaling and genuine harvesting contributions. Following the same notation as in~\cite{ericksonNew}, we are going to denote by ${\cal N}^{-}$ the negativity computed if only the signaling contribution is kept in the $\mathcal{M}$ term, and ${\cal N}^{+}$ the negativity if we only keep the symmetric contribution of the Wightman function for the $\mathcal{M}$ term. Then, determining if entanglement is harvested or if it is acquired through communication is done by comparing the total amount of entanglement $\mathcal{N}$ and the signaling contribution $\mathcal{N}^-$.

Concretely, to define ${\cal N}^{\pm}$ we first look at the state dependence of the symmetric and anti-symmetric parts of the Wightman function. We separate the Wightman function as follows:
\begin{equation}
    W(\mf x, \mf x') = W^{+}(\mf x, \mf x) + W^{-}(\mf x , \mf x'),
    \label{Wightman_real_imag}
\end{equation}
where the symmetric and the anti-symmetric parts are respectively given by
\begin{equation}
    W^{+}(\mf x , \mf x') = \frac{1}{2}\Tr(\{\hat{\phi}(\mf x), \hat{\phi}(\mf x')\}\hat{\rho}_{\phi})
\end{equation}
and
\begin{equation}
    W^{-}(\mf x , \mf x') = \frac{1}{2}\Tr([\hat{\phi}(\mf x), \hat{\phi}(\mf x')] \hat{\rho}_{\phi}).
\end{equation}
The commutator $[\hat{\phi}(\mf{x}), \hat{\phi}(\mf{x}')]$ is a state-independent multiple of the identity. The state independence of $W^{-}(\mf{x}, \mf{x}')$ is key to separating the two contributions: any entanglement whose origin can be traced back to this component of the Wightman function cannot be associated to pre-existing entanglement in the field, since one could consider a state of the field with no entanglement and yet this contribution would be unchanged. Consequently, it must be concluded that the anti-symmetric part of the Wightman function is not associated with true entanglement harvesting but rather contributes only to signaling.

Using the decomposition of the Wightman in Eq.~\eqref{Wightman_real_imag}, we can formally define the symmetric and anti-symmetric parts of the Feynman propagator as
\begin{equation}
    G_{\tc{f}}^{\pm}(\mf x, \mf x') = \Theta(t - t') W^{\pm}(\mf x, \mf x') + 
    \Theta(t' - t) W^{\pm}(\mf x', \mf x).
\end{equation}
Then, we can split the non-local term ${\cal M}$ into two contributions, namely
\begin{equation}
    {\cal M} = {\cal M}^{+} + {\cal M}^{-}\label{eq:sumMs},
\end{equation}
the components ${\cal M}^{\pm}$ can be defined by simply changing the Feynman propagator in Eq.~\eqref{M_harvesting} by the corresponding terms $G_{\tc{f}}^{\pm}(\mf x, \mf x')$. Finally, we write
\begin{equation}
    {\cal V}^{\pm} =  \sqrt{|{\cal M}^{\pm}|^{2} +\left(\frac{{\cal L}_{\tc{aa}}
    - {\cal L}_{\tc{bb}}}{2}\right)^2 } - \frac{{\cal L}_{\tc{aa}} + {\cal L}_{\tc{bb}}}{2}.
    \label{eq:V_pm}
\end{equation}
so that the {\it harvested negativity} ${\cal N}^{+}$ and the {\it communication-assisted negativity} ${\cal N}^{-}$ can be expressed as\footnote{A small remark to avoid confusion due to the names given to the different contributions to negativity: notice that $\mathcal{M}^-$ can still be non-zero and $\mathcal{N}^-=0$ if the local noise terms conspired appropriately (or it could even happen that $\mathcal{N}\neq 0$ while \mbox{$\mathcal{N}^\pm=0$}, a case that will appear in the plots of the examples of the section \ref{sec:results}
). 
This means that there can be some communication-assisted harvesting even if $\mathcal{N}^-=0$. However, what is true is that even in such a case the pre-existing field correlations are responsible for the entanglement between the detectors. If field correlations were not there, there would be no entanglement in the detectors at all. }
\begin{equation}
    {\cal N}^{\pm} = \max\{{\cal V}^{\pm}, 0\} + {\cal O}(\lambda ^{4}).
\end{equation}

\section{General results for guaranteeing constructive interference}
\label{sec:mathematical_results}

It is commonly observed in previous literature that the acquired entanglement ${\cal N}$ is larger than both ${\cal N}^{\pm}$, which indicates that communication and pre-existing field correlations work together to entangle the detectors. However, this is not the case in general. There are situations where communication and field correlations can ``destructively interfere'', resulting in small $\mathcal{N}$ even if both ${\cal N}^{\pm}$ are relatively large. In this section, we will explore this phenomenon by describing all the ways in which the contributions from communication and pre-existing field correlations can add up to build entanglement. Then, we explore a set of transformations of the harvesting setup that turn scenarios where communication and genuine harvesting contributions to $|\mathcal{M}|$ cancel each other into scenarios where they collaborate. After this, we provide a set of (easy to check) sufficient conditions under which communication and harvesting are always collaborating constructively.

For this exploration, we will focus on the study of the correlation term $|\mathcal M|$ which always contributes positively to $\mathcal N$. The current section specifically presents how communication (through $\mathcal M^-$) and field correlations (through $\mathcal M^+$) determine $|\mathcal M|$ and therefore $\mathcal N$. The local noise terms, $\mathcal L_\tc{aa}$ and $\mathcal L_\tc{bb}$, of course, also affect entanglement, however, they are local to each detector and hence independent of communication and correlations acquired of the detectors, and thus are ignored in the present section. 

Starting from the most general scenario, $|\mathcal M|$ can be small even if $|\mathcal{M}^+|$ and $|\mathcal{M}^-|$ are relatively large, because
\begin{align}
    |\mathcal M|^2 &= |\mathcal M^+ + \mathcal M^-|^2 \nonumber\\
    &= |\mathcal M^+|^2 + |\mathcal M^-|^2 + 2|\mathcal M^+||\mathcal M^-|\cos \Delta\gamma.\label{eq:complexsum}
\end{align}
Here, $\Delta\gamma$ is the relative phase between the complex numbers $\mathcal M^+$ and $\mathcal M^-$. This phase $\Delta\gamma$ controls how communication ($\mathcal M^-$) interferes with field correlations ($\mathcal M^+$). The interference can be \textit{fully destructive}, making $|\mathcal M|$ the smallest,
\begin{equation}
    |\Delta\gamma| = \pi \iff |\mathcal M| = ||\mathcal M^+|-|\mathcal M^-||,
\end{equation}
\textit{fully constructive}, making $|\mathcal M|$ the largest,
\begin{equation}
    |\Delta\gamma| = 0 \iff |\mathcal M| = |\mathcal M^+|+ |\mathcal M^-|,
\end{equation}
or exactly \textit{in between}, with ${\cal M}^+$ and ${\cal M}^-$ so that  
\begin{equation}
    |\Delta\gamma| = \pi/2 \iff |{\cal M}| = \sqrt{|{\cal M}^+|^2+|{\cal M}^-|^2}.
    \label{eq:abs_M_sqrt}
\end{equation}
In the two latter cases, communication and field correlations collaborate to increase $|{\cal M}|$ and thus $\mathcal N$. As we will see, the last relation, Eq.~\eqref{eq:abs_M_sqrt}, turns out to be commonly fulfilled for the simplest choices of UDW detector model, as will be explored in Proposition \ref{prop:switchings}

Next, we explore how to transform a setup from a destructive interference to a constructive interference scenario. Concretely, we will show how to transform the parameters of the harvesting setup to change the sign $\cos \Delta\gamma$ in Eq~\eqref{eq:complexsum}. In order to find such transformation, it is convenient to denote as $\mathcal{M}_{G_\tc{f}^*}$ the term $\mathcal{M}$ after complex conjugating the propagator $G_\tc{f}$. Then,
\begin{align}
    \mathcal{M} = \mathcal M^+ +\mathcal M^-,\ \mathcal{M}_{G_\tc{f}^*} = \mathcal M^+ -\mathcal M^-. 
    \label{eq:MGF_star}
\end{align}
Consequently,
\begin{equation}
    \mathcal{M}^\pm_{G_\tc{f}^*}=\pm\mathcal{M}^{\pm}, \label{eq:Mpm_relation}
\end{equation}
which implies
\begin{equation}
    |\mathcal M_{G_\tc{f}^*}|^2 = |\mathcal M^+|^2 + |\mathcal M^-|^2 - 2|\mathcal M^+||\mathcal M^-|\cos \Delta\gamma\label{eq:interferenceMGF},
\end{equation}
where $\Delta\gamma$ is the relative phase between the original terms $\mathcal M^{+}$ and $\mathcal M^-$. Observe that, as we were searching for, the term with $\cos \Delta\gamma$, which controls the interference, has the opposite sign in $|\mathcal{M}_{G_\tc{f}^*}|$ compared to in $|\mathcal{M}|$. Meanwhile, the communication and field correlations contributions retain the same size, $|\mathcal{M}^\pm_{G_\tc{f}^*}|=|\mathcal{M}^{\pm}|$.

Therefore, if we find an operation that conjugates $G_\tc{f}$, we could use it to turn destructive interference into constructive, as previously described. The next proposition shows how one can achieve that by changing the initial state of the particle detectors or, equivalently, transforming the parameters of the particle detector model.
\begin{prop} \label{prop:transformation}
Consider an operation that transforms arbitrary states of the detectors $j=\text{A, B}$ as follows
\begin{align}
    & \cos \alpha_{j} \ket{g_j} + e^{\ii \beta_{j}} \sin \alpha_{j} \ket{e_j}  \nonumber \\
    &  \to \sin \alpha_{j} \ket{g_j} + e^{\ii \beta_{j}} \cos \alpha_{j} \ket{e_j}.\label{eq:transfrom_states}
\end{align}
Let $\widetilde{\mathcal M}$ be $\mathcal M$ after performing this transformation on the initial state of the detectors. Then,
\begin{align}
    &|\widetilde{\mathcal M}|^2=|\mathcal M^+|^2 + |\mathcal M^-|^2 - 2|\mathcal M^+||\mathcal M^-|\cos \Delta\gamma,\nonumber\\
    &|\widetilde{\mathcal M}^\pm|=|\mathcal{M}^{\pm}|.
\end{align} 
Therefore, swapping the populations in the ground and excited state (while keeping the relative phase between them) changes the sign of $\cos \Delta\gamma $. Thus, this swap turns any destructive interference into constructive, while keeping the communication and genuine harvesting contributions unchanged.

Equivalently, the transformation in Eq.~\eqref{eq:transfrom_states} can be implemented by reversing the energy gaps \mbox{$\Omega_j\to-\Omega_j$} and the relative phases $\beta_j\to-\beta_j$.


\begin{proof}

To prove the proposition, use Eq.~\eqref{eq:MGen} to see that
\begin{align}
    (\mathcal{M}_{G_\tc{f}^*})^*=&\big[\cos^2{\alpha_{\tc{a}}}\cos^2{\alpha_{\tc{b}}} \,M(-\Omega_{\tc{a}}, -\Omega_{\tc{b}})\nonumber\\
    &\phantom{\big[}- \cos^2{\alpha_{\tc{a}}}\sin^2{\alpha_{\tc{b}}}\, e^{-2\ii \beta_{\tc{b}}} \, M(-\Omega_{\tc{a}}, \Omega_{\tc{b}})\ \nonumber \\ 
    &\phantom{\big[}- \sin^2{\alpha_{\tc{a}}}\cos^2{\alpha_{\tc{b}}}\,e^{-2\ii \beta_{\tc{a}}} \,M(\Omega_{\tc{a}},- \Omega_{\tc{b}})\ \nonumber\\
    &\phantom{\big[}+ \sin^2{\alpha_{\tc{a}}}\sin^2{\alpha_{\tc{b}}}\,e^{-2\ii (\beta_{\tc{a}} + \beta_{\tc{b}} )} \,M(\Omega_{\tc{a}}, \Omega_{\tc{b}})\big]\nonumber\\
    =&\widetilde{\mathcal M}.\label{eq:Mtilde_iskinda_MGF}
\end{align}
Here we used that
\begin{equation}
    \big(M_{G_\tc{f}^*}(\Omega_{\tc{a}}, \Omega_{\tc{b}})\big)^*=M(-\Omega_{\tc{a}}, -\Omega_{\tc{b}}),
\end{equation}
which in turn follows from Eq.~\eqref{M_harvesting} and $\Lambda_j=\Lambda_j^*$. 

Finally, substituting
$\widetilde{\mathcal M}=\mathcal{M}_{G_\tc{f}^*}^*$ into Eqs.~\eqref{eq:Mpm_relation} and~\eqref{eq:interferenceMGF} provides the identities stated in the proposition. 
\end{proof}
\end{prop}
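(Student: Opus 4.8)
The plan is to reduce everything to the auxiliary quantity $\mathcal{M}_{G_{\tc{f}}^*}$ introduced in Eq.~\eqref{eq:MGF_star}, since the relations~\eqref{eq:Mpm_relation} and~\eqref{eq:interferenceMGF} already express $|\mathcal{M}_{G_{\tc{f}}^*}|$ and $|\mathcal{M}^\pm_{G_{\tc{f}}^*}|$ in exactly the form demanded by the proposition. Thus it suffices to show that the transformed term $\widetilde{\mathcal{M}}$ coincides with the complex conjugate of $\mathcal{M}_{G_{\tc{f}}^*}$ (up to an irrelevant global phase). Granting that, $|\widetilde{\mathcal{M}}|=|\mathcal{M}_{G_{\tc{f}}^*}|$ and $|\widetilde{\mathcal{M}}^\pm|=|\mathcal{M}^\pm_{G_{\tc{f}}^*}|=|\mathcal{M}^\pm|$ follow at once, and the sign of the $\cos\Delta\gamma$ term flips relative to Eq.~\eqref{eq:complexsum}, which is the content of the statement.

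First I would determine how the state swap~\eqref{eq:transfrom_states} acts on the closed form~\eqref{eq:MGen}. The cleanest route is to note that $\mathcal{M}=\langle\Ostate_{\tc{a}}\Ostate_{\tc{b}}|\hat{\rho}_{\tc{ab}}|\psi_{\tc{a}}\psi_{\tc{b}}\rangle$ is bilinear in the effective transition amplitude $\langle\Ostate_j|\hat{\mu}_j(\tau_j)|\psi_j\rangle$ built from the monopole~\eqref{eq:defMonopole} and the parametrization~\eqref{eq:states}; a short computation gives $\langle\Ostate_j|\hat{\mu}_j(\tau_j)|\psi_j\rangle = e^{\ii\Omega_j\tau_j}\cos^2\alpha_j - e^{2\ii\beta_j}e^{-\ii\Omega_j\tau_j}\sin^2\alpha_j$, and the four terms of~\eqref{eq:MGen} are precisely the four products of these amplitudes for $j=\tc{a},\tc{b}$ folded against $\Lambda_{\tc{a}}\Lambda_{\tc{b}}G_{\tc{f}}$. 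The transformation~\eqref{eq:transfrom_states} interchanges $\cos\alpha_j\leftrightarrow\sin\alpha_j$ in the initial state while leaving $\beta_j$ fixed, hence swaps $\cos^2\alpha_j\leftrightarrow\sin^2\alpha_j$ in each amplitude; re-expanding produces $\widetilde{\mathcal{M}}$ with its four coefficients reshuffled. In parallel I would compute $(\mathcal{M}_{G_{\tc{f}}^*})^*$ directly from~\eqref{eq:MGen}, the only input being the identity $\big(M_{G_{\tc{f}}^*}(\Omega_{\tc{a}},\Omega_{\tc{b}})\big)^*=M(-\Omega_{\tc{a}},-\Omega_{\tc{b}})$, which follows from~\eqref{M_harvesting} and the reality $\Lambda_j=\Lambda_j^*$ (conjugation flips the oscillating exponentials and turns $G_{\tc{f}}^*$ back into $G_{\tc{f}}$), together with its three sign variants. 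Conjugating~\eqref{eq:MGen} term by term and applying this identity reorganizes the four $M$'s and sends $\beta_j\to-\beta_j$, reproducing the expression already displayed in Eq.~\eqref{eq:Mtilde_iskinda_MGF}.

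The crux, and the only delicate point, is matching the two expansions. I expect $\widetilde{\mathcal{M}}$ and $(\mathcal{M}_{G_{\tc{f}}^*})^*$ to agree only up to the overall factor $e^{2\ii(\beta_{\tc{a}}+\beta_{\tc{b}})}$, which is nothing but the residual phase freedom in the basis state $\ket{\Ostate_j}$: the bra $\langle\Ostate_{\tc{a}}\Ostate_{\tc{b}}|$ appearing in $\mathcal{M}$ carries one factor of this phase per detector. Because the proposition concerns only the magnitudes $|\widetilde{\mathcal{M}}|$ and $|\widetilde{\mathcal{M}}^\pm|$, this phase is immaterial, and fixing the phase convention of $\ket{\Ostate_j}$ lets me write $\widetilde{\mathcal{M}}=(\mathcal{M}_{G_{\tc{f}}^*})^*$ exactly, as in~\eqref{eq:Mtilde_iskinda_MGF}. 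Recognizing this global phase as a harmless gauge artifact of the $\ket{\Ostate_j}$ convention, rather than a genuine discrepancy, is the step that needs care; everything else is mechanical bookkeeping of the $e^{2\ii\beta_j}$ factors.

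The proposition then closes by feeding $\widetilde{\mathcal{M}}=(\mathcal{M}_{G_{\tc{f}}^*})^*$ into~\eqref{eq:Mpm_relation}, which gives $|\widetilde{\mathcal{M}}^\pm|=|\mathcal{M}^\pm|$, and into~\eqref{eq:interferenceMGF}, which yields the $-2|\mathcal{M}^+||\mathcal{M}^-|\cos\Delta\gamma$ formula, i.e.\ the flipped sign relative to~\eqref{eq:complexsum}. Finally, for the ``equivalently'' clause I would substitute $\Omega_j\to-\Omega_j$ and $\beta_j\to-\beta_j$ into~\eqref{eq:MGen} and verify it reproduces the same four coefficients; in fact it lands on $(\mathcal{M}_{G_{\tc{f}}^*})^*$ on the nose, so the two prescriptions coincide up to the same global phase and hence deliver identical magnitudes.
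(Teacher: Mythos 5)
Your proof is correct and follows essentially the same route as the paper's: establish $\big(M_{G_\tc{f}^*}(\Omega_{\tc{a}},\Omega_{\tc{b}})\big)^*=M(-\Omega_{\tc{a}},-\Omega_{\tc{b}})$ from Eq.~\eqref{M_harvesting} and the reality of $\Lambda_j$, identify the transformed term $\widetilde{\mathcal{M}}$ with $(\mathcal{M}_{G_\tc{f}^*})^*$, and then substitute into Eqs.~\eqref{eq:Mpm_relation} and~\eqref{eq:interferenceMGF}. The one point where you go beyond the paper is in tracking the global factor $e^{2\ii(\beta_{\tc{a}}+\beta_{\tc{b}})}$ by which $\widetilde{\mathcal{M}}$ (computed literally from Eq.~\eqref{eq:MGen} with $\alpha_j\to\pi/2-\alpha_j$) actually differs from $(\mathcal{M}_{G_\tc{f}^*})^*$ --- a phase that the paper's Eq.~\eqref{eq:Mtilde_iskinda_MGF} silently absorbs into the convention for $\ket{\Ostate_j}$ --- and you correctly observe that it is immaterial for the stated magnitude identities.
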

Notice that according to this result, the stronger the destructive interference is in $\mathcal M$, the stronger the corresponding constructive interference is in $\widetilde{\mathcal M}$. 

Now, we are ready to explore the conditions that enforce \mbox{$\cos \Delta\gamma = 0$}. 

\begin{prop} \label{prop:equalMMGF} Let $\mathcal{M}_{G_\tc{f}^*}$ be $\mathcal{M}$ after complex conjugating $G_\tc{f}$, as in Eq.~\eqref{eq:MGF_star}. Then,
\begin{align}
        |\mathcal M| = |\mathcal{M}_{G_\tc{f}^*}| &\iff |{\cal M}| = \sqrt{|{\cal M}^+|^2+|{\cal M}^-|^2}\label{eq:orthogonalNew}\\
        &\iff \cos\Delta\gamma=0.\nonumber
\end{align}
\begin{proof}
    The proof follows directly from comparing equations~\eqref{eq:complexsum} and~\eqref{eq:interferenceMGF}.
\end{proof}
\end{prop}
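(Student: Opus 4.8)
The plan is to prove all three statements equivalent by reducing each of them to the single scalar condition that the interference term $|\mathcal{M}^+||\mathcal{M}^-|\cos\Delta\gamma$ vanishes. First I would establish the second equivalence in Eq.~\eqref{eq:orthogonalNew}, namely $|\mathcal{M}| = \sqrt{|\mathcal{M}^+|^2+|\mathcal{M}^-|^2}\iff\cos\Delta\gamma=0$. Since both sides of that equality are manifestly non-negative, squaring is a reversible operation, so the equality is equivalent to $|\mathcal{M}|^2=|\mathcal{M}^+|^2+|\mathcal{M}^-|^2$. Substituting Eq.~\eqref{eq:complexsum} for $|\mathcal{M}|^2$ and cancelling the common $|\mathcal{M}^+|^2+|\mathcal{M}^-|^2$ leaves exactly $2|\mathcal{M}^+||\mathcal{M}^-|\cos\Delta\gamma=0$, which (for nonzero $\mathcal{M}^\pm$) holds iff $\cos\Delta\gamma=0$.

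Next I would establish the first equivalence, $|\mathcal{M}|=|\mathcal{M}_{G_\tc{f}^*}|\iff\cos\Delta\gamma=0$, by directly comparing Eqs.~\eqref{eq:complexsum} and~\eqref{eq:interferenceMGF} as the authors indicate. Both $|\mathcal{M}|$ and $|\mathcal{M}_{G_\tc{f}^*}|$ are non-negative, so their equality is equivalent to equality of their squares. Subtracting Eq.~\eqref{eq:interferenceMGF} from Eq.~\eqref{eq:complexsum}, the quadratic parts $|\mathcal{M}^+|^2+|\mathcal{M}^-|^2$ cancel while the two interference terms appear with opposite signs, yielding $|\mathcal{M}|^2-|\mathcal{M}_{G_\tc{f}^*}|^2=4|\mathcal{M}^+||\mathcal{M}^-|\cos\Delta\gamma$. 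Hence the squares coincide iff this quantity vanishes, i.e. iff $\cos\Delta\gamma=0$. Chaining the two equivalences through the shared condition $\cos\Delta\gamma=0$ then closes the loop and proves the proposition.

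The only subtlety---rather than a genuine obstacle---is the degenerate case in which one of $\mathcal{M}^+$, $\mathcal{M}^-$ vanishes, where the relative phase $\Delta\gamma$ is undefined. In that situation the interference term $2|\mathcal{M}^+||\mathcal{M}^-|\cos\Delta\gamma$ is identically zero, so both equalities $|\mathcal{M}|=|\mathcal{M}_{G_\tc{f}^*}|$ and $|\mathcal{M}|=\sqrt{|\mathcal{M}^+|^2+|\mathcal{M}^-|^2}$ hold automatically. I would simply note that the precise invariant condition is the vanishing of $|\mathcal{M}^+||\mathcal{M}^-|\cos\Delta\gamma$, which reduces to $\cos\Delta\gamma=0$ exactly on the generic set where both contributions are present. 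No heavier machinery is required: the entire argument is the elementary algebraic comparison of Eqs.~\eqref{eq:complexsum} and~\eqref{eq:interferenceMGF}.
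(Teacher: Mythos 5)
Your proposal is correct and follows essentially the same route as the paper, which simply compares Eqs.~\eqref{eq:complexsum} and~\eqref{eq:interferenceMGF}; your write-up just makes the elementary algebra explicit (squaring both manifestly non-negative sides and cancelling the common $|\mathcal M^+|^2+|\mathcal M^-|^2$ terms so that only the interference term $\pm 2|\mathcal M^+||\mathcal M^-|\cos\Delta\gamma$ survives). Your remark about the degenerate case where one of $\mathcal M^\pm$ vanishes and $\Delta\gamma$ is undefined is a worthwhile precision the paper glosses over, but it does not change the substance of the argument.
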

Therefore, $|\mathcal M| = |\mathcal{M}_{G_\tc{f}^*}|$ is a \textbf{sufficient condition} for both the communication ${\cal M}^{-}$ and field correlations ${\cal M}^{+}$ to positively contribute to negativity. Notice that, because of Eq.~\eqref{eq:Mtilde_iskinda_MGF}, the result of Proposition~\ref{prop:equalMMGF} also holds if we use $\widetilde{\mathcal M}$ instead of $\mathcal {M}_{G_\tc{f}^*}$. Therefore, if $|\mathcal{M}|$ does not change under transforming the initial states as in Eq.~\eqref{eq:transfrom_states}, then there cannot be destructive interference. Furthermore, whenever $|\mathcal M| = |\mathcal{M}_{G_\tc{f}^*}|$, interference is locked in the middle ground with $\cos\Delta\gamma=0$, and the contributions ${\cal N}^{-}$ and ${\cal N}^{+}$ to the entanglement between the detectors satisfy
\begin{align}
    \mathcal N \geq \sqrt{({\cal N}^+)^2+{(\cal N}^-)^2}\;\Rightarrow{\cal N} \ge {\cal N}^{-},\;
      {\cal N} \ge {\cal N}^{+},\label{conditions_negativity_beats_all}
\end{align}
as proven in Appendix~\ref{apx:inequality_negativity}.

Next, we will provide a framework to find symmetries of the harvesting setup that enforce the condition $|\mathcal M| = |\mathcal{M}_{G_\tc{f}^*}|$. This framework is based on performing coordinate transformations in the integral for $\mathcal M$. To keep this method as general as possible and simultaneously simplify the derivation, we use the order $\lambda^2$ contribution to $\hat\rho_\tc{ab}$, denoted by $\hat\rho^{(2)}_\tc{ab}$ (the expression is provided in reference \cite{statedependenceHector}). Working with the basis generated by the tensor product of the general states $\ket{\psi_j}$ and $\ket{\Ostate_j}$, as defined before in Eq.~\eqref{eq:states}, we can write
\begin{align}
    \mathcal M &= \mel{\Ostate_\tc{a}\chi_\tc{b}}{\hat\rho^{(2)}_\tc{ab}}{\psi_\tc{a}\psi_\tc{b}},\nonumber\\
    &= -\lambda^2 \int{\dd V \dd V' P_\tc{a}(\mf x)P_\tc{b}(\mf x')G_{\tc{f}}(\mf x, \mf x')}. 
\end{align}
where we have defined\footnote{Here we write $\tau_j(\mf{x})$ to represent the value of $\tau_j$ when the center of mass of detector $j$ is in the same leaf (of the Fermi-Walker foliation where the spacetime smearing factorizes as mentioned in footnote 1) as the spacetime point $\mathsf{x}$.}
\begin{equation}
     P_j(\mf x) =\Lambda_j(\mf x)\mel{\Ostate_j}{\hat\mu_j(\tau_j(\mf x))}{\psi_j}.\label{eq:defPs}
\end{equation}
Notice that the terms $P_j(\mf x)$ conveniently group the contributions to $\mathcal M$ due to each detector. The condition $|\mathcal M| = |\mathcal{M}_{G_\tc{f}^*}|$ thus becomes equivalent to
\begin{align}
    &\bigg|\int{\dd V \dd V' P_\tc{a}(\mf x)P_\tc{b}(\mf x')G_{\tc{f}}(\mf x, \mf x')}\bigg| \nonumber\\
    &= \bigg|\int{\dd V \dd V' P^*_\tc{a}(\mf x)P^*_\tc{b}(\mf x')G_{\tc{f}}(\mf x, \mf x')}\bigg|.\label{eq:halfway_condition}
\end{align}
Now, consider a map $\bm T$ taking a pair of points $\mf x, \mf x' \in \mathsf M$ to another pair as follows,
\begin{equation}
    \bm T (\mf x, \mf x') = (T_1(\mf x, \mf x'),T_2(\mf x, \mf x')). \label{eq:Tmap_definition}
\end{equation}

To see the effect of this map on pairs of coordinates, consider a pair of charts $(\varphi,\varphi')$ mapping pairs of spacetime points $(\mf x,\mf x')$ to the coordinates $(x^\mu,x'^\mu)\in\mathbb{R}^{n + 1}$. We can see that $\bm T$ induces a map between the coordinates of pairs of events  $(y^{\mu}, {y'}^{\mu}) \to (x^{\mu}, {x'}^{\mu})$ through
\begin{equation}
    (\varphi,\varphi')\circ \bm T^{-1}\circ(\varphi^{-1},\varphi'^{-1})\label{AdamTmap}, 
\end{equation} 
i.e., through the composition of $\bm T^{-1}$ with the chart pair $(\varphi,\varphi')$. Specifically, the coordinates $(y^{\mu}, {y'}^{\mu})$ of the pair of points $(\mf y, \mf y')$ are sent to the the coordinates $(x^{\mu}, {x'}^{\mu})$ of the pair of points $(T_1^{-1}(\mf y, \mf y'),T_2^{-1}(\mf y, \mf y'))$. Then, for integrals like the ones in Eq.~\eqref{eq:halfway_condition}, we can use the map~\eqref{AdamTmap} as a ``change of variables'', as follows\footnote{Notice that when we replace a function of points in spacetime $A(\mf x,\mf x')$ by $A(x^\mu,{x'}^\mu)$ we are abusing notation and what we actually mean is $A(\mf x,\mf x')=A( \varphi^{-1}(x^\mu),\varphi'^{-1}({x'}^\mu))\equiv A( x^\mu,{x'}^\mu)$.}
\begin{align}
    \mathcal A &= \int \dd V \dd V' A(\mf y, \mf y')\nonumber\\*
    &=\int \dd^{n+1}y^{\mu}\, \dd^{n+1} {y'}^{\mu} \sqrt{g(y^{\mu}) g'({y'}^{\mu})}A(y^{\mu},  {y'}^{\mu})\nonumber\\*
    &=\int \dd^{n+1}x^{\mu}\, \dd^{n+1}{x'}^{\mu} \sqrt{g\big( T_1(x^{\mu}, {x'}^{\mu})\big)\, g' \big(T_2(x^{\mu}, {x'}^{\mu})\big)} \nonumber\\*
    &\qquad \times |J(x^{\mu}, {x'}^{\mu})| A \big(\bm T (x^{\mu}, {x'}^{\mu})\big),\label{eq:complicated}
\end{align}
where $g$ and $g'$ are the determinant of the metric in the coordinates defined by $\varphi$ and $\varphi'$ respectively. Moreover,
\begin{equation}
    J(x^{\mu}, {x'}^{\mu}) = \det[\frac{\partial(y^{\mu}, {y'}^{\mu})}{\partial(x^{\nu}, {x'}^{\nu})}]\label{eq:jacobian_definition}
\end{equation}
is the determinant of the Jacobian matrix of the change of variables. For convenience, let us define 
\begin{equation}
    \mathfrak{j}(x^{\mu}, {x'}^{\mu})=\frac{\sqrt{g\big(T_1(x^{\mu}, {x'}^{\mu})\big)\, g' \big(T_2(x^{\mu}, {x'}^{\mu})\big)} }{\sqrt{g(x^{\mu}) g'({x'}^{\mu})}}|J(x^{\mu}, {x'}^{\mu})|,
\end{equation}
Notice that $\mathfrak{j}(x^{\mu}, {x'}^{\mu})$ is invariant under changes of coordinates. Therefore, we can write Eq.~\eqref{eq:complicated} in an explicitly coordinate independent way,
\begin{equation}
    \mathcal A = \int \dd V \dd V'\, \mathfrak{j}(\mf x, \mf x') A \big(\bm T (\mf x, \mf x')\big).
\end{equation}
Now, consider a different integral,
\begin{align}
    \mathcal B &= \int \dd V \dd V' B(\mf x, \mf x').
\end{align}
Then, observe that in order to impose the condition \mbox{$|\mathcal A| = |\mathcal B|$} it is sufficient to find a constant angle $\nu$ such that 
\begin{align}
    \mathfrak{j}(\mf x, \mf x') A \big(\bm T (\mf x, \mf x')\big)= e^{\ii\nu} B(\mf x, \mf x').
\end{align}
Applying this result to our condition in Eq.~\eqref{eq:halfway_condition}, we obtain the following theorem.

\begin{thm}\label{thm:changeOfVariables}
    Consider any transformation $\bm T=(T_1,T_2)$ that acts on two copies of the spacetime $\mathsf{M}$, as in equation~\eqref{eq:Tmap_definition}. Let $P_{\tc{a}}$ and $P_{\tc{b}}$ be given by equation~\eqref{eq:defPs}.
    
    Assume there is a constant angle $\nu$ such that
    \begin{align}
        &\frac{P_\tc{a}^*\big(T_1(\mf x, \mf x')\big)P_\tc{b}^*\big(T_2(\mf x, \mf x')\big)G_{\tc{f}}\big(\bm T (\mf x, \mf x')\big)\mathfrak{j}(\mf x, \mf x')}{P_\tc{a}(\mf x)P_\tc{b}(\mf x')G_{\tc{f}}(\mf x, \mf x')}= e^{\ii\nu}.
        \label{eq_theorem1}
    \end{align}
    Then, $|\mathcal{M}| = |\mathcal M_{G_\tc{f}^*}|$, and therefore by Proposition~\ref{prop:equalMMGF},
    \begin{equation}
        |{\cal M}| = \sqrt{|{\cal M}^+|^2+|{\cal M}^-|^2}.
    \end{equation}
    Which means that \mbox{$\cos \Delta \gamma =0$} in Eq.~\eqref{eq:complexsum} and that, in particular, the contributions of communication and pre-existing field correlations cannot interfere destructively.    
\end{thm}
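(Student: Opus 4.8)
The plan is to recognize that the entire theorem is an immediate application of the change-of-variables identity already assembled in Eqs.~\eqref{eq:complicated}--\eqref{eq_theorem1}, provided the correct integrands are identified. First I would recall the equivalence, established just above Eq.~\eqref{eq:halfway_condition}, between the target statement $|\mathcal M| = |\mathcal M_{G_\tc{f}^*}|$ and the equality of the two integrals in Eq.~\eqref{eq:halfway_condition}. Concretely, since $\lambda$ is real, complex-conjugating $\mathcal M_{G_\tc{f}^*}=-\lambda^2\int \dd V \dd V'\, P_\tc{a}(\mf x)P_\tc{b}(\mf x')G_\tc{f}^*(\mf x,\mf x')$ turns it into $-\lambda^2\int \dd V \dd V'\, P_\tc{a}^*(\mf x)P_\tc{b}^*(\mf x')G_\tc{f}(\mf x,\mf x')$, so that $|\mathcal M_{G_\tc{f}^*}|$ equals the modulus of the right-hand integral of Eq.~\eqref{eq:halfway_condition}, while $|\mathcal M|$ equals that of the left-hand side. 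The target is therefore exactly the statement $|\mathcal A|=|\mathcal B|$ in the general framework.

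Next I would identify the two integrands entering that framework: set $A(\mf x,\mf x')=P_\tc{a}^*(\mf x)P_\tc{b}^*(\mf x')G_\tc{f}(\mf x,\mf x')$ and $B(\mf x,\mf x')=P_\tc{a}(\mf x)P_\tc{b}(\mf x')G_\tc{f}(\mf x,\mf x')$, so that $\mathcal A$ and $\mathcal B$ are precisely the right- and left-hand integrals of Eq.~\eqref{eq:halfway_condition}. With this identification, the hypothesis Eq.~\eqref{eq_theorem1} is verbatim the sufficient condition $\mathfrak j(\mf x,\mf x')\,A\big(\bm T(\mf x,\mf x')\big)=e^{\ii\nu}B(\mf x,\mf x')$ derived immediately before the theorem. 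I would then run the change of variables Eq.~\eqref{eq:complicated} on $\mathcal A$, substitute the hypothesis into the transformed integrand, and pull the constant phase $e^{\ii\nu}$ outside the integral to obtain $\mathcal A=e^{\ii\nu}\mathcal B$, hence $|\mathcal A|=|\mathcal B|$, i.e.\ $|\mathcal M|=|\mathcal M_{G_\tc{f}^*}|$. Invoking Proposition~\ref{prop:equalMMGF} then yields $|\mathcal M|=\sqrt{|\mathcal M^+|^2+|\mathcal M^-|^2}$ and $\cos\Delta\gamma=0$, closing the argument.

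The step that needs the most care is not algebraic but the legitimacy of the change of variables itself: one must ensure that $\bm T$ is an (almost-everywhere) invertible map of the product manifold $\mathsf M\times\mathsf M$, so that the induced coordinate substitution Eq.~\eqref{AdamTmap} is bona fide and its Jacobian is correctly captured by $\mathfrak j(\mf x,\mf x')$, and that the integrals converge so that reordering and factoring $e^{\ii\nu}$ out is valid. The other subtlety worth flagging is that $\nu$ must be a genuine constant, independent of $(\mf x,\mf x')$; only then does it survive as an overall phase rather than being absorbed into the integrand, where a varying phase could spoil the modulus equality. Both requirements are supplied by the hypotheses as stated, so once they are noted the remainder is a direct substitution.
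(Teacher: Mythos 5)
Your proposal is correct and follows essentially the same route as the paper: the theorem is indeed obtained by identifying the integrands $A(\mf x,\mf x')=P_\tc{a}^*(\mf x)P_\tc{b}^*(\mf x')G_\tc{f}(\mf x,\mf x')$ and $B(\mf x,\mf x')=P_\tc{a}(\mf x)P_\tc{b}(\mf x')G_\tc{f}(\mf x,\mf x')$ in the condition of Eq.~\eqref{eq:halfway_condition}, applying the change-of-variables sufficiency criterion with the constant phase $e^{\ii\nu}$, and then invoking Proposition~\ref{prop:equalMMGF}. Your added remarks about the invertibility of $\bm T$ and the constancy of $\nu$ are sensible caveats that the paper leaves implicit, but they do not change the argument.
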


It turns out that we can simplify the condition expressed by Eq.~\eqref{eq_theorem1} if, for a given spacetime and field configuration, $\mathfrak{j}(\mf x, \mf x')=1$ and the map $\bm T$ preserves the Feynman propagator. This simplification is detailed in the following Corollary.

\begin{Corollary}\label{cor:symmetriesT}
    Assume that the map $\bm T =(T_1,T_2)$ satisfies 
    \begin{align}
        &g\big(T_1(x^{\mu}, {x'}^{\mu})\big)\, g' \big(T_2(x^{\mu}, {x'}^{\mu})\big) |J(x^{\mu}, {x'}^{\mu})|^2=g(x^{\mu}) g'({x'}^{\mu}), \nonumber\\
        &G_{\tc{f}}\big(\bm T (\mf x, \mf x')\big)=G_{\tc{f}}(\mf x, \mf x'),\label{eq:symmgGF}
    \end{align}
    Then, the condition
    \begin{align}
        P_\tc{a}^*\big(T_1(\mf x, \mf x')\big)P_\tc{b}^*\big(T_2(\mf x, \mf x')\big) = e^{\ii\nu}P_\tc{a}(\mf x)P_\tc{b}(\mf x'),\label{eq:conjSymmP}
    \end{align}
    with $\nu$ some constant, becomes sufficient to have
    \begin{equation}
        |{\cal M}| = \sqrt{|{\cal M}^+|^2+|{\cal M}^-|^2}.
        \label{eq:no_interference}
    \end{equation}
\end{Corollary}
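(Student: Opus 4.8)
The plan is to recognize this Corollary as a direct specialization of Theorem~\ref{thm:changeOfVariables}: the two hypotheses collected in Eq.~\eqref{eq:symmgGF} are precisely what is needed to collapse the general ratio appearing in the Theorem's premise Eq.~\eqref{eq_theorem1} down to the simpler ratio of the $P_j$ factors in Eq.~\eqref{eq:conjSymmP}. Once that reduction is established, the conclusion Eq.~\eqref{eq:no_interference} follows immediately by invoking the Theorem (and hence Proposition~\ref{prop:equalMMGF}), so no new machinery is required.

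First I would dispose of the geometric factor $\mathfrak{j}$. Taking the positive square root of the first line of Eq.~\eqref{eq:symmgGF} gives $\sqrt{g\big(T_1(x^\mu,{x'}^\mu)\big)\,g'\big(T_2(x^\mu,{x'}^\mu)\big)}\,|J(x^\mu,{x'}^\mu)| = \sqrt{g(x^\mu)\,g'({x'}^\mu)}$, which is exactly the numerator of $\mathfrak{j}$ set equal to its denominator. Hence $\mathfrak{j}(\mf x,\mf x')=1$ identically. The only care needed here is that $|J|\ge 0$, so the positive branch of the root is the correct one, and that although the metric/Jacobian condition is written in a particular chart pair, the resulting statement $\mathfrak{j}=1$ is coordinate-independent, as already noted in the construction of $\mathfrak{j}$.

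Next I would use the second line of Eq.~\eqref{eq:symmgGF}, namely $G_\tc{f}\big(\bm T(\mf x,\mf x')\big)=G_\tc{f}(\mf x,\mf x')$, to cancel the Feynman-propagator factor in the numerator of Eq.~\eqref{eq_theorem1} against the one in its denominator. Together with $\mathfrak{j}=1$, this leaves the left-hand side of Eq.~\eqref{eq_theorem1} equal to $P_\tc{a}^*\big(T_1(\mf x,\mf x')\big)P_\tc{b}^*\big(T_2(\mf x,\mf x')\big)\big/\big[P_\tc{a}(\mf x)P_\tc{b}(\mf x')\big]$. The Corollary's hypothesis Eq.~\eqref{eq:conjSymmP} asserts exactly that this quotient equals the constant $e^{\ii\nu}$, so the premise of Theorem~\ref{thm:changeOfVariables} holds. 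Applying the Theorem then yields $|\mathcal M|=|\mathcal M_{G_\tc{f}^*}|$ and, through Proposition~\ref{prop:equalMMGF}, the claimed identity Eq.~\eqref{eq:no_interference}.

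Since each step is a substitution into an already-proven statement, there is no genuine obstacle; the only thing to watch is the bookkeeping between the coordinate-based form of the metric/Jacobian condition and the manifestly covariant factor $\mathfrak{j}$, ensuring that the square-root reduction is carried out consistently in the same chart pair $(\varphi,\varphi')$ used to define $J$.
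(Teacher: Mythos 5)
Your proposal is correct and follows essentially the same route as the paper, which treats Corollary~\ref{cor:symmetriesT} as an immediate specialization of Theorem~\ref{thm:changeOfVariables}: the determinant/Jacobian condition gives $\mathfrak{j}(\mf x,\mf x')=1$, the propagator invariance cancels $G_{\tc{f}}$ in Eq.~\eqref{eq_theorem1}, and the remaining ratio of $P_j$ factors is exactly Eq.~\eqref{eq:conjSymmP}, so the Theorem (and hence Proposition~\ref{prop:equalMMGF}) delivers Eq.~\eqref{eq:no_interference}. Your care with the positive branch of the square root and with working in a fixed chart pair matches the paper's bookkeeping, so nothing is missing.
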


In order to apply this result to physical situations, it will be helpful to further simplify this corollary when certain symmetries are present in the detector dynamics. Several of these possible symmetries are explored in Appendix~\ref{apx:ExaSym}. From there, we derive that in many cases, which happen to frequently appear in the literature, communication and field correlations must ``collaborate'' to generate entanglement between detectors. Moreover, this collaboration happens in the very specific way described by Eq.~\eqref{eq:no_interference}. Concretely, in Appendix~\ref{apx:ExaSym} we provide conditions on the particle detectors, field, and spacetime of entanglement harvesting setups so that Eq.~\eqref{eq:no_interference} holds. It is important to have in mind that these conditions are only {\it sufficient} to force $\cos \Delta\gamma=0$ and, in turn, to rule out destructive interference. Nonetheless, as we shall see in the next section, it is easy to find setups even with fully constructive interference ($\Delta \gamma=0$) or fully destructive interference ($\Delta \gamma=\pi$) just by violating these conditions. Therefore, even if these conditions are not necessary, they seem to be present in all cases that we know of where Eq.~\eqref{eq:no_interference} holds. For conciseness, the following proposition presents a simplified version of these conditions, which will guide the exploration of harvesting setups in the next section.

\begin{tcolorbox}[colframe=blue!20,
colback=blue!5]
\begin{prop}\label{prop:switchings}
Consider entanglement harvesting setups where
\begin{enumerate}
    \item Spacetime is flat.
    \item The field is initially prepared in the Minkowski vacuum, and the detectors start both in their ground state or both in their excited state. 
    \item The detectors are inertial and comoving.
    \item Spacetime smearings are separable in the detectors' comoving frame, i.e., \mbox{$\Lambda_j(t,\bm x)= \chi_j(t)F_j(\bm x)$}.
    \item The proper energy gaps are equal, $\Omega_\tc{a}= \Omega_\tc{b}$.
    \item In the comoving frame, for some time $t_{\tc{r}}$, it is satisfied that either
    \begin{align}
        &\chi_j(t+t_{\tc{r}})=\chi_j(t_{\tc{r}}-t),\ j=A,B,\label{eq:reverse}\\
        &\text{or }\;\chi_\tc{a}(t+t_{\tc{r}})=\chi_\tc{b}(t-t_{\tc{r}}).\label{eq:swapandreverse}
    \end{align}
\end{enumerate}
Then, the following relationship holds,
\begin{equation}
     |{\cal M}| = \sqrt{|{\cal M}^+|^2+|{\cal M}^-|^2}.
     \label{eq:golden_condition}
\end{equation}
\end{prop}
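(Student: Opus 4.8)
The plan is to verify the hypotheses of Corollary~\ref{cor:symmetriesT} by exhibiting, in each of the two cases of condition~6, an explicit map $\bm T=(T_1,T_2)$ on $\mathsf M\times\mathsf M$. Before choosing $\bm T$, I would evaluate $P_j$ from Eq.~\eqref{eq:defPs}. Since the detectors are inertial and comoving (assumption~3) and the smearings are separable in their common rest frame (assumption~4), the proper time is the comoving coordinate time, $\tau_j(\mf x)=t$, so $P_j(\mf x)=\chi_j(t)F_j(\bm x)\,\mel{\Ostate_j}{\hat\mu_j(t)}{\psi_j}$. With both detectors in the ground state, or both excited (assumption~2), the matrix element of Eq.~\eqref{eq:defMonopole} collapses to a single exponential, giving $P_j(\mf x)=\chi_j(t)F_j(\bm x)\,e^{\ii s\Omega t}$ with $s=+1$ (ground) or $s=-1$ (excited), a common gap $\Omega=\Omega_\tc a=\Omega_\tc b$ (assumption~5), and real $\chi_j,F_j$ (since $\Lambda_j=\Lambda_j^*$). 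The conjugate merely flips the sign of the phase, $P_j^*(\mf x)=\chi_j(t)F_j(\bm x)e^{-\ii s\Omega t}$, which is the feature that lets the phase of the $P$-product close up to a constant.

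Next I would dispatch the two ``universal'' requirements of Corollary~\ref{cor:symmetriesT}, namely $\mathfrak j=1$ and the invariance of $G_\tc{f}$, which hold for both cases and rely only on assumptions~1 and~2. In flat spacetime the metric determinant is constant, so any map built from isometries (time reversals and the exchange of the two events) has $|J|=1$ and hence $\mathfrak j=1$. For $G_\tc{f}$, the Minkowski vacuum two-point function $W(\mf x,\mf x')$ depends only on $(\Delta t-\ii\epsilon)^2-|\Delta\bm x|^2$, so it is Poincar\'e invariant and, crucially, obeys $W(\mathcal R\mf x,\mathcal R\mf x')=W(\mf x',\mf x)$, where $\mathcal R$ reverses time about $t_\tc r$ in both arguments. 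Since the time ordering in Eq.~\eqref{eq:defGF} is reversed as well, $\Theta(t-t')\leftrightarrow\Theta(t'-t)$, the two terms of $G_\tc{f}$ are merely swapped and $G_\tc{f}$ is left invariant; the same remains true after additionally exchanging the two events, because $G_\tc{f}(\mf x,\mf x')=G_\tc{f}(\mf x',\mf x)$.

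With these in hand, the two cases reduce to checking the $P$-condition~\eqref{eq:conjSymmP}. For the reverse case~\eqref{eq:reverse} I would take $\bm T$ to reverse time about $t_\tc r$ in each slot separately, $T_1:(t,\bm x)\mapsto(2t_\tc r-t,\bm x)$ and $T_2:(t',\bm x')\mapsto(2t_\tc r-t',\bm x')$. Using the symmetry $\chi_j(2t_\tc r-t)=\chi_j(t)$ that follows from Eq.~\eqref{eq:reverse} together with $\Omega_\tc a=\Omega_\tc b$, one finds $P_j^*\big(2t_\tc r-t,\bm x\big)=e^{-2\ii s\Omega t_\tc r}P_j(\mf x)$, so Eq.~\eqref{eq:conjSymmP} holds with the constant $\nu=-4s\Omega t_\tc r$. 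For the swap-and-reverse case~\eqref{eq:swapandreverse} I would instead take $\bm T$ to exchange the two events while reversing time about $t_\tc r$, i.e.\ $T_1$ produces the time-reversed partner of $\mf x'$ and $T_2$ the time-reversed partner of $\mf x$; the relation~\eqref{eq:swapandreverse} then exchanges $\chi_\tc a$ and $\chi_\tc b$ under the reversal $t\mapsto 2t_\tc r-t$, so that the map sends $P_\tc a\leftrightarrow P_\tc b$ and again leaves only a constant phase. In both cases Corollary~\ref{cor:symmetriesT} yields $|\mathcal M|=|\mathcal M_{G_\tc{f}^*}|$, hence Eq.~\eqref{eq:golden_condition}.

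I expect the main obstacle to be the bookkeeping in the swap-and-reverse case: one must check that the spatial smearings and the field two-point function are compatible with the event exchange, so that the real factors $F_\tc a,F_\tc b$ recombine into $P_\tc a(\mf x)P_\tc b(\mf x')$ rather than a mismatched product, and that the reversed time ordering conspires with the exchange to preserve $G_\tc{f}$. The phase tracking is the other delicate point: it is precisely the equality of the gaps $\Omega_\tc a=\Omega_\tc b$ and the ``both ground / both excited'' choice that make the residual $\nu$ a constant rather than a function of $(t,t')$; relaxing either would spoil Eq.~\eqref{eq:conjSymmP} and reopen the door to destructive interference.
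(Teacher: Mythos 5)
Your overall route---exhibiting explicit maps $\bm T$ that feed Corollary~\ref{cor:symmetriesT}, and then invoking Proposition~\ref{prop:equalMMGF}---is exactly the paper's strategy (the paper packages it through the Symmetries of Appendix~\ref{apx:ExaSym} and the Lemmas of Appendix~\ref{apx:simplify}). Your reduction $P_j(\mf x)=\chi_j(t)F_j(\bm x)e^{\ii s\Omega t}$ with $s=\pm1$ for ground/excited initial states, your verification that the Minkowski-vacuum Feynman propagator depends only on $|\Delta t|$ and $|\Delta\bm x|$ (hence is invariant under slot-wise time reversal, with or without event exchange, and contributes $\mathfrak j=1$), and your handling of case~\eqref{eq:reverse} with the slot-wise reversal and constant phase $\nu=-4s\Omega t_\tc{r}$ are all correct; they reproduce the content of Lemmas~\ref{lemma:MinkowskiGF} and~\ref{lemma:sym1Cond34} (via Symmetry~\ref{sym:tR} and Lemma~\ref{lemma:LambdaandMu}).

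The gap is in the swap-and-reverse case~\eqref{eq:swapandreverse}. You choose $\bm T$ to exchange the two events \emph{in full}, $T_1(\mf x,\mf x')=(2t_\tc{r}-t',\bm x')$ and $T_2(\mf x,\mf x')=(2t_\tc{r}-t,\bm x)$, which is the paper's Symmetry~\ref{sym:ReversalPtSwap}. With that choice, using $\chi_\tc{a}(2t_\tc{r}-u)=\chi_\tc{b}(u)$, the left-hand side of~\eqref{eq:conjSymmP} evaluates to
\begin{equation*}
    P_\tc{a}^*\big(T_1(\mf x, \mf x')\big)\,P_\tc{b}^*\big(T_2(\mf x, \mf x')\big)
    = e^{-4\ii s \Omega t_\tc{r}}\,e^{\ii s\Omega(t+t')}\,\chi_\tc{a}(t)\,\chi_\tc{b}(t')\,F_\tc{a}(\bm x')\,F_\tc{b}(\bm x),
\end{equation*}
i.e.\ the spatial smearings come out as the mismatched product $F_\tc{a}(\bm x')F_\tc{b}(\bm x)$ rather than $F_\tc{a}(\bm x)F_\tc{b}(\bm x')$. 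You flagged precisely this as ``the main obstacle'' to be checked, but with your map it is not bookkeeping---the check genuinely fails unless $F_\tc{a}=F_\tc{b}$ (up to a constant), an assumption that Proposition~\ref{prop:switchings} deliberately does not make. This is exactly why the paper's Lemma~\ref{lemma:sym2Cond34}, which implements your map, carries the extra hypothesis $F_\tc{a}(\bm x)=F_\tc{b}(\bm x)$, and why the paper declares that lemma redundant and instead proves this case with Lemma~\ref{lemma:sym4_Simple}, based on Symmetry~\ref{sym:propertimeReversalSwap}: swap and reverse \emph{only the time coordinates}, leaving each spatial argument in its own slot, $\bm T(t,\bm x,t',\bm x')=(2t_\tc{r}-t',\bm x,\,2t_\tc{r}-t,\bm x')$. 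Since the smearing factorizes (your condition 4), this map yields $e^{-4\ii s\Omega t_\tc{r}}\,P_\tc{a}(\mf x)P_\tc{b}(\mf x')$ on the nose, still has $|J|=1$, and still preserves $G_\tc{f}$ because $|\Delta t|$ and $|\Delta\bm x|$ are unchanged. With that single replacement of the map, your argument closes; as written, it establishes the~\eqref{eq:swapandreverse} case only for detectors with identical spatial smearings.
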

\vspace{0.1mm}
\end{tcolorbox}

The reader familiar with entanglement harvesting will likely recognize these conditions as being satisfied for many cases analyzed in previous literature. Therefore, very often we see this kind of ``collaboration'' between the communication-assisted term and the genuine harvesting term when it comes to detectors getting entangled while they are in causal contact.

For the harvesting setups specified by this proposition, fulfilling either Eq.~\eqref{eq:reverse} or Eq.~\eqref{eq:swapandreverse} makes the setup symmetric under a certain map $\bm T$ in the way described by Corollary \ref{cor:symmetriesT}. When Eq.~\eqref{eq:reverse} holds, the setup becomes symmetric under the following time reversal around $t_{\tc{r}}$,
\begin{equation}
    \bm T(t_{\tc{r}}+t,\bm x, t_{\tc{r}}+t',\bm x')=(t_{\tc{r}}-t,\bm x, t_{\tc{r}}-t',\bm x'),
\end{equation}
where the coordinates are chosen to be the standard Minkowski coordinates. Similarly, when Eq.~\eqref{eq:swapandreverse} holds, the setup becomes symmetric under a different time reversal around $t_{\tc{r}}$ which also includes a time coordinate swap,
\begin{equation}
    \bm T(t_{\tc{r}}+t,\bm x, t_{\tc{r}}+t',\bm x')=(t_{\tc{r}}-t',\bm x, t_{\tc{r}}-t,\bm x').
\end{equation}
The reasons for why the whole setup is symmetric are that, first, Minkowski spacetime and the corresponding vacuum are symmetric under time reversal at all times, allowing $t_{\tc{r}}$ to have an arbitrary value. Second, the conditions over the initial states of the detectors, their trajectories, gaps, and switching functions are sufficient to ensure that the field-detector interactions are time reversal symmetric around some $t_{\tc{r}}$. The details of how Proposition~\ref{prop:switchings} guarantees Corollary \ref{cor:symmetriesT} are provided in Appendices \ref{apx:ExaSym}, \ref{apx:symcoordinates} and \ref{apx:simplify}. Appendix \ref{apx:SymmFermiNormal} shows three more general versions of Proposition~\ref{prop:switchings} for arbitrary harvesting setups that follow the covariant prescription of~\cite{TalesBrunoEdu2020}. 

In summary, the symmetries in Proposition~\ref{prop:switchings} forbid destructive interference. However, it is not difficult to think of physical scenarios where these symmetries are violated. We will illustrate this in the next section, where we will show scenarios with both destructive interference and stronger constructive interference than those described by Proposition~\ref{prop:switchings}.

\section{Entanglement through communication and harvesting can interfere destructively}\label{sec:results}

In this section, we will present concrete examples both for detectors in Minkowski spacetime and in cosmological scenarios where the assumptions of Proposition \ref{prop:switchings} (and their generalizations to cosmological spacetimes) are not satisfied. Concretely, we will analyze scenarios where both the communication-assisted and genuine harvesting contributions to negativity will be greater than the actual negativity acquired by the detectors, showing a destructive interference between the communication contribution to entanglement and the entanglement acquired through harvesting. Moreover, most cases previously studied in the literature satisfy the conditions of Proposition~\ref{prop:switchings} and therefore the two contributions to entanglement while in causal contact were constructively interfering in a particular way, satisfying
\begin{equation}
    |{\cal M}| = \sqrt{|{\cal M}^+|^2+|{\cal M}^-|^2}.
    \label{eq:middleG}
\end{equation}
However, we will see that it is possible to find scenarios where $|{\cal M}|$ is actually larger than~\eqref{eq:middleG} (strong constructive interference) and cases where it is less than~\eqref{eq:middleG} (destructive interference). This showcases that most of the setups previously studied in the literature are in a very particular middle ground between constructive and destructive interference of communication and harvesting.

\subsection{Minkowski spacetime}
\label{sssec:Minkowski_results}
First, let us consider two detectors in 3 + 1 Minkowski spacetime. We adopt coordinates $ x^\mu = (t, \bm x)$ where the metric reads
\begin{equation}
    \dd s^2 = - \dd t^ 2 + \dd \bm{x}^2.
\end{equation}
As usual for two inertial comoving detectors, we will assume that the localization of both detectors A and B is given by spacetime smearing functions which can be split into a switching and spatial smearing in the coordinates $(t, \bm x)$, i.e, \mbox{$\Lambda_{j}(\mf x) = \chi_{j}(t) F_{j}(\bm x)$}. More concretely, we assume that the detectors' centers of mass follow inertial trajectories described by $\bm z_\tc{a}(t) = \bm x_{\tc{a}}$ and $\bm z_\tc{b}(t) = \bm x_{\tc{b}}$, where $\bm x_{\tc{a}}$ and $\bm x_{\tc{b}}$ are constant 3-vectors. We will assume that both detectors start in their respective ground states and that the state of the field is initially prepared in the Minkowski vacuum, the expressions for ${\cal L}_{ij}$ and ${\cal M}$ (see Eqs.~\eqref{eq:LijGen} and~\eqref{eq:MGen}) can be written in the following form (with $\Omega_{\tc{a}} = \Omega_{\tc{b}} = \Omega$):
\begin{align}
& {\cal L}_{ij} = \frac{\lambda^{2}}{2(2\pi)^3}\int{\dd t \dd t'e^{-\ii \Omega(t - t') }\chi_{i}(t) \chi_{j}(t') K_{ij}(t, t')}, \nonumber \\
&{\cal M}  = -\frac{\lambda^2}{2 (2\pi)^3}  \int_{-\infty}^{\infty}{\dd t\int_{-\infty}^{t} \dd t' \  {Q}_{\tc{ab}}(t, t')}.
\label{L_ij_and_M_simplified_Minkowski}
\end{align}
Where we have defined the auxiliary functions
\begin{equation}
        {Q}_{\tc{ab}}(t, t')\! =\! e^{\ii \Omega(t + t')}K_{\tc{ab}}(t, t')  \left(\chi_{\tc{a}}(t) \chi_{\tc{b}}(t')\! +\! \chi_{\tc{a}}(t') \chi_{\tc{b}}(t) \right),
\end{equation}
\begin{equation}
    K_{ij}(t, t')  = 2(2\pi)^3 \int{\dd ^3 \bm x \dd ^3 \bm x' F_{i}(\bm x) F_{j}(\bm x') W(t,\bm x ; t', \bm x')}.
    \label{eq:Kijpre}
\end{equation}
To follow a common choice in the literature, for the spatial smearing function, we take a Gaussian profile, namely
\begin{equation}
    F_{j}(\bm x) = \frac{1}{(\sqrt{\pi} \sigma)^3}e^{-\frac{1}{2}\left(\frac{|\bm x - \bm x_{j}|}{\sigma}\right)^2}.
    \label{shape_gaussian}
\end{equation}
For this particular choice of smearing~\eqref{eq:Kijpre} simplifies to
\begin{equation}
     K_{ij}(t, t') = \int{\frac{\dd \bm k^{3}}{|\bm k|}e^{\ii \bm k \cdot (\bm x_{i} - \bm x_{j})} e^{-|\bm k|^2\sigma^2} e^{-\ii |\bm k|\Delta t }}. 
     \label{eq:Kij}
\end{equation}

Notice that in the equation for ${\cal M}$, the Heaviside theta function that appears in the definition of the Feynman propagator, Eq.~\eqref{eq:defGF}, is implemented as a nested integral in the time variable $t$. To evaluate Eq.~\eqref{eq:Kij}, we will consider separately the cases $i = j$ (same detectors) and $ i \neq j$ (different detectors). First, for $i \neq j$, we have $K_{\tc{ab}}(t, t') = K_{\tc{ba}}(t, t')$, which evaluates to
\begin{equation}
    K_{\tc{ab}}(t, t') = \frac{2 \pi \ii}{d}\left[{\cal I}(\Delta t + d) - {\cal I}(\Delta t - d) \right].
    \label{eq:K_ab_Minkowski}
\end{equation}
Here, $\Delta t = t - t'$, $d = |\bm x_{\tc{a}} - \bm x_{\tc{b}}|$, and
\begin{equation}
    {\cal I}(z) = \frac{\sqrt{\pi}}{2\sigma}e^{-\frac{z^2}{4\sigma^2}} - \frac{\ii}{\sigma} {\cal D}\left(\frac{z}{2 \sigma}\right),
\end{equation}
where ${\cal D}(x) = \frac{\sqrt{\pi}}{2} e^{-x^2}\erfi{x}$ is the Dawson function\footnote{The Dawson function is defined as \begin{equation}
{\cal D}(x)\coloneqq e^{-x^2} \int^x_0 \dd y\, e^{-y^2}.
\end{equation}
}. For $i = j$, we have
\begin{equation}
    K_{jj}(t, t') = \frac{2\pi}{\sigma^2} \left[1 - \frac{\Delta t}{\sqrt{2}\sigma}\left(\ii \sqrt{\frac{\pi}{2}}e^{-\frac{\Delta t^2}{4 \sigma^2}}  + \sqrt{2}{\cal D}\left(\frac{\Delta t}{2 \sigma} \right)\right)\right].
    \label{eq:Kjj_Minkowski}
\end{equation}
Finally, the term ${\cal M}^{-}$ associated with signalling can be written as
\begin{equation}
       {\cal M}^{-}  = -\frac{\lambda^2}{2 (2\pi)^3}  \int_{-\infty}^{\infty}{\dd t\int_{-\infty}^{t} \dd t' \  {Q}^{-}_{\tc{ab}}(t, t')}.
       \label{M_minus_Minkowski}
\end{equation}
Here, the only difference between ${Q}^{-}_{\tc{ab}}(t, t')$ and ${Q}_{\tc{ab}}(t, t')$ is the replacement $K_{\tc{ab}}(t, t') \to K_{\tc{ab}}^{-}(t, t')$, where
\begin{equation}
    K_{\tc{ab}}^{-}(t, t') = \frac{\pi^{3/2} \ii}{\sigma d}\left[e^{-\frac{(\Delta t + d)^2}{4 \sigma^2}} - e^{-\frac{(\Delta t - d)^2}{4 \sigma^2}} \right].
\end{equation}
To proceed with the numerical evaluation of all the relevant terms, we need to define an expression for the switching function of the detectors. Notice that, in the current setup, conditions 1 through 5 of Proposition~\ref{prop:switchings} are satisfied. Therefore, to create a numerical scenario where the conditions of the proposition are not satisfied we need to pick a switching function that violates condition 6.  For this, we can employ switchings that contradict both equation~\eqref{eq:reverse} and equation~\eqref{eq:swapandreverse}. One way to achieve this is by making the switching functions asymmetric. In particular, we are going to choose skew-normal distributions~\cite{SkewSymmetric}, namely
\begin{equation}
    S(t; \alpha) = N(\alpha) e^{-\frac{t^2}{T^2}} \left(1 + \erf\left(\frac{\alpha t}{T} \right) \right).
    \label{eq:skew_normal}
\end{equation}
Here, $N(\alpha)$ stands for a dimensionless normalization factor that ensures that the maximum value of $S(t; \alpha)$ is $1$. Denoting by $t_{\max}(\alpha)$ the point such that \mbox{$S(t_{\max}(\alpha); \alpha) = 1$}, we can define the switchings for \mbox{$j=\text{A}, \text{B}$} as follows:
\begin{equation}
    \chi_{j}(t; \alpha) = S(t - t_{\max}(\alpha) - t_{j}).
    \label{eq:non_symmetric_switching}
\end{equation}
In this way, we ensure that the maximum of the switching happens at $t = t_{j}$ (see figure \ref{fig:non_symmetric_switching}).

\begin{figure}[h!]
    \centering
    \includegraphics[width=8cm]{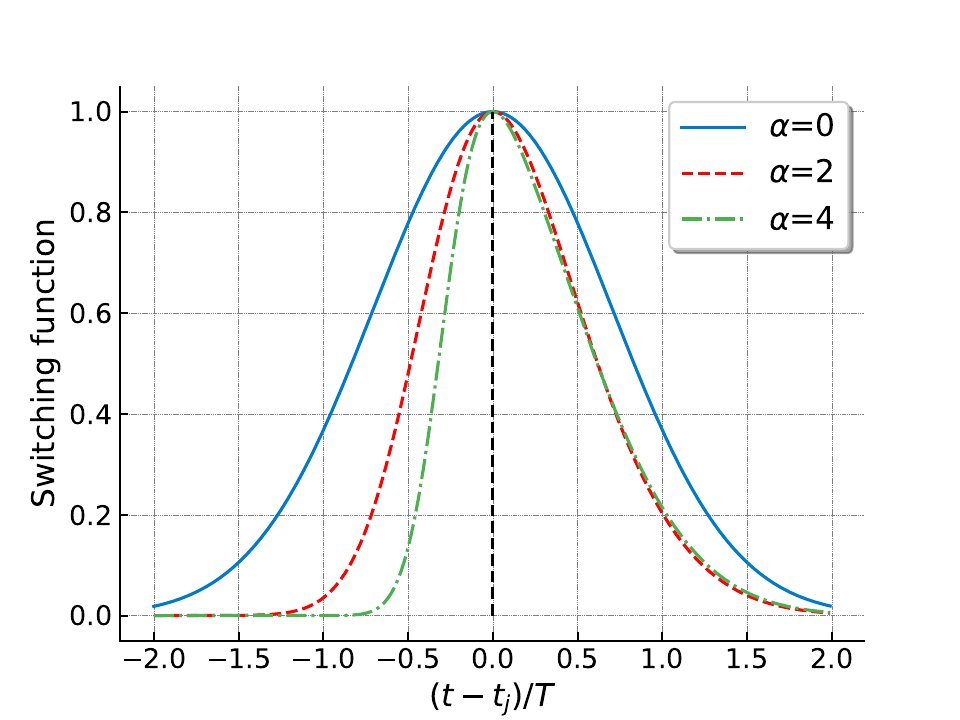}
    \caption{Non-symmetric switching function for different values of the asymmetry parameter $\alpha$.}
    \label{fig:non_symmetric_switching}
\end{figure}

Using the switching of Eq.~\eqref{eq:non_symmetric_switching}, we numerically evaluated the expressions for ${\cal M}$, ${\cal M}^{\pm}$ and ${\cal L}_{jj}$, thus obtaining results for the negativity ${\cal N}$ and the two contributions ${\cal N}^{\pm}$. 

We display in Figure \ref{Non_symmetric_alpha=2.35} different regimes where the conditions in Eqs.~\eqref{eq:reverse} and~\eqref{eq:swapandreverse} are not fulfilled and the interference between ${\cal M}^{+}$ and ${\cal M}^{-}$  affects the detectors' ability to become entangled in different ways. 
 In Fig.~\ref{Non_symmetric_alpha=2.35} we present side-to-side results for the case of asymmetric switching functions ($\alpha = 2.35$, left column) and symmetric Gaussian switching ($\alpha = 0$, right column). Figures~\ref{Non_symmetric_alpha=2.35}a and \ref{Non_symmetric_alpha=2.35}b show $|{\cal M}|$ and $|{\cal M}^{\pm}|$, Figures~\ref{Non_symmetric_alpha=2.35}c and \ref{Non_symmetric_alpha=2.35}d show the relative phases between ${\cal M}^{+}$ and ${\cal M}^{-}$, and Figures~\ref{Non_symmetric_alpha=2.35}e and \ref{Non_symmetric_alpha=2.35}f show ${\cal N}$ and ${\cal N}^{\pm}$. 

Specifically, in the case $\alpha = 2.35$ (see Figures \ref{Non_symmetric_alpha=2.35}a, \ref{Non_symmetric_alpha=2.35}c, \ref{Non_symmetric_alpha=2.35}e), there exists a range of values of $\Delta t = t_{\tc{b}} - t_{\tc{a}}$ that causes the communication-assisted negativity ${\cal N^-}$ to exceed the negativity itself. Moreover, we have identified values of $\Delta t$ where ${\cal N}^{-}$ is non-zero even though ${\cal N} = 0$.

For $\alpha = 0$, condition 6 of Proposition \ref{prop:switchings} is met, so Eq.~\eqref{eq:golden_condition} holds. This means the relative phase between ${\cal M}^{+}$ and ${\cal M}^{-}$ is $\pm \pi/2$, leading to constructive interference. In contrast, for $\alpha=2.35$, there are regions with destructive and constructive interference. Moreover, notice that the constructive interference gets stronger than in the $\pm \pi/2$ relative phase scenario.

Another way of violating condition 6 of Proposition~\eqref{prop:switchings} is to use different shapes for the switching functions of each detector. Indeed, by choosing Gaussian switchings of different widths we can violate the condition in Eq.~\eqref{eq:golden_condition}. Namely, we can choose
\begin{equation}
    \chi_{j}(t) = e^{-\left(\frac{t - t_{j}}{T_{j}} \right)^2},
    \label{gaussian_switching}
\end{equation}
with $T_\tc{a} \neq T_\tc{b}$.

We illustrate the case of switching with two different widths in Figures \ref{plots_grid} and \ref{different_detectors_1D}. We see that there is an asymmetry in how the detectors are entangled depending on which detector switches on first, with an advantage when $\Delta t > 0$ (the detector with shorter switching time (A) couples to the field before the detector with a longer switching time (B)). However, we note that the contributions to negativity, $\mathcal N^+$, and $\cal N^-$, do not depend on the order of coupling. This means that the asymmetry of the negativity with respect to $\Delta t$ is due to interference alone because of how the relative phases change depending on which detector couples first.

Concretely, when detector B couples first  ($\Delta t < 0$) the interference is destructive, reducing the extracted entanglement. Conversely, when A couples first ($\Delta t > 0$) we have constructive interference between the two contributions, with an effect even stronger than that in Eq.~\eqref{eq:middleG}. Hence, in a configuration with a different switching duration in each detector, maximizing entanglement requires carefully timing the detectors' interactions to avoid the adverse effects of destructive interference.
 \begin{figure*}[h]
 \centering
  \includegraphics[width=18cm]{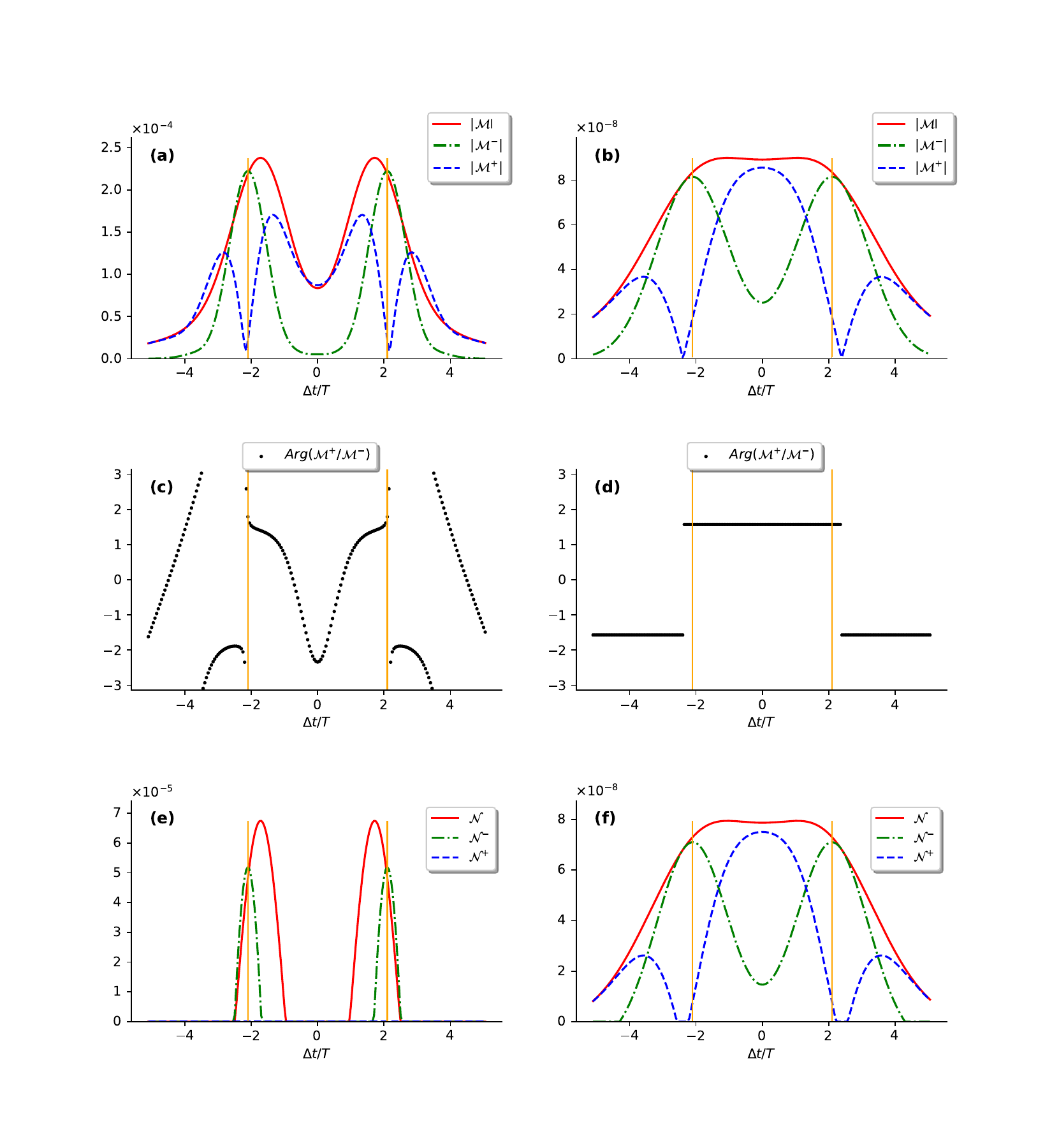}
\caption{Results for an entanglement harvesting setup in Minkowski spacetime with non-symmetric switchings. The physical parameters used are: $\Omega T =5 $, $d/T = 2.1$, and $\sigma/T = 0.3$. In the horizontal axis, we have $\Delta t = t_{\tc{b}} - t_{\tc{a}}$. The solid vertical lines (yellow) represent the lightcones. Figures (a), (c), and (e) correspond to the case $\alpha = 2.35$, whereas Figures (b), (d), and (f) display the case $\alpha = 0$, where we have the standard Gaussian switching. The appearance of an even symmetry with respect to $\Delta t$ in the plots is explained by Appendix \ref{apx:plotSymmetries}.}
\label{Non_symmetric_alpha=2.35}
\end{figure*}

\newpage

\subsection{Example in a cosmological spacetime}\label{sec:deSitter}

By working in a spacetime that has no time-reversal symmetry, we can also construct an entanglement harvesting setup where the interference between the contributions from field correlations and communication violate  condition~\eqref{eq:golden_condition}. As a particular example, we shall consider a spatially flat FRW spacetime, a setting where entanglement harvesting has been studied in the past (see, e.g.,~\cite{collectCalling}). In comoving coordinates $(t, \bm x) = (t, x, y, z)$ the line element reads
\begin{equation}
    \dd s^2 = -\dd t^2 + a(t)^2 (\dd x^2 + \dd y^2 + \dd z^2).
    \label{FRW_comoving_coordinates}
\end{equation}
Or, introducing a new time coordinate $\eta$ 
\begin{equation}
     \eta(t) = \int_{0}^{t}{\dd t' \frac{1}{a(t')}},\label{eq:defConformalTime}
\end{equation}
we explicitly reveal the conformally flat nature of the metric:
\begin{equation}
    \dd s^2 = -a(\eta)^2(\dd t^2 + \dd x^2 + \dd y^2 + \dd z^2).
\end{equation}
We now consider a scalar quantum field $\hat{\phi}$ conformally coupled to curvature so that the field satisfies Eq.~\eqref{KG_eq} with $\xi = \frac{1}{6}$. In this case, it can be shown (see, e.g., \cite{Cosmo}) that the field $\hat{\psi}(\mf x) = a(\eta) \hat{\phi}(\mf x)$ satisfies the Minkowski Klein-Gordon equation. Therefore, the field modes of $\hat{\phi}(\mf x)$ and the Wightman function $W(\mf x, \mf x')$ for this case can be obtained from the ones from the Minkowski case by simply introducing additional multiplicative scale factors $a(\eta)$. 

Now, in order to concretely find setups where the interference between ${\cal M}^{+}$ and ${\cal M}^{-}$ leads to a novel behavior of the negativity, the guiding principle will be the violation of a more general version of Proposition~\ref{prop:switchings} that applies to FRW spacetimes. As demonstrated in Appendix~\ref{apx:proveFRWprop}, we can derive the following proposition:
\begin{tcolorbox}[colframe=blue!20,
colback=blue!5]
\begin{prop}\label{prop:switchingsFRW}
Consider entanglement harvesting setups where
\begin{enumerate}
    \item Spacetime is FRW as in Eq.~\eqref{FRW_comoving_coordinates}.
    \item The scale factor satisfies\\ $a(t+t_{\tc{r}}) = a(t_{\tc{r}}-t)$.
    \item Both detectors are comoving with the Hubble flow and are initially both in the ground state or both in the excited state.
    \item The field is massless, conformally coupled to curvature, and initially in the conformal vacuum.
    \item Spacetime smearings factorize in the comoving frame as the product of a function of only time and a function of only space, i.e., \mbox{$\Lambda_j(t,\bm x)= \mathfrak{T}_j(t)\mathfrak{X}_j(\bm x)$}.
    \item The proper energy gaps are equal, $\Omega_\tc{a}= \Omega_\tc{b}$.
    \item In the comoving frame, for some time $t_{\tc{r}}$, it is satisfied that either
    \begin{align}
        &\mathfrak{T}_j(t+t_{\tc{r}})=\mathfrak{T}_j(t_{\tc{r}}-t),\ j=A,B,\label{eq:reverseFRW}\\
        &\text{or }\; \mathfrak{T}_\tc{a}(t+t_{\tc{r}})=\mathfrak{T}_\tc{b}(t-t_{\tc{r}}).\label{eq:swapandreverseFRW}
    \end{align}
\end{enumerate}
Then, the following relationship holds,
\begin{equation}
     |{\cal M}| = \sqrt{|{\cal M}^+|^2+|{\cal M}^-|^2}.
     \label{eq:golden_conditionFRW}
\end{equation}
\end{prop}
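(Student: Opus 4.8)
The plan is to prove Proposition~\ref{prop:switchingsFRW} exactly as its flat-spacetime analogue Proposition~\ref{prop:switchings}: by exhibiting an explicit symmetry of the correlation integral $\mathcal M$ and feeding it into the change-of-variables machinery of Theorem~\ref{thm:changeOfVariables}, in the simplified form of Corollary~\ref{cor:symmetriesT}. Concretely, I would show that under conditions 1--7 the setup is invariant under a time reversal about $t_\tc{r}$, supplemented by a swap of the two time coordinates in the case of Eq.~\eqref{eq:swapandreverseFRW}. The only genuinely new ingredient relative to the Minkowski case is the scale factor $a$, which I would handle by passing to conformal time $\eta$, where the conformally coupled massless field reduces to a Minkowski field and its Feynman propagator becomes the Minkowski one divided by $a(\eta)a(\eta')$.

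First I would fix the candidate map $\bm T$ of Eq.~\eqref{eq:Tmap_definition}: when Eq.~\eqref{eq:reverseFRW} holds take $\bm T(t_\tc{r}+t,\bm x,t_\tc{r}+t',\bm x')=(t_\tc{r}-t,\bm x,t_\tc{r}-t',\bm x')$ in comoving coordinates, and when Eq.~\eqref{eq:swapandreverseFRW} holds take instead $\bm T(t_\tc{r}+t,\bm x,t_\tc{r}+t',\bm x')=(t_\tc{r}-t',\bm x,t_\tc{r}-t,\bm x')$. The key preliminary observation is that condition 2, $a(t_\tc{r}+t)=a(t_\tc{r}-t)$, makes $a$ symmetric about $t_\tc{r}$; using the definition~\eqref{eq:defConformalTime} of $\eta$ one checks that this reflection in $t$ descends to the reflection $\eta\to 2\eta_\tc{r}-\eta$ in conformal time about $\eta_\tc{r}=\eta(t_\tc{r})$, and that $a$ evaluated at any $\bm T$-image equals $a$ at the original event.

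Next I would verify the two hypotheses~\eqref{eq:symmgGF} of Corollary~\ref{cor:symmetriesT}. For the Jacobian condition, since $\sqrt{-g}=a^3$ is preserved by $\bm T$ while the time reversal (and swap) has unit Jacobian determinant in absolute value, one obtains $\mathfrak j(\mf x,\mf x')=1$. For the invariance of $G_\tc{f}$, I would use that the Minkowski Feynman propagator in conformal coordinates depends only on the invariant interval $(\eta-\eta')^2-|\bm x-\bm x'|^2$, which is manifestly unchanged under the reversal and under the accompanying time-coordinate swap, while the prefactor $1/\big(a(\eta)a(\eta')\big)$ is preserved because $a$ is; hence $G_\tc{f}(\bm T(\mf x,\mf x'))=G_\tc{f}(\mf x,\mf x')$. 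It then remains to check condition~\eqref{eq:conjSymmP} on the detector factors $P_j$ of Eq.~\eqref{eq:defPs}: with both detectors comoving their proper time is the comoving time, $\tau_j(\mf x)=t$, and with both starting in the ground (or excited) state and equal gaps $\Omega_\tc{a}=\Omega_\tc{b}=\Omega$ (conditions 3 and 6) the monopole matrix element reduces to a pure phase, so $P_j(\mf x)=\mathfrak T_j(t)\mathfrak X_j(\bm x)e^{\pm\ii\Omega t}$ by the factorization in condition 5. Substituting $\bm T$, the spatial factors $\mathfrak X_j$ match trivially because the spatial coordinates are untouched, the phases collect into a global constant $e^{\ii\nu}$ with $\nu=\mp 4\Omega t_\tc{r}$, and condition 7 is precisely what is needed for the remaining product $\mathfrak T_\tc{a}\mathfrak T_\tc{b}$ to be mapped into itself. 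This establishes Eq.~\eqref{eq:conjSymmP}, so Corollary~\ref{cor:symmetriesT} yields $|\mathcal M|=\sqrt{|\mathcal M^+|^2+|\mathcal M^-|^2}$, which is Eq.~\eqref{eq:golden_conditionFRW}.

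I expect the main obstacle to be the conformal-time bookkeeping of the third step: correctly tracking how a reflection that is defined in comoving time $t$ (where the detector phases and switchings live) acts in conformal time $\eta$ (where the propagator is simplest), and confirming that the scale-factor prefactor and the time-ordering structure of $G_\tc{f}$ are genuinely preserved rather than conjugated under both the pure reversal and the reversal-plus-swap. Once the dictionary between the $t$- and $\eta$-reflections about the common center $t_\tc{r}\leftrightarrow\eta_\tc{r}$ is pinned down, everything else reduces to the same verification as in the flat-spacetime argument carried out in Appendices~\ref{apx:ExaSym}--\ref{apx:simplify}.
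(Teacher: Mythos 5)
Your proposal is correct and is essentially the paper's own proof (Appendix~\ref{apx:proveFRWprop}): both arguments pass to conformal time, use that the conformally coupled propagator is the Minkowski one divided by $a(t)a(t')$, derive the reflection identity $\eta(2t_{\tc{r}}-t)=2\eta(t_{\tc{r}})-\eta(t)$ from condition 2, and feed the resulting time-reversal (or reversal-plus-swap) symmetry into Corollary~\ref{cor:symmetriesT}. The only difference is bookkeeping: you verify the hypotheses of Corollary~\ref{cor:symmetriesT} (unit $\mathfrak{j}$, propagator invariance, and the phase condition on the $P_j$) directly for the explicit map $\bm T$, whereas the paper recasts $\mathcal M$ into the Minkowski form of Eq.~\eqref{eq:Mforprop2} with $\widetilde{G}_{\tc{f}}$ and then reuses Symmetries~\ref{sym:tR} and~\ref{sym:propertimeReversalSwap}.
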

\vspace{0.1mm}
\end{tcolorbox}

Assuming that the initial state of the field is the conformal vacuum and that the detectors both start in their respective ground states, the terms
 ${\cal L}_{ij}$, ${\cal M}$ and ${\cal M}^{-}$ will be given by the same equations~\eqref{L_ij_and_M_simplified_Minkowski} and~\eqref{M_minus_Minkowski} under the replacement
\begin{equation}
    \dd t \dd t' \to \frac{\dd t \dd t'}{a(t) a(t')} \label{eq:replacement}.\\
\end{equation}
In the auxiliary functions $K_{\textsc{ab}}(t, t')$ and $K_{jj}(t, t')$ of equations \ref{eq:K_ab_Minkowski} and~\eqref{eq:Kjj_Minkowski}, we also need to perform the replacement $\Delta t = t - t' \to \Delta \eta = \eta(t) - \eta(t')$. Therefore, the numerical methods applied to compute ${\cal N}$ in the subsection \ref{sssec:Minkowski_results} and all associated terms transition smoothly to this case. All that remains is to specify a particular example of FRW spacetime and choose the shape and switching functions to model our detectors. As a concrete example let us choose the scale factor to be that of De Sitter spacetime:
\begin{equation}
    a(t) = e^{H t}.
    \label{eq:expansion_deSitter}
\end{equation}
\begin{figure}[!b]
\begin{tabular}{c}
  \includegraphics[width=65mm]{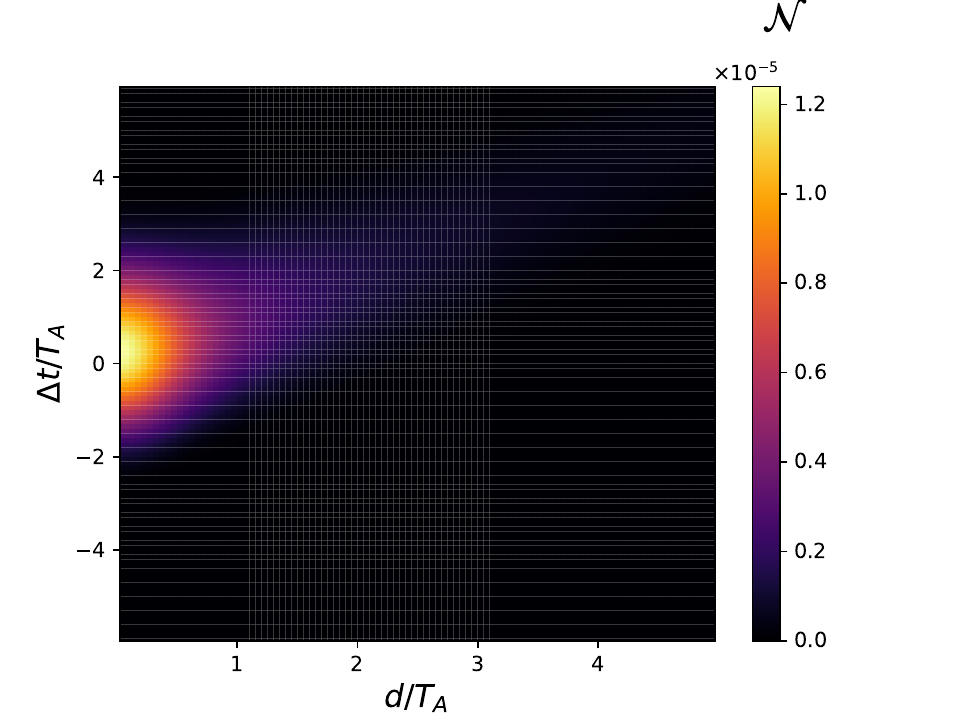} \\   \includegraphics[width=65mm]{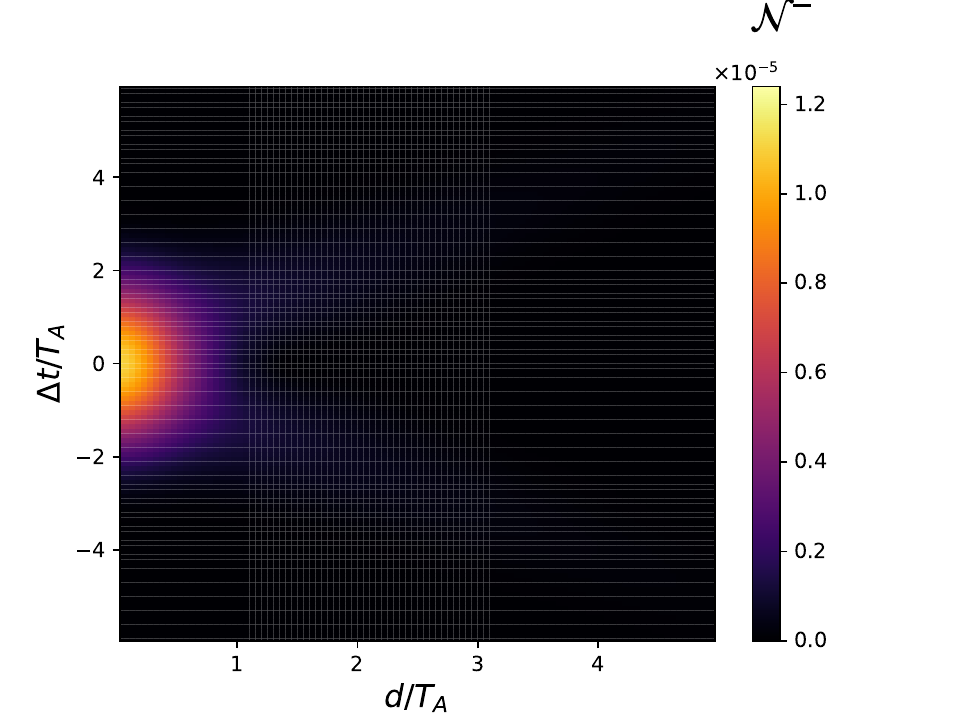}
\end{tabular}
\caption{Numerical results for the negativity (top) and the communication-assisted negativity (bottom) as a function of the proper time $d$ between the centers of the detectors and the time delay $\Delta t$ between the centers of the switchings. The entanglement harvesting setup is in Minkowski spacetime with Gaussian switchings of different length scales, and the physical parameters used are: $\Omega T_{\tc{A}} = 4$, $\sigma/T_{\tc{A}} = 0.2$, and $T_{\tc{B}}/T_{\tc{A}} = 1.3$.
}
  \label{plots_grid}
\end{figure}

\begin{figure}[!h]
  \includegraphics[width=7cm]{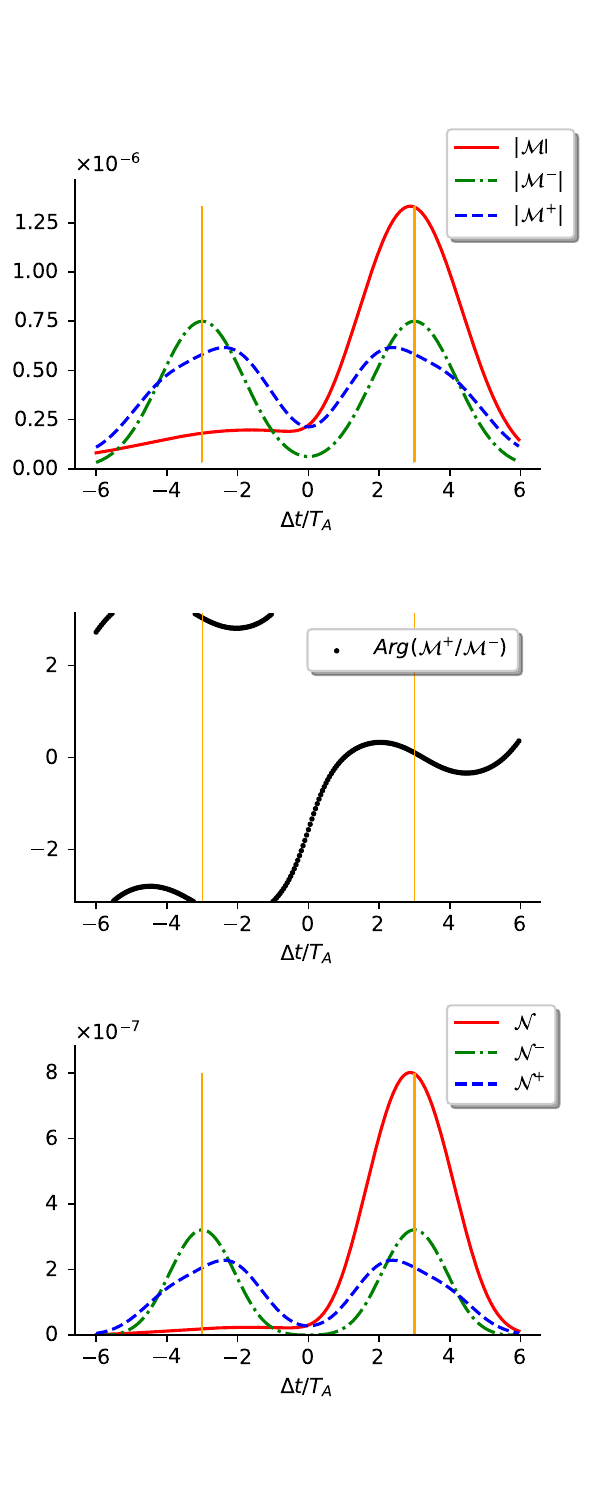}
\caption{Numerical results for negativity, harvested negativity, and communication-assisted negativity as a function of the time delay $\Delta t$ between the center of the switchings. The entanglement harvesting setup is in Minkowski spacetime with Gaussian switchings with different length scales, and the physical parameters used are: $\Omega T_\tc{a} = 4$, $d/T=3$, $\sigma/T_\tc{a} = 0.2$, and $T_\tc{b}/T_\tc{a}=1.3$. The appearance of symmetries with respect to the sign of $\Delta t$ in the plots is explained in Appendix \ref{apx:plotSymmetries}}
\label{different_detectors_1D}
\end{figure}
As for the detectors, let us choose both the switchings and the spatial smearing functions to be Gaussians. That is, the functions $\chi_{j}(t)$ will be given by  
\begin{equation}
    \chi_{j}(t) = e^{-\left(\frac{t - t_{j}}{T} \right)^2}.
    \label{gaussian_switching_FRW}
\end{equation}
The smearing functions $F_{j}(t, \bm x)$ are given by
\begin{equation}
    F_{j}(t, \bm x) = \frac{1}{(\sqrt{2\pi} \sigma a(t))^3}e^{-\frac{1}{2}\left(\frac{|\bm x - \bm x_{j}|}{\sigma}\right)^2}.
\end{equation}
where the detector expands with the expansion of the universe and the scale factor in the denominator comes from the $L^1$ normalization of the smearing function:
\begin{equation}
    \int{\dd \bm x^3 a(t)^3 F_{j}(\bm x, t_{0})} = 1.
\end{equation}
Notice that even if we choose 
\begin{align}
   & \mathfrak{T}_j(t) = a(t)^{-3} \chi_{j}(t), \\
   & \mathfrak{X}_j(\bm x) = a(t)^3 F_{j}(t, \bm x) .
\end{align}
to ensure consistency with assumption $5$ of Proposition~\ref{prop:switchingsFRW}, it is clear that the form of the expansion factor, Eq.~\eqref{eq:expansion_deSitter} automatically leads to the violation of the conditions expressed by Eqs.~\eqref{eq:reverseFRW} and~\eqref{eq:swapandreverseFRW}. Therefore, in the present setup, we have no guarantee that the relation~\eqref{eq:golden_conditionFRW} holds.

A particular case of this kind of setup is displayed in Figure~\ref{deSitter}. Analogously to the instances examined in the Minkowski case, it is observed that the components ${\cal M}^{\pm}$ exhibit interference, leading to scenarios where the communication-mediated negativity surpasses the total negativity. Additionally, around the region $\Delta t/T = 0$, we observe that the harvested negativity ($\mathcal N^+$) can also exceed the value of the negativity ($\mathcal N$). This means that despite the detectors' capability to extract pre-existing correlations from the field, the entanglement acquired through communication destructively interferes with the harvested entanglement, resulting in the overall entanglement being less than each component individually. 


\begin{figure}[b]
  \includegraphics[width=7cm]{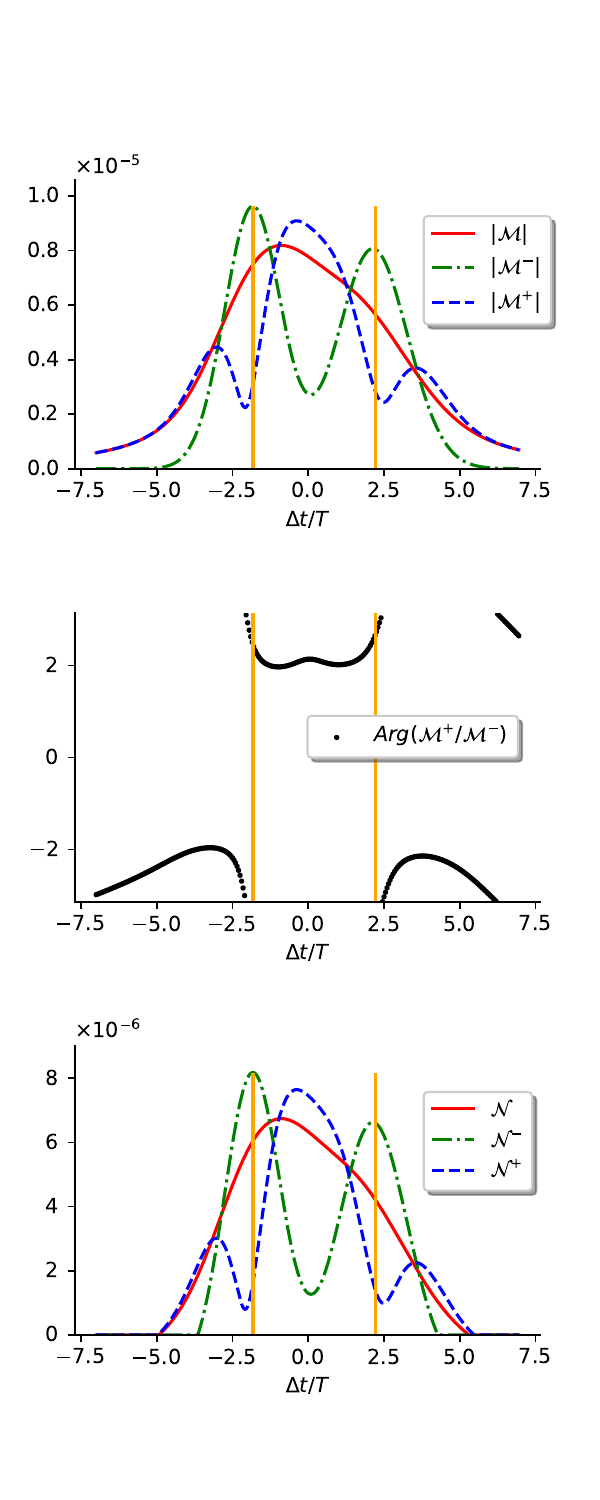}
\caption{Numerical results for De Sitter spacetime. The parameters used are: $\Omega T = 4$, $d/T =2$, $\sigma/T = 0.1$, and $H = 0.1$. Here, $\Delta t = t_{\tc{b}} - t_{\tc{a}}$}
\label{deSitter}
\end{figure}

\section{Conclusions}\label{sec:Conclusions}

We explored the interplay between the two different ways in which two particle detectors can get entangled through their interaction with a quantum field. Namely, 1) harvesting pre-existing entanglement from the field and 2) acquiring entanglement by exchanging information through the field when the detectors are in causal contact. The analysis demonstrated that, in general entanglement harvesting setups, these contributions are not always additive, but instead they can interfere both destructively and constructively. 

We provided sufficient conditions for an entanglement harvesting setup to not display destructive interference between these two contributions. We discuss that the reason this phenomenon was not observed in the past is because these conditions are met in the scenarios commonly explored in the literature, which typically have a lot of symmetry.

We then showed simple examples in Minkowski and cosmological spacetimes using entanglement harvesting setups that violate these conditions and for which the two contributions to harvesting a) combine constructively, amplifying the total amount of entanglement beyond previously observed scenarios and b) interfere destructively, preventing the detectors from getting entangled. For example, whenever the detectors are in causal contact and the switching functions are not time-symmetric, we can see regimes where due to destructive interference the detectors cannot get entangled. This is so even though in these regimes harvesting alone or communication alone would have gotten the detectors entangled.

This study suggests that if one wants to make two detectors entangled through short interactions with a quantum field, one can enhance the amount of entanglement acquired by the detectors by choosing the setup so that the two contributions interfere constructively.

\acknowledgements

The authors would like to thank Tales Rick Perche for helpful discussions. A. T.-B. received the support of a fellowship from ``la Caixa” Foundation (ID 100010434, with fellowship code ). MHZ also thanks Prof. Achim Kempf for funding through his NSERC Discovery grant. Research at Perimeter Institute is supported in part by the Government of Canada through the Department of Innovation, Science and Industry Canada and by the Province of Ontario through the Ministry of Colleges and Universities. EMM acknowledges support through the Discovery Grant Program of the Natural Sciences and Engineering Research Council of Canada (NSERC). EMM  thanks the support from his Ontario Early Researcher award.

\appendix
\section{Inequality between negativity and its communication and harvesting contributions
}
\label{apx:inequality_negativity}

Here, we would like to show that the assumption
\begin{equation}
    |{\cal M}| = \sqrt{|{\cal M}^+|^2+|{\cal M}^-|^2}
    \label{eq:golden_relation_M_appendix}
\end{equation}
implies the following relation between the negativity and its contributions:
\begin{equation}
    {\cal N} \geq \sqrt{({\cal N}^+)^2+({\cal N}^-)^2}.\label{eq:ineqN}
\end{equation}
Let us prove this statement by considering different cases.
\begin{enumerate}[wide, labelwidth=!, labelindent=0pt]
\item {\bf Case 1:} ${\cal N^{+}} = 0$. 

The relation~\eqref{eq:ineqN} becomes equivalent to ${\cal N} \geq {\cal N}^-$, which is directly verified using $|{\cal M}|\geq |{\cal M}^-|$ in the definition of negativity and its associated components, namely Eqs.~\eqref{V_negativity_general} and~\eqref{eq:V_pm}. 

\item {\bf Case 2:} ${\cal N^{-}} = 0$. 

The argument is analogous to case 1.

\item {\bf Case 3}: ${\cal N^{+}} \neq 0$ and ${\cal N^{-}} \neq 0$.

Here, we have ${\cal N}^\pm = {\cal V}^\pm>0$. Let us define
    \begin{equation}
    {\cal L}_\pm = \frac{|{\cal L}_\tc{aa}\pm {\cal L}_\tc{bb}|}{2}.
\end{equation}
In this way, we can write
\begin{align}
     {\cal V}& =\sqrt{|\mathcal{M}|^2 + {\cal L}_-^2}-{\cal L}_+^2, \nonumber \\
    {\cal V}^\pm & =\sqrt{|\mathcal{M}^\pm|^2 + {\cal L}_-^2}-{\cal L}_+^2.
\end{align}
Then, using Eq.~\eqref{eq:golden_relation_M_appendix}, we can cast the statement of Eq.~\eqref{eq:ineqN} into
\begin{align}
    0&\leq {\cal V}^2 - ({\cal V}^+)^2 - ({\cal V}^-)^2
    \\&
    = -{\cal L}_-^2+3{\cal L}_+^2 + 2{\cal L}_+\bigg(\mathcal{V}^+ + \mathcal{V}^- - \sqrt{|\mathcal{M}|^2 + {\cal L}_-^2} \bigg),  \nonumber
\end{align}
Next, we get rid of the square root by moving it to the other side of the inequality and then taking the square on both sides. Further manipulation allows us to rewrite the inequality as
\begin{equation}
    8 {\cal L}_{+}^2 \mathcal{V}^+\mathcal{V}^- + 4 {\cal L}_{+}({\cal L}_{+}^{2}-{\cal L}_{-}^{2})(\mathcal{V}^+ + \mathcal{V}^-) + ({\cal L}_{+}^{2}-{\cal L}_{-}^{2})^2 \geq 0.
\end{equation}
Finally, this statement is true because because of ${\cal L}_+ \geq 0$, ${\cal L}_{+}^{2} \geq {\cal L}_{-}^{2}$ and the assumption $\mathcal{V}^\pm > 0$. Therefore, we conclude that given the relation~\eqref{eq:golden_relation_M_appendix}, the statement~\eqref{eq:ineqN} is true.

\end{enumerate}

\section{Examples of symmetries that satisfy Corollary \ref{cor:symmetriesT}}
\label{apx:ExaSym}

Here, we focus on examples of maps $\bm{T}$ constructed to satisfy the hypotheses of Corollary \ref{cor:symmetriesT}. In particular, we provide sufficient conditions for the harvesting setups so that all the Corollary hypotheses are fulfilled. Moreover, these examples are later used to prove Propositions~\ref{prop:switchings} and \ref{prop:switchingsFRW} of the main text.

\begin{symm} \label{sym:tR} Time reversal around the value  $t_{\tc{r}}$ of a time coordinate in an arbitrary coordinate system $(t,\bm x)$:
    \begin{align}
        &\bm T(\mf x,\mf x')=(T_{\tc{r}}(\mf x),T_{\tc{r}}(\mf x')) ,\nonumber\\
        &T_{\tc{r}}(t_{\tc{r}}+t,\bm x)=(t_{\tc{r}}-t,\bm x).\label{eq:TReverse}
    \end{align}
    The hypotheses of Corollary \ref{cor:symmetriesT} are satisfied if the following assumptions hold:
    \begin{enumerate}
        \item $g(t_{\tc{r}}+t,\bm x)=g(t_{\tc{r}}-t,\bm x)$,
        \item $G_\tc{f}(t_{\tc{r}}+t,\bm x, t_{\tc{r}}+t',\bm x')=G_\tc{f}(t_{\tc{r}}-t,\bm x, t_{\tc{r}}-t',\bm x')$,
        \item $\Lambda_j(t_{\tc{r}}+t,\bm x)=\Lambda_j(t_{\tc{r}}-t,\bm x)$,
        \item For some constant $\nu_j$, 
        $$\mel{\Ostate_j}{\hat\mu_j\big(\tau_j(t_{\tc{r}}+t,\bm x)\big)}{\psi_j}=e^{\ii \nu_j}\mel{\psi_j}{\hat\mu_j\big(\tau_j(t_{\tc{r}}-t,\bm x)\big)}{\Ostate_j}.$$
    \end{enumerate}
    Here, we recall that $g$ is the determinant of the metric, $G_{\tc{f}}$ is the Feynman propagator, $\{\Lambda_{j}\}$ are the spacetime smearing functions, and $\ket{\Ostate_j}$ are the states orthogonal to the initial states $\ket{\psi_j}$ as defined in Eq.~\eqref{eq:states}.
\end{symm}

\begin{symm} \label{sym:ReversalPtSwap} Time reversal around $t_{\tc{r}}$ in the arbitrary coordinates $(t,\bm x)$ \& swap of spacetime points:
    \begin{align}
        &\bm T(\mf x,\mf x')=(T_{\tc{r}}(\mf x'),T_{\tc{r}}(\mf x)),\nonumber\\
        &T_{\tc{r}}(t_{\tc{r}}+t,\bm x)=(t_{\tc{r}}-t,\bm x).\label{eq:TReverseSwap}
    \end{align}
    The hypotheses of Corollary \ref{cor:symmetriesT} are satisfied if the following assumptions hold:
    \begin{enumerate}
        \item $g(t_{\tc{r}}+t,\bm x)=g(t_{\tc{r}}-t,\bm x)$,
        \item $G_\tc{f}(t_{\tc{r}}+t',\bm x', t_{\tc{r}}+t,\bm x)=G_\tc{f}(t_{\tc{r}}-t,\bm x, t_{\tc{r}}-t',\bm x')$,
        \item $\Lambda_\tc{a}(t_{\tc{r}}+t,\bm x)=\Lambda_\tc{b}(t_{\tc{r}}-t,\bm x)$,
        \item For some constant $\nu$, 
        $$\mel{\Ostate_\tc{a}}{\hat\mu_\tc{a}\big(\tau_\tc{a}(t_{\tc{r}}+t,\bm x)\big)}{\psi_\tc{a}}=e^{\ii \nu}\mel{\psi_\tc{b}}{\hat\mu_\tc{b}\big(\tau_\tc{b}(t_{\tc{r}}-t,\bm x)\big)}{\Ostate_\tc{b}}.$$
    \end{enumerate}
\end{symm}

\begin{symm} \label{sym:ReversalPtSwapReflect} Time reversal around $t_{\tc{r}}$ in the arbitrary coordinates $(t,\bm x)$, reflection about $\bm x_0$ in $(t,\bm x)$ \& swap of spacetime points:
    \begin{align}
        &\bm T(\mf x,\mf x')=(T_\tc{rr}(\mf x'),T_\tc{rr}(\mf x)),\nonumber\\
        &T_\tc{rr}(t_{\tc{r}}+t,\bm x_0+\bm x)=(t_{\tc{r}}-t,\bm x_0-\bm x).\label{eq:TReverseSwapReflect}
    \end{align}
    The hypotheses of Corollary \ref{cor:symmetriesT} are satisfied if the following assumptions hold:
    \begin{enumerate}
        \item $g(t_{\tc{r}}+t,\bm x_0+\bm x)=g(t_{\tc{r}}-t,\bm x_0-\bm x)$,
        \item $G_\tc{f}(t_{\tc{r}}+t',\bm x_0+\bm x', t_{\tc{r}}+t,\bm x_0+\bm x)\\=G_\tc{f}(t_{\tc{r}}-t,\bm x_0-\bm x, t_{\tc{r}}-t',\bm x_0-\bm x')$,
        \item $\Lambda_\tc{a}(t_{\tc{r}}+t,\bm x_0+\bm x)=\Lambda_\tc{b}(t_{\tc{r}}-t,\bm x_0-\bm x)$,
        \item For some constant $\nu$,
        \begin{align*}
            &\mel{\Ostate_\tc{a}}{\hat\mu_\tc{a}\big(\tau_\tc{a}(t_{\tc{r}}+t,\bm x_0+\bm x)\big)}{\psi_\tc{a}}\\
            &=e^{\ii \nu}\mel{\psi_\tc{b}}{\hat\mu_\tc{b}\big(\tau_\tc{b}(t_{\tc{r}}-t,\bm x_0-\bm x)\big)}{\Ostate_\tc{b}}.
        \end{align*}
    \end{enumerate}
\end{symm} 
Notice that in this symmetry the distance between the detectors at time $t_{\tc{r}}$ can be adjusted by changing $\bm x_0$, in contrast to Symmetry~\ref{sym:ReversalPtSwap}, where the detectors must be at the same place at time $t_{\tc{r}}$.


\begin{symm} \label{sym:propertimeReversalSwap}    
    Proper time swap and reversal around $\upsilon_\tc{r}/2$.
To better describe this symmetry, we use the coordinates 
$(\tau_j,\widetilde{\bm x}_j)$ where \mbox{$\Lambda_j(\tau_j,\widetilde{\bm x}_j)=\chi_j(\tau_j)F_j(\widetilde{\bm x}_j)$} and where $\tau_j$ are the proper times of the detectors.\footnote{In the usual prescription for detectors, $(\tau_j,\widetilde{\bm x}_j)$ will be chosen to be the Fermi normal coordinates associated to the detectors' trajectories. However, the result given here holds regardless of whether $(\tau_j,\widetilde{\bm x}_j)$ are chosen to be the Fermi normal coordinates.}

Now we can express this symmetry as
    \begin{equation}
        \bm T(\tau_\tc{a},\widetilde{\bm x}_\tc{a}, \tau_\tc{b},\widetilde{\bm x}_\tc{b})=(\upsilon_\tc{r}-\tau_\tc{b},\widetilde{\bm x}_\tc{a},\upsilon_\tc{r}-\tau_\tc{a} ,\widetilde{\bm x}_\tc{b}),
    \end{equation}
    where pairs of spacetime points are expressed in the joint coordinates $(\tau_\tc{a},\widetilde{\bm x}_\tc{a},\tau_\tc{b},\widetilde{\bm x}_\tc{b})$. The hypotheses of Corollary \ref{cor:symmetriesT} are satisfied if the following assumptions hold:
    \begin{enumerate}
        \item $g_\tc{a}(\tau_\tc{a},\widetilde{\bm x}_\tc{a})g_\tc{b}(\tau_\tc{b},\widetilde{\bm x}_\tc{b})
        =g_\tc{a}(\upsilon_\tc{r}-\tau_\tc{b},\widetilde{\bm x}_\tc{a})g_\tc{b}(\upsilon_\tc{r}-\tau_\tc{a},\widetilde{\bm x}_\tc{b})$.
        \item $G_\tc{f}(\tau_\tc{a},\widetilde{\bm x}_\tc{a}, \tau_\tc{b},\widetilde{\bm x}_\tc{b})
        =G_\tc{f}(\upsilon_\tc{r}-\tau_\tc{b},\widetilde{\bm x}_\tc{a}, \upsilon_\tc{r}-\tau_\tc{a},\widetilde{\bm x}_\tc{b})$,
        \item $\chi_\tc{a}(\tau)=\chi_\tc{b}(\upsilon_\tc{r}-\tau)$,
        \item $\Omega_\tc{a} = \Omega_\tc{b}=\Omega$,
        \item $\beta_\tc{a}+\beta_\tc{b}=\Omega\upsilon_\tc{r}$, 
        \item $|\cos \alpha_\tc{a}|=|\cos \alpha_\tc{b}|$, $|\sin \alpha_\tc{a}|=|\sin \alpha_\tc{b}|$.
    \end{enumerate}
    Here, the $g_j$ are the determinant of the metric in $(\tau_j,\widetilde{\bm x}_j)$ coordinates, and the detectors' initial states are
    \begin{equation}
        \ket{\psi_j}=\cos\alpha_j\ket{g_j}+\sin\alpha_j e^{\ii \beta_j}\ket{e_j}.
    \end{equation}
\end{symm} 
Notice that assumptions 4, 5, and 6 appear because they are sufficient to show that
    \begin{align}
        \mel{\Ostate_\tc{a}}{\hat\mu_\tc{a}(\tau)}{\psi_\tc{a}} = e^{\ii \Omega\upsilon_\tc{r}}\mel{\Ostate_\tc{b}}{\hat\mu_\tc{b}(\upsilon_\tc{r}-\tau)}{\psi_\tc{b}},
    \end{align}
    with $\ket{\Ostate_j}$ the states orthogonal to $\ket{\psi_j}$ as defined in Eq.~\eqref{eq:states}.
 
    In order for this $\bm T$ to satisfy Corollary~\ref{cor:symmetriesT}, first notice that $|J(\tau_\tc{a},\widetilde{\bm x}_\tc{a},\tau_\tc{b},\widetilde{\bm x}_\tc{b})|=1$. Then, the symmetry assumptions 1 and 2 directly translate to the conditions required by Eq.~\eqref{eq:symmgGF}. It only remains to show that Eq.~\eqref{eq:conjSymmP} is fulfilled. This condition asks that for some constant angle $\nu$,
    \begin{equation}
        P_\tc{a}^*\big(T_1(\mf x, \mf x')\big)P_\tc{b}^*\big(T_2(\mf x, \mf x')\big) = e^{\ii\nu}P_\tc{a}(\mf x)P_\tc{b}(\mf x').\label{eq:conjSymmP_apx}
    \end{equation}
    Substituting the definition of $P_j$,
    \begin{equation}
         P_j(\mf x) =\Lambda_j(\mf x)\mel{\Ostate_j}{\hat\mu_j(\tau_j(\mf x))}{\psi_j},
    \end{equation}
   and using \mbox{$\Lambda_j(\tau_j,\widetilde{\bm x}_j)=\chi_j(\tau_j)F_j(\widetilde{\bm x}_j)$}, one sees that to satisfy Eq.~\eqref{eq:conjSymmP_apx} the following conditions are enough:
    \begin{align}
        &\chi_\tc{a}(\tau)=\chi_\tc{b}(\upsilon_\tc{r}-\tau),\\
        &\mel{\Ostate_\tc{a}}{\hat\mu_\tc{a}(\tau)}{\psi_\tc{a}}=e^{\ii \nu/2}\mel{\psi_\tc{b}}{\hat\mu_\tc{b}(\upsilon_\tc{r}-\tau)}{\Ostate_\tc{b}}.\label{eq:condMucrossed}
    \end{align}
    The condition over the monopole operators $\hat \mu_j$ simplifies by using the identity
    \begin{align}
        &\mel{\Ostate_j}{\hat\mu_j(\tau_j)}{\psi_j}=e^{\ii\beta_j}\big(c_j e^{\ii(\Omega_j \tau_j -\beta_j)}-s_j e^{-\ii(\Omega_j \tau_j -\beta_j)}\big),
    \end{align}
    where we denote $s_j=\sin^2\alpha_j$, $c_j=\cos^2\alpha_j$. Together with the hypotheses $\Omega_\tc{a}=\Omega_\tc{b}=\Omega$, $\beta_\tc{a}=-\beta_\tc{b}+\Omega\upsilon_\tc{r}$, $c_\tc{a}=c_\tc{b}$, $s_\tc{a}=s_\tc{b}$,
    \begin{align}
        &\mel{\Ostate_\tc{a}}{\hat\mu_\tc{a}(\tau)}{\psi_\tc{a}}\nonumber\\
        &=e^{\ii\beta_\tc{a}}\big(c_\tc{a} e^{\ii(\Omega \tau -\beta_\tc{a})}-s_\tc{a} e^{-\ii(\Omega \tau -\beta_\tc{a})}\big)\nonumber\\
        &=e^{\ii\beta_\tc{a}}\big(c_\tc{b} e^{-\ii(\Omega (\upsilon_\tc{r}-\tau) -\beta_\tc{b})}-s_\tc{b} e^{\ii(\Omega (\upsilon_\tc{r}-\tau) -\beta_\tc{b})}\big)\nonumber\\
        &=e^{\ii \Omega\upsilon_\tc{r}}\mel{\psi_\tc{b}}{\hat\mu_\tc{b}(\upsilon_\tc{r}-\tau)}{\Ostate_\tc{b}}.
    \end{align}
    Therefore, Eq.~\eqref{eq:condMucrossed} is fulfilled with $\nu= 2\Omega\upsilon_\tc{r}$. This completes proving that the assumptions in the statement of Symmetry~\ref{sym:propertimeReversalSwap} are enough to fulfill Corollary \ref{cor:symmetriesT}.

    
\textbf{Summary:} Corollary~\ref{cor:symmetriesT} holds for harvesting setups that meet the conditions described in any of the symmetries presented in this section. We recall that for such symmetric setups, $\cos \Delta \gamma = 0$, i.e.
\begin{equation}
    |{\cal M}| = \sqrt{|{\cal M}^+|^2+|{\cal M}^-|^2},
\end{equation}
which precludes any destructive interference between $\mathcal M^+$ and $\mathcal M^-$. 

As a final remark, a common extension of the UDW detector model includes complex spacetime smearings, i.e. $\Lambda_j^*\neq \Lambda_j$. All the results presented in this section can be generalized to account for complex spacetime smearings, but we have restricted our results to real smearings for simplicity. For complex smearings, the Hamiltonian interaction density for each detector in the interaction picture becomes $\hat h_{I,j}({\mf x})= \lambda \Lambda_j({\mf x})e^{\ii \Omega_{j} \tau_{j}} \hat{\sigma}^{+}_{j} + \text{H.c.}$, with H.c. denoting the Hermitian conjugate. Therefore, the expression for $\mathcal M$ also changes (see, e.g., \cite{carol}). Nonetheless, the requirements specified for each symmetry only need a modification of the conditions of $\Lambda_j$ (or $\chi_j$). Specifically, whenever an equality between smearing or switching functions appears, one of the sides has to be conjugated, and the equality only needs to be fulfilled up to a constant complex phase.

\section{Further simplifications of the hypotheses of Corollary \ref{cor:symmetriesT}}\label{apx:symcoordinates}

In this appendix, we introduce several lemmas that simplify the conditions 3 and 4 of Symmetries \ref{sym:tR}, \ref{sym:ReversalPtSwap} and \ref{sym:ReversalPtSwapReflect} respectively.


\begin{lemma}\label{lemma:LambdaandMu}
    Consider the two sets of coordinates $(\tau_j,\widetilde{\bm x}_j)$ with $j\in\{\text{A},\text{B}\}$, where $\Lambda_j(\tau_j,\widetilde{\bm x}_j)=\chi_j(\tau_j)F_j(\widetilde{\bm x}_j)$. Given coordinates $(t,\bm x)$, the time reversal around $t_{\tc{r}}$ is \mbox{$T_{\tc{r}}(t_{\tc{r}}+t,\bm x)=(t_{\tc{r}}-t,\bm x)$}. 
    Assume that
    \begin{enumerate}
        \item Given $\tau_{\tc{r},j} = \tau_j(t_{\tc{r}}, \bm 0)$, 
        \begin{align}
            &\tau_j(\mf x) = 2 \tau_{\tc{r},j} -\tau_j\big(T_{\tc{r}}(\mf x)\big), \nonumber\\
            &\widetilde{\bm x}_j(\mf x) = \widetilde{\bm x}_j\big(T_{\tc{r}}(\mf x)\big).\label{eq:hypLemmaLmu1}
        \end{align} 
        \item $\chi_j(\tau)=\chi_j(2\tau_{\tc{r},j}-\tau)$.
        \item $\beta_j = \Omega_j\tau_{\tc{r},j}$.
    \end{enumerate} 
    Recall that $\beta_j$ is the relative phase in the detectors' initial states $\ket{\psi_j}=\cos\alpha_j\ket{g_j}+\sin\alpha_j e^{\ii \beta_j}\ket{e_j}$.
    
    Under these assumptions, the conditions 3 and 4 of Symmetry~\ref{sym:tR} are fulfilled.
    
\end{lemma}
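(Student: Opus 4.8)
The plan is to verify conditions 3 and 4 of Symmetry~\ref{sym:tR} separately, in both cases pushing everything through the proper-time coordinates $(\tau_j,\widetilde{\bm x}_j)$ in which the smearing factorizes as $\Lambda_j(\tau_j,\widetilde{\bm x}_j)=\chi_j(\tau_j)F_j(\widetilde{\bm x}_j)$.

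For condition 3, I would start from $\Lambda_j(\mf x) = \chi_j\big(\tau_j(\mf x)\big)F_j\big(\widetilde{\bm x}_j(\mf x)\big)$ and evaluate it at the point with coordinates $(t_{\tc{r}}+t,\bm x)$. By hypothesis 1, $\tau_j(t_{\tc{r}}+t,\bm x) = 2\tau_{\tc{r},j} - \tau_j(t_{\tc{r}}-t,\bm x)$ while $\widetilde{\bm x}_j$ is left invariant by $T_{\tc{r}}$. Hypothesis 2, the reflection symmetry $\chi_j(\tau)=\chi_j(2\tau_{\tc{r},j}-\tau)$, then collapses the switching factor so that $\Lambda_j(t_{\tc{r}}+t,\bm x)=\chi_j\big(\tau_j(t_{\tc{r}}-t,\bm x)\big)F_j\big(\widetilde{\bm x}_j(t_{\tc{r}}-t,\bm x)\big)=\Lambda_j(t_{\tc{r}}-t,\bm x)$, which is exactly condition 3.

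For condition 4, I would first record the explicit matrix element of the monopole moment. Expanding $\hat\mu_j(\tau_j)=e^{\ii\Omega_j\tau_j}\ket{e_j}\bra{g_j}+e^{-\ii\Omega_j\tau_j}\ket{g_j}\bra{e_j}$ and reading off the overlaps $\braket{\Ostate_j}{e_j}=\cos\alpha_j$ and $\braket{\Ostate_j}{g_j}=-e^{\ii\beta_j}\sin\alpha_j$ from Eq.~\eqref{eq:states}, together with $\braket{g_j}{\psi_j}=\cos\alpha_j$ and $\braket{e_j}{\psi_j}=e^{\ii\beta_j}\sin\alpha_j$, one obtains the identity already used in Appendix~\ref{apx:ExaSym},
\[
\mel{\Ostate_j}{\hat\mu_j(\tau_j)}{\psi_j}=e^{\ii\beta_j}\left(c_j\,e^{\ii(\Omega_j\tau_j-\beta_j)}-s_j\,e^{-\ii(\Omega_j\tau_j-\beta_j)}\right),
\]
with $c_j=\cos^2\alpha_j$ and $s_j=\sin^2\alpha_j$. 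Since $\hat\mu_j$ is Hermitian, the transposed matrix element $\mel{\psi_j}{\hat\mu_j(\tau_j)}{\Ostate_j}$ is the complex conjugate of this expression.

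The final step is to match the two sides using hypotheses 1 and 3. Hypothesis 1 gives $\tau_j(t_{\tc{r}}+t,\bm x)+\tau_j(t_{\tc{r}}-t,\bm x)=2\tau_{\tc{r},j}$, and hypothesis 3 sets $\beta_j=\Omega_j\tau_{\tc{r},j}$; together they imply $\Omega_j\tau_j(t_{\tc{r}}+t,\bm x)-\beta_j=-\big(\Omega_j\tau_j(t_{\tc{r}}-t,\bm x)-\beta_j\big)$. Writing $\theta=\Omega_j\tau_j(t_{\tc{r}}-t,\bm x)-\beta_j$, the left-hand side of condition 4 becomes $e^{\ii\beta_j}(c_j e^{-\ii\theta}-s_j e^{\ii\theta})$, while the conjugated right-hand side is $e^{-\ii\beta_j}(c_j e^{-\ii\theta}-s_j e^{\ii\theta})$. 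These coincide up to the constant factor $e^{2\ii\beta_j}$, so condition 4 holds with $\nu_j=2\beta_j=2\Omega_j\tau_{\tc{r},j}$. There is no deep obstacle in this argument; the only thing requiring care is the bookkeeping of the proper-time arguments, and recognizing that the reflection of $\tau_j$ about $\tau_{\tc{r},j}$ is precisely what flips the sign of the monopole phase so that, after fixing $\beta_j=\Omega_j\tau_{\tc{r},j}$, the reversed matrix element reduces to the conjugate up to a global phase.
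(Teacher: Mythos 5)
Your proposal is correct and takes essentially the same route as the paper's proof: condition 3 follows by composing the factorized smearing with hypotheses 1 and 2, and condition 4 follows from the explicit monopole matrix element identity $\mel{\Ostate_j}{\hat\mu_j(\tau_j)}{\psi_j}=e^{\ii\beta_j}\bigl(c_j e^{\ii(\Omega_j\tau_j-\beta_j)}-s_j e^{-\ii(\Omega_j\tau_j-\beta_j)}\bigr)$ combined with hypotheses 1 and 3, yielding the same constant phase $\nu_j=2\Omega_j\tau_{\tc{r},j}$. The only (harmless) differences are that you derive that identity and invoke Hermiticity explicitly, whereas the paper states the identity and conjugates implicitly in its final line.
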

\begin{proof} 
    We prove the condition 3 first, to do so, keep in mind that
    \begin{equation}
        \Lambda_j(t,\bm x) = \chi_j\big(\tau_j(t,\bm x)\big)F_j\big(\widetilde{\bm x}_j(t,\bm x)\big).
    \end{equation}
    Then, using the time reversibility condition over $\chi_j$,
    \begin{align}
        \chi_j\big(\tau_j(t_{\tc{r}}-t,\bm x)\big) 
        & = \chi_j\big(2\tau_{\tc{r},j}-\tau_j(t_{\tc{r}}-t,\bm x)\big) \nonumber \\
        & = \chi_j\big(\tau_j(t_{\tc{r}}+t,\bm x)\big).
    \end{align}
    Substituting back into the first equation shows condition 3 of Symmetry~\ref{sym:tR},
    \begin{align}
        \Lambda_j(t_{\tc{r}}-t,\bm x) &= \chi_j\big(\tau_j(t_{\tc{r}}-t,\bm x)\big)F_j\big(\widetilde{\bm x}_j(t_{\tc{r}}-t,\bm x)\big)\nonumber\\
        &=\chi_j\big(\tau_j(t_{\tc{r}}+t,\bm x)\big)F_j\big(\widetilde{\bm x}_j(t_{\tc{r}}+t,\bm x)\big)\nonumber\\
        &= \Lambda_j(t_{\tc{r}}+t,\bm x) .
    \end{align}
    
    To prove the condition 4 of Symmetry~\ref{sym:tR}, let us use the identity
    \begin{align}
        &\mel{\Ostate_j}{\hat\mu_j(\tau_j)}{\psi_j}=e^{\ii\beta_j}\big(c_j e^{\ii(\Omega_j \tau_j -\beta_j)}-s_j e^{-\ii(\Omega_j \tau_j -\beta_j)}\big),
    \end{align}
    where $s_j=\sin^2\alpha_j$, $c_j=\cos^2\alpha_j$ and we recall that $\ket{\Ostate_j}$ are the states orthogonal to the initial state $\ket{\psi_j}$ as defined in Eq.~\eqref{eq:states}. Substituting into the following expression,
    \begin{align}\label{eq:mureflection}
        &\mel{\Ostate_j}{\hat\mu_j\big(\tau_j(t_{\tc{r}}-t,\bm x)\big)}{\psi_j}\nonumber\\
        &=e^{\ii\Omega_j\tau_{\tc{r},j}}\Big(c_j e^{\ii\big(\Omega_j \tau_j(t_{\tc{r}}-t,\bm x) -\Omega_j\tau_{\tc{r},j}\big)}\nonumber\\
        &\qquad\qquad\quad -s_j e^{-\ii\big(\Omega_j \tau_j(t_{\tc{r}}-t,\bm x) -\Omega_j\tau_{\tc{r},j}\big)}\Big)\nonumber\\
        &=e^{\ii\Omega_j\tau_{\tc{r},j}}\Big(c_j e^{-\ii\big(\Omega_j \tau_j(t_{\tc{r}}+t,\bm x) -\Omega_j\tau_{\tc{r},j}\big)}\nonumber\\
        &\qquad\qquad\quad -s_j e^{\ii\big(\Omega_j \tau_j(t_{\tc{r}}+t,\bm x) -\Omega_j\tau_{\tc{r},j}\big)}\Big)\nonumber\\
        &=e^{2\ii\Omega_j\tau_{\tc{r},j}}\mel{\psi_j}{\hat\mu_j\big(\tau_j(t_{\tc{r}}+t,\bm x)\big)}{\Ostate_j},
    \end{align}
    where the condition over $\tau_j$ given by Eq.~\eqref{eq:hypLemmaLmu1}, and that \mbox{$\beta_j = \Omega_j\tau_{\tc{r},j}$}, finishing the proof.
\end{proof}

\begin{lemma}\label{lemma:LambdaandMuSwap}
     Consider the two sets of coordinates $(\tau_j,\widetilde{\bm x}_j)$ with $j\in\{\text{A},\text{B}\}$, where $\Lambda_j(\tau_j,\widetilde{\bm x}_j)=\chi_j(\tau_j)F_j(\widetilde{\bm x}_j)$. Given coordinates $(t,\bm x)$, the time reversal around $t_{\tc{r}}$ is \mbox{$T_{\tc{r}}(t_{\tc{r}}+t,\bm x)=(t_{\tc{r}}-t,\bm x)$}. Assume that
     \begin{enumerate}
         \item Given $\upsilon_\tc{r} = \tau_\tc{a}(t_{\tc{r}}, \bm 0)+\tau_\tc{b}(t_{\tc{r}}, \bm 0)$, 
         \begin{align}
            &\tau_\tc{a}(\mf x) = \upsilon_\tc{r} -\tau_\tc{b}\big(T_{\tc{r}}(\mf x)\big), \nonumber\\
            &\widetilde{\bm x}_\tc{a}(\mf x) = \widetilde{\bm x}_\tc{b}\big(T_{\tc{r}}(\mf x)\big).\label{eq:hypLemmaLmu2}
        \end{align}
        \item $F_\tc{a}(\widetilde{\bm x})=F_\tc{b}(\widetilde{\bm x})$.
        \item $\chi_\tc{a}(\tau)=\chi_\tc{b}(\upsilon_\tc{r}-\tau)$.
        \item $\Omega_\tc{a} = \Omega_\tc{b}=\Omega$.
        \item $\beta_\tc{a} + \beta_\tc{b}=\Omega\upsilon_\tc{r}$. 
        \item $|\cos \alpha_\tc{a}|=|\cos \alpha_\tc{b}|$, $|\sin \alpha_\tc{a}|=|\sin \alpha_\tc{b}|$, 
    \end{enumerate}
    where we recall that the detectors' initial states are \mbox{$\ket{\psi_j}=\cos\alpha_j\ket{g_j}+\sin\alpha_j e^{\ii \beta_j}\ket{e_j}$}.
    
    Under these assumptions, conditions 3 and 4 of Symmetry~\ref{sym:ReversalPtSwap} are fulfilled.
\end{lemma}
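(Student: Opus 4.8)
The plan is to mirror the two-part structure of the proof of Lemma~\ref{lemma:LambdaandMu}, establishing condition~3 (smearings) and condition~4 (monopole matrix elements) of Symmetry~\ref{sym:ReversalPtSwap} separately. The essential new feature here is that hypothesis~1 relates the coordinates of detector A after the reversal $T_\tc{r}$ to those of detector B, rather than relating each detector to itself; so throughout I would keep careful track of the A$\leftrightarrow$B swap, and expect the only real subtlety to lie in this bookkeeping.

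For condition~3, I would start from the factorization $\Lambda_\tc{a}(t,\bm x)=\chi_\tc{a}\big(\tau_\tc{a}(t,\bm x)\big)F_\tc{a}\big(\widetilde{\bm x}_\tc{a}(t,\bm x)\big)$ and evaluate it at $(t_\tc{r}+t,\bm x)$, so that $T_\tc{r}(t_\tc{r}+t,\bm x)=(t_\tc{r}-t,\bm x)$. Applying hypothesis~1 gives $\tau_\tc{a}(t_\tc{r}+t,\bm x)=\upsilon_\tc{r}-\tau_\tc{b}(t_\tc{r}-t,\bm x)$ and $\widetilde{\bm x}_\tc{a}(t_\tc{r}+t,\bm x)=\widetilde{\bm x}_\tc{b}(t_\tc{r}-t,\bm x)$. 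Hypothesis~3, $\chi_\tc{a}(\tau)=\chi_\tc{b}(\upsilon_\tc{r}-\tau)$, then converts $\chi_\tc{a}\big(\upsilon_\tc{r}-\tau_\tc{b}(t_\tc{r}-t,\bm x)\big)$ into $\chi_\tc{b}\big(\tau_\tc{b}(t_\tc{r}-t,\bm x)\big)$, while hypothesis~2, $F_\tc{a}=F_\tc{b}$, converts the $F_\tc{a}$ factor into $F_\tc{b}$. Reassembling the product yields exactly $\Lambda_\tc{b}(t_\tc{r}-t,\bm x)$, which is condition~3.

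For condition~4, I would use the identity $\mel{\Ostate_j}{\hat\mu_j(\tau_j)}{\psi_j}=e^{\ii\beta_j}\big(c_j e^{\ii(\Omega_j\tau_j-\beta_j)}-s_j e^{-\ii(\Omega_j\tau_j-\beta_j)}\big)$ with $c_j=\cos^2\alpha_j$ and $s_j=\sin^2\alpha_j$. Writing $\tau_\tc{a}=\upsilon_\tc{r}-\tau_\tc{b}'$ with $\tau_\tc{b}'=\tau_\tc{b}(t_\tc{r}-t,\bm x)$ (hypothesis~1) and combining hypotheses~4 and~5 ($\Omega_\tc{a}=\Omega_\tc{b}=\Omega$, $\beta_\tc{a}+\beta_\tc{b}=\Omega\upsilon_\tc{r}$), the crucial phase identity $\Omega\tau_\tc{a}-\beta_\tc{a}=-(\Omega\tau_\tc{b}'-\beta_\tc{b})$ emerges. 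Invoking hypothesis~6 to set $c_\tc{a}=c_\tc{b}$ and $s_\tc{a}=s_\tc{b}$, the bracket for $\mel{\Ostate_\tc{a}}{\hat\mu_\tc{a}(\tau_\tc{a})}{\psi_\tc{a}}$ becomes identical to that of the complex conjugate $\mel{\psi_\tc{b}}{\hat\mu_\tc{b}(\tau_\tc{b}')}{\Ostate_\tc{b}}=\overline{\mel{\Ostate_\tc{b}}{\hat\mu_\tc{b}(\tau_\tc{b}')}{\psi_\tc{b}}}$, leaving only the prefactor $e^{\ii\beta_\tc{a}}$ versus $e^{-\ii\beta_\tc{b}}$; their ratio is $e^{\ii(\beta_\tc{a}+\beta_\tc{b})}=e^{\ii\Omega\upsilon_\tc{r}}$, so condition~4 holds with $\nu=\Omega\upsilon_\tc{r}$.

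The hard part is purely the bookkeeping of the swap: one must apply hypothesis~1 with the argument $(t_\tc{r}+t,\bm x)$ so that the A-coordinate at $t_\tc{r}+t$ is traded for the B-coordinate at $t_\tc{r}-t$, and one must recognize that $\mel{\psi_\tc{b}}{\hat\mu_\tc{b}(\tau_\tc{b}')}{\Ostate_\tc{b}}$ is the conjugate of $\mel{\Ostate_\tc{b}}{\hat\mu_\tc{b}(\tau_\tc{b}')}{\psi_\tc{b}}$, which is precisely what makes the exponent sign-flip produced by the phase identity line up with complex conjugation. Once these are handled, the remaining manipulations are the same elementary phase algebra already used in Lemma~\ref{lemma:LambdaandMu} and in the derivation accompanying Symmetry~\ref{sym:propertimeReversalSwap}.
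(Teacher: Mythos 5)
Your proposal is correct and follows essentially the same route as the paper's proof: condition~3 is obtained by combining the factorization of $\Lambda_j$ with hypotheses~1--3, and condition~4 follows from the explicit matrix-element identity for $\hat\mu_j$ together with hypotheses~1, 4--6, yielding $\nu=\Omega\upsilon_\tc{r}$. The only differences are cosmetic (you evaluate at $(t_\tc{r}+t,\bm x)$ rather than $(t_\tc{r}-t,\bm x)$ and apply the hypotheses in a slightly different order), and your explicit use of Hermiticity to write $\mel{\psi_\tc{b}}{\hat\mu_\tc{b}(\tau_\tc{b}')}{\Ostate_\tc{b}}$ as a complex conjugate is a step the paper leaves implicit.
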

\begin{proof} 
    We prove condition 3 first, to do so, keep in mind that
    \begin{equation}
        \Lambda_j(t,\bm x) = \chi_j\big(\tau_j(t,\bm x)\big)F_j\big(\widetilde{\bm x}_j(t,\bm x)\big).
    \end{equation}
    Then, using the time reversibility condition over the $\chi_j$,
    \begin{align}
        \chi_\tc{a}\big(\tau_\tc{a}(t_{\tc{r}}-t,\bm x)\big) 
        & = \chi_\tc{b}\big(\upsilon_\tc{r}-\tau_\tc{a}(t_{\tc{r}}-t,\bm x)\big) \nonumber \\
        & = \chi_\tc{b}\big(\tau_\tc{b}(t_{\tc{r}}+t,\bm x)\big),
    \end{align}
    and similarly,
    \begin{equation}
        F_\tc{a}\big(\widetilde{\bm x}_\tc{a}(t_{\tc{r}}-t, \bm x)\big) = F_\tc{b}\big(\widetilde{\bm x}_\tc{b}(t_{\tc{r}}+t, \bm x)\big).
    \end{equation}
    Substituting these back into the first equation completes the first half of the proof,
    \begin{align}
        \Lambda_\tc{a}(t_{\tc{r}}-t,\bm x) &= \chi_\tc{a}\big(\tau_\tc{a}(t_{\tc{r}}-t,\bm x)\big)F_\tc{a}\big(\widetilde{\bm x}_\tc{a}(t_{\tc{r}}-t,\bm x)\big)\nonumber\\
        &=\chi_\tc{b}\big(\tau_\tc{b}(t_{\tc{r}}+t,\bm x)\big)F_\tc{b}\big(\widetilde{\bm x}_j(t_{\tc{r}}+t,\bm x)\big)\nonumber\\
        &= \Lambda_\tc{b}(t_{\tc{r}}+t,\bm x) .
    \end{align}
    
    To prove the condition 4 of Symmetry~\ref{sym:ReversalPtSwap}, let us use the identity
    \begin{align}
        &\mel{\Ostate_j}{\hat\mu_j(\tau_j)}{\psi_j}=e^{\ii\beta_j}\big(c_j e^{\ii(\Omega_j \tau_j -\beta_j)}-s_j e^{-\ii(\Omega_j \tau_j -\beta_j)}\big),
    \end{align}
    where $s_j=\sin^2\alpha_j$, $c_j=\cos^2\alpha_j$ and we recall that $\ket{\Ostate_j}$ are the states orthogonal to the initial state $\ket{\psi_j}$ as defined in Eq.~\eqref{eq:states}. The proof finishes by combining this identity with the hypotheses \mbox{$c_\tc{a}=c_\tc{b}$}, \mbox{$s_\tc{a}=s_\tc{b}$}, $\Omega_\tc{a}=\Omega_\tc{b}=\Omega$, \mbox{$\beta_\tc{a}=-\beta_\tc{b}+\Omega \upsilon_\tc{r}$}, and with the assumption over the $\tau_j$,
    \begin{align}
        &\mel{\Ostate_\tc{a}}{\hat\mu_\tc{a}\big(\tau_\tc{a}(t_{\tc{r}}-t,\bm x)\big)}{\psi_\tc{a}}\nonumber\\
        &=e^{\ii\beta_\tc{a}}\big(c_\tc{a} e^{\ii(\Omega \tau_\tc{a}(t_{\tc{r}}-t,\bm x) -\beta_\tc{a})}-s_\tc{a} e^{-\ii(\Omega \tau_\tc{a}(t_{\tc{r}}-t,\bm x)-\beta_\tc{a})}\big)\nonumber\\
        &=e^{\ii\beta_\tc{a}}\big(c_\tc{b} e^{-\ii(\Omega \tau_\tc{b}(t_{\tc{r}}+t,\bm x) -\beta_\tc{b})}-s_\tc{b} e^{\ii(\Omega \tau_\tc{b}(t_{\tc{r}}+t,\bm x) -\beta_\tc{b})}\big)\nonumber\\
        &=e^{\ii \Omega \upsilon_\tc{r}}\mel{\psi_\tc{b}}{\hat\mu_\tc{b}\big(\tau_\tc{b}(t_{\tc{r}}+t,\bm x)\big)}{\Ostate_\tc{b}}.
    \end{align}
\end{proof}

\begin{lemma}\label{lemma:LambdaandMuSwapReflect}
    Consider the two sets of coordinates $(\tau_j,\widetilde{\bm x}_j)$ with $j\in\{\text{A},\text{B}\}$, where $\Lambda_j(\tau_j,\widetilde{\bm x}_j)=\chi_j(\tau_j)F_j(\widetilde{\bm x}_j)$. Given coordinates $(t,\bm x)$, the time reversal around $t_{\tc{r}}$ and reflection around $\bm x_0$ is $T_{\tc{rr}}(t,\bm x)=(2t_{\tc{r}}-t,2\bm x_0-\bm x)$. Assume that
     \begin{enumerate}
         \item Given $\upsilon_\tc{r} = \tau_\tc{a}(t_{\tc{r}}, \bm 0)+\tau_\tc{b}(t_{\tc{r}}, \bm 0)$, 
         \begin{align}
            &\tau_\tc{a}(\mf x) = \upsilon_\tc{r} -\tau_\tc{b}\big(T_{\tc{rr}}(\mf x)\big), \nonumber\\
            &\widetilde{\bm x}_\tc{a}(\mf x) = \widetilde{\bm x}_\tc{b}\big(T_{\tc{rr}}(\mf x)\big).\label{eq:hypLemmaLmu3}
        \end{align}
        \item $F_\tc{a}(\widetilde{\bm x})=F_\tc{b}(\widetilde{\bm x})$.
        \item $\chi_\tc{a}(\tau)=\chi_\tc{b}(\upsilon_\tc{r}-\tau)$.
        \item $\Omega_\tc{a} = \Omega_\tc{b}=\Omega$.
        \item $\beta_\tc{a} + \beta_\tc{b}=\Omega\upsilon_\tc{r}$. 
        \item $|\cos \alpha_\tc{a}|=|\cos \alpha_\tc{b}|$, $|\sin \alpha_\tc{a}|=|\sin \alpha_\tc{b}|$.
    \end{enumerate}
    where we recall that the detectors' initial states are \mbox{$\ket{\psi_j}=\cos\alpha_j\ket{g_j}+\sin\alpha_j e^{\ii \beta_j}\ket{e_j}$}.
    
    Under these assumptions, the conditions 3 and 4 of Symmetry~\ref{sym:ReversalPtSwapReflect} are fulfilled.
\end{lemma}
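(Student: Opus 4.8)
The plan is to mirror the proof of Lemma~\ref{lemma:LambdaandMuSwap} almost verbatim, the only change being that the pure time reversal $T_{\tc{r}}$ is replaced everywhere by the combined reversal-and-reflection $T_{\tc{rr}}(t,\bm x)=(2t_{\tc{r}}-t,2\bm x_0-\bm x)$. The observation that makes this substitution harmless is that $T_{\tc{rr}}$ sends the coordinates $(t_{\tc{r}}+t,\bm x_0+\bm x)$ to $(t_{\tc{r}}-t,\bm x_0-\bm x)$, so the extra spatial reflection is carried entirely inside the detector-adapted coordinates $\widetilde{\bm x}_j$ and the spatial profile $F_j$, while the temporal and internal-state structure is identical to the swap case of Symmetry~\ref{sym:ReversalPtSwap}. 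As before, the goal is to verify conditions 3 and 4 of Symmetry~\ref{sym:ReversalPtSwapReflect}.

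First I would establish condition 3. Writing $\Lambda_\tc{a}(\mf x)=\chi_\tc{a}\big(\tau_\tc{a}(\mf x)\big)F_\tc{a}\big(\widetilde{\bm x}_\tc{a}(\mf x)\big)$ and substituting hypothesis~1, namely $\tau_\tc{a}(\mf x)=\upsilon_\tc{r}-\tau_\tc{b}\big(T_{\tc{rr}}(\mf x)\big)$ and $\widetilde{\bm x}_\tc{a}(\mf x)=\widetilde{\bm x}_\tc{b}\big(T_{\tc{rr}}(\mf x)\big)$, the time reversibility $\chi_\tc{a}(\tau)=\chi_\tc{b}(\upsilon_\tc{r}-\tau)$ of hypothesis~3 collapses the switching factor to $\chi_\tc{b}\big(\tau_\tc{b}(T_{\tc{rr}}(\mf x))\big)$, while $F_\tc{a}=F_\tc{b}$ of hypothesis~2 collapses the spatial factor to $F_\tc{b}\big(\widetilde{\bm x}_\tc{b}(T_{\tc{rr}}(\mf x))\big)$. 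Recombining yields $\Lambda_\tc{a}(\mf x)=\Lambda_\tc{b}\big(T_{\tc{rr}}(\mf x)\big)$, which, evaluated on the point with coordinates $(t_{\tc{r}}+t,\bm x_0+\bm x)$ and using $T_{\tc{rr}}(t_{\tc{r}}+t,\bm x_0+\bm x)=(t_{\tc{r}}-t,\bm x_0-\bm x)$, is precisely $\Lambda_\tc{a}(t_{\tc{r}}+t,\bm x_0+\bm x)=\Lambda_\tc{b}(t_{\tc{r}}-t,\bm x_0-\bm x)$.

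Then I would prove condition 4 using the monopole identity $\mel{\Ostate_j}{\hat\mu_j(\tau_j)}{\psi_j}=e^{\ii\beta_j}\big(c_j e^{\ii(\Omega_j\tau_j-\beta_j)}-s_j e^{-\ii(\Omega_j\tau_j-\beta_j)}\big)$, with $c_j=\cos^2\alpha_j$ and $s_j=\sin^2\alpha_j$. Evaluating the left-hand side at $\tau_\tc{a}(t_{\tc{r}}+t,\bm x_0+\bm x)=\upsilon_\tc{r}-\tau_\tc{b}(t_{\tc{r}}-t,\bm x_0-\bm x)$ from hypothesis~1, and feeding in the gap equality $\Omega_\tc{a}=\Omega_\tc{b}=\Omega$, the phase relation $\beta_\tc{a}+\beta_\tc{b}=\Omega\upsilon_\tc{r}$, and the population equalities $c_\tc{a}=c_\tc{b}$, $s_\tc{a}=s_\tc{b}$ of hypotheses~4--6, the exponents reorganize exactly as in Eq.~\eqref{eq:condMucrossed}, producing $e^{\ii\Omega\upsilon_\tc{r}}\mel{\psi_\tc{b}}{\hat\mu_\tc{b}\big(\tau_\tc{b}(t_{\tc{r}}-t,\bm x_0-\bm x)\big)}{\Ostate_\tc{b}}$. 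Hence condition 4 holds with the constant $\nu=\Omega\upsilon_\tc{r}$.

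I expect no genuinely new obstacle relative to Lemma~\ref{lemma:LambdaandMuSwap}; the only point demanding care is hypothesis~1 itself, i.e.\ checking that the reflection $\bm x\to\bm x_0-\bm x$ is indeed implemented by the swap $\widetilde{\bm x}_\tc{a}\leftrightarrow\widetilde{\bm x}_\tc{b}$ of detector-adapted spatial coordinates. This is what guarantees that the reflection is \emph{absorbed} by the equality $F_\tc{a}=F_\tc{b}$ rather than producing a genuine change in the smearing profile. Once hypothesis~1 is granted, the reflection never surfaces explicitly and the remaining computation is structurally identical to the swap case, so the proof closes immediately.
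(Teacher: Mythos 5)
Your proposal is correct and takes essentially the same route as the paper: the paper proves this lemma by direct analogy with Lemma~\ref{lemma:LambdaandMuSwap}, replacing $T_{\tc{r}}$ by $T_{\tc{rr}}$, and your verification of conditions 3 and 4 of Symmetry~\ref{sym:ReversalPtSwapReflect} (with $\nu=\Omega\upsilon_{\tc{r}}$) reproduces that argument. Your closing remark — that the spatial reflection is absorbed into the relation between the detector-adapted coordinates in hypothesis~1, so that $F_\tc{a}=F_\tc{b}$ is consistent with the reflection-related $\Lambda_j$ — is precisely the observation the paper itself emphasizes after stating the lemma.
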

This lemma is proven analogously to the previous Lemma~\ref{lemma:LambdaandMuSwap}. Notice the following apparent incompatibility: the lemma asks the smearing functions $F_j$ to be equal, while the assumptions of Symmetry~\ref{sym:ReversalPtSwapReflect} ask spacetime smearings $\Lambda_j$ to be related by a spatial reflection and a time reversal. However, the equality between $F_j$ is actually expected because the spatial smearing functions $F_j$ are defined in their corresponding prescription coordinates $(\tau_j,\widetilde{\bm x}_j)$, which we assumed to be related by a time reflection and a space reflection, as in Eq.~\eqref{eq:hypLemmaLmu3}.

\section{Simplifying the symmetries for the most common  entanglement harvesting scenarios}\label{apx:simplify}

Many of the scenarios analyzed in the literature of entanglement harvesting already restrict themselves to very symmetric particular setups. While this is done for different reasons (mainly simplifying the math) this yields important simplifications of the conditions required for each of the $\bm T$ maps described in Appendix~\ref{apx:ExaSym} to be symmetries that fulfill the assumptions of Corollary \ref{cor:symmetriesT}. 

In particular,  we will show how the usual assumptions in the literature for harvesting in flat spacetime and FRW yield Propositions~\ref{prop:switchings} and \ref{prop:switchingsFRW} of the main text. 

\subsection{Sufficient symmetries for harvesting and communication collaboration in Minkowski spacetime}
This Subappendix proves Proposition~\ref{prop:switchings} of the main text. This proof consists of showing that the conditions for the symmetries provided in Appendix~\ref{apx:ExaSym} are fulfilled in Minkowski spacetime, for fields that are prepared in the vacuum and inertial, comoving, symmetrically switched detectors that are prepared both in the ground state or both in the excited state. 


\begin{lemma}\label{lemma:MinkowskiGF} Assume that 
\begin{enumerate}
    \item The spacetime is flat.
    \item The field starts in the Minkowski vacuum.
\end{enumerate}
This implies that f any inertial coordinate system $(t,\bm x)$ and any constants $t_{\tc{r}}$ and $\bm x_0$, the Feynmann propagator $G_\tc{f}$ fulfills 
\begin{align}
     G_\tc{f}(t,\bm x, t',\bm x')&=G_\tc{f}(2t_{\tc{r}}-t',2\bm x_0-\bm x', 2t_{\tc{r}}-t,2\bm x_0-\bm x)\nonumber\\
     &=G_\tc{f}(2t_{\tc{r}}-t',\bm x', 2t_{\tc{r}}-t,\bm x)\nonumber\\
     &=G_\tc{f}(2t_{\tc{r}}-t,\bm x,2t_{\tc{r}}-t',\bm x').\label{eq:symGF}
\end{align}
And therefore, these relations fulfill the condition 2 for the Symmetries~\ref{sym:tR}, \ref{sym:ReversalPtSwap}, \ref{sym:ReversalPtSwapReflect}. Moreover, since $g(t,\bm x)=-1$, the condition 1 of all these symmetries is also fulfilled.
\end{lemma}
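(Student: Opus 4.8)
The plan is to reduce all three identities in \eqref{eq:symGF} to two elementary invariances of the Minkowski vacuum two-point function together with one structural property of the time-ordering prescription. First I would record that, because the Minkowski vacuum is the Poincar\'e-invariant state, in any inertial chart the Wightman function is both translation invariant and rotationally invariant; hence it depends on its two arguments only through the coordinate differences $\Delta t=t-t'$ and $\Delta\bm x=\bm x-\bm x'$, and on the latter only through $|\Delta\bm x|$. Writing $W(\mf x,\mf x')=w(\Delta t,|\Delta\bm x|)$, the swapped function is $W(\mf x',\mf x)=w(-\Delta t,|\Delta\bm x|)$, so the definition \eqref{eq:defGF} becomes
\begin{equation}
    G_\tc{f}(\mf x,\mf x')=\Theta(\Delta t)\,w(\Delta t,|\Delta\bm x|)+\Theta(-\Delta t)\,w(-\Delta t,|\Delta\bm x|).
\end{equation}
This exhibits $G_\tc{f}$ as a translation-invariant function of $(\Delta t,|\Delta\bm x|)$ that is manifestly even under $\Delta\bm x\to-\Delta\bm x$ and, crucially, even under $\Delta t\to-\Delta t$ (the two $\Theta$-terms simply exchange roles). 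It is also symmetric under the point swap $\mf x\leftrightarrow\mf x'$, directly from \eqref{eq:defGF}.

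Given this, each identity in \eqref{eq:symGF} follows by computing the action of the corresponding point map on $(\Delta t,\Delta\bm x)$ and on the sign of $\Delta t$. For the first map $(t,\bm x,t',\bm x')\mapsto(2t_{\tc{r}}-t',2\bm x_0-\bm x',2t_{\tc{r}}-t,2\bm x_0-\bm x)$ one finds that $\Delta t$ and $\Delta\bm x$ are both left unchanged (the reflection and reversal cancel against the point swap), so equality holds by translation invariance alone. For the second map the new time difference is again $+\Delta t$ while the new spatial difference is $-\Delta\bm x$, so equality holds by translation invariance together with evenness in $\Delta\bm x$. For the third map, the pure time reversal, the new time difference is $-\Delta t$ with $\Delta\bm x$ unchanged, and equality holds precisely because $G_\tc{f}$ is even in $\Delta t$. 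I would present these three checks as a short computation of $(\Delta t,\Delta\bm x)$ before and after each map.

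Finally I would translate \eqref{eq:symGF} into the hypotheses of the three symmetries. Relabeling $t\to t_{\tc{r}}+t$, $t'\to t_{\tc{r}}+t'$ (and $\bm x\to\bm x_0+\bm x$, $\bm x'\to\bm x_0+\bm x'$ for Symmetry~\ref{sym:ReversalPtSwapReflect}) turns the third line of \eqref{eq:symGF} directly into condition~2 of Symmetry~\ref{sym:tR}, and the second and first lines into condition~2 of Symmetries~\ref{sym:ReversalPtSwap} and~\ref{sym:ReversalPtSwapReflect} respectively, after using $G_\tc{f}(\mf x,\mf x')=G_\tc{f}(\mf x',\mf x)$ to reorder the two points. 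Condition~1 of all three symmetries is then immediate, since $g=-1$ is constant in inertial coordinates.

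The main obstacle here is not conceptual but bookkeeping: one must track carefully which point carries primed versus unprimed labels through each map and invoke the point-swap symmetry of $G_\tc{f}$ at the right moment to match the stated conditions. The only genuinely delicate step is the time-reversal (third) identity, where $w$ is a distribution carrying an $\ii\epsilon$ prescription; there I would note that the exchange of the two $\Theta$-terms is exactly what carries the $-\ii\epsilon$ of $W(\mf x,\mf x')$ into the $+\ii\epsilon$ of $W(\mf x',\mf x)$, so the manifest evenness survives the distributional subtleties. Equivalently, one may simply invoke the closed form $G_\tc{f}\propto[(\Delta t)^2-|\Delta\bm x|^2+\ii\epsilon]^{-1}$, which is visibly translation invariant, isotropic, and even in $\Delta t$, making all three identities transparent.
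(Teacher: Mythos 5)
Your proof is correct and takes essentially the same approach as the paper: both arguments reduce the claim to the fact that the Minkowski-vacuum Feynman propagator depends on its arguments only through $|\Delta t|$ and $|\Delta \bm x|$ (the paper writes this compactly as $G_\tc{f}=W(|\Delta t|,|\Delta \bm x|)$, while you phrase it as translation invariance plus evenness in $\Delta t$ and $\Delta\bm x$), and then observe that every map in Eq.~\eqref{eq:symGF} preserves these two quantities. Your additional per-map bookkeeping and the remark on the $\ii\epsilon$ prescription are finer-grained than, but fully consistent with, the paper's one-line conclusion.
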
 
\begin{proof}
    The Wightman of the Minkowski vacuum for a scalar field is
    \begin{align}
        W(t,\bm x,t',\bm x')  & =  \frac{1}{(2\pi)^n}\int{ \ \frac{\dd ^{n} \bm k}{2 \omega_{\bm k}} e^{\ii \bm k \cdot(\bm x - \bm x')}e^{-\ii \omega_{\bm k}(t - t')}},
    \end{align}
    with $\omega_{\bm k} = \sqrt{\bm k^2 + m^2}$.
    The Minkowski vacuum Wightman function fulfills 
    \begin{equation}
        W(t,\bm x, t', \bm x')=W(\Delta t,|\Delta \bm x|),
    \end{equation}
    where we denoted $W(\Delta t,|\Delta \bm x|)= W(\Delta t,|\Delta \bm x|,0,\bm 0)$ and $\Delta t = t- t',\ \Delta \bm x = \bm x-\bm x'$. Then, from the definition of $G_\tc{f}$ in Eq.~\eqref{eq:defGF}, we have
    \begin{align}
       &G_\tc{f}(t,\bm x, t', \bm x') \nonumber\\
       &=\Theta(\Delta t) W(\Delta t,|\Delta \bm x|) + 
        \Theta(-\Delta t) W(-\Delta t,|\Delta \bm x|)\nonumber\\
        &=W(|\Delta t|,|\Delta \bm x|).
    \end{align}
    Therefore, Eq.~\eqref{eq:symGF} is fulfilled because all of the equated expressions for $G_\tc{f}$ have the same values of $|\Delta t|$ and $|\Delta \bm x|$.    
\end{proof}

\begin{lemma} \label{lemma:sym1Cond34} Assume that, for a given constant $t_\tc{r}$,
    \begin{enumerate}
        \item The spacetime is flat.
        \item The field starts in the Minkowski vacuum.
        \item The detectors start both in their ground state or both in their excited state. 
        \item The detectors are inertial and comoving. 
        \item In the comoving frame $(t,\bm x)$, the spacetime smearing factorizes $\Lambda_j(t,\bm x)=\chi_j(t)F_j(\bm x)$.
        \item $\chi_j(t)=\chi_j(2t_{\tc{r}}-t)$.
    \end{enumerate}
    Then, the conditions of Symmetry~\ref{sym:tR} are fulfilled.
\end{lemma}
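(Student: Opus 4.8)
The plan is to verify each of the four conditions in the statement of Symmetry~\ref{sym:tR} under the present hypotheses, leaning on the two lemmas already established. Conditions 1 and 2 of Symmetry~\ref{sym:tR} are precisely the content of Lemma~\ref{lemma:MinkowskiGF}: the present assumptions 1 and 2 (flat spacetime, Minkowski vacuum) are exactly its hypotheses, so the metric determinant is the constant $g=-1$ — giving condition 1 at once — and the Feynman propagator obeys $G_\tc{f}(t,\bm x,t',\bm x')=G_\tc{f}(2t_\tc{r}-t,\bm x,2t_\tc{r}-t',\bm x')$, which after the shift $t\to t_\tc{r}+t$, $t'\to t_\tc{r}+t'$ is condition 2.

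For conditions 3 and 4, the strategy is to invoke Lemma~\ref{lemma:LambdaandMu}, which delivers exactly those two conditions once its own three hypotheses are checked. The fact that makes all three hold is that for inertial, comoving detectors the prescription coordinates $(\tau_j,\widetilde{\bm x}_j)$ may be identified with the comoving inertial coordinates $(t,\bm x)$, so that $\tau_j(t,\bm x)=t$ and $\widetilde{\bm x}_j(t,\bm x)=\bm x$. From this I read off $\tau_{\tc{r},j}=\tau_j(t_\tc{r},\bm 0)=t_\tc{r}$, and the reflection $T_\tc{r}(t,\bm x)=(2t_\tc{r}-t,\bm x)$ gives $2\tau_{\tc{r},j}-\tau_j(T_\tc{r}(\mf x))=t=\tau_j(\mf x)$ together with $\widetilde{\bm x}_j(T_\tc{r}(\mf x))=\bm x=\widetilde{\bm x}_j(\mf x)$, which is hypothesis 1 of Lemma~\ref{lemma:LambdaandMu}. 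Hypothesis 2, $\chi_j(\tau)=\chi_j(2\tau_{\tc{r},j}-\tau)$, coincides with the present assumption 6 once $\tau_{\tc{r},j}=t_\tc{r}$ is substituted.

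The one genuinely delicate point — and the step I expect to require the most care — is hypothesis 3 of Lemma~\ref{lemma:LambdaandMu}, namely $\beta_j=\Omega_j\tau_{\tc{r},j}$. The present hypotheses fix only that both detectors start in an energy eigenstate (both ground or both excited), i.e.\ $\alpha_j\in\{0,\pi/2\}$, for which the relative phase $\beta_j$ multiplies a vanishing amplitude and is therefore physically undetermined. The resolution is to exploit this freedom and set $\beta_j=\Omega_j t_\tc{r}$, so that hypothesis 3 holds and Lemma~\ref{lemma:LambdaandMu} applies verbatim. Equivalently, one can sidestep the lemma and check condition 4 of Symmetry~\ref{sym:tR} by hand: for the ground-state choice $\mel{\Ostate_j}{\hat\mu_j(\tau)}{\psi_j}=e^{\ii\Omega_j\tau}$ and $\mel{\psi_j}{\hat\mu_j(\tau)}{\Ostate_j}=e^{-\ii\Omega_j\tau}$, so evaluating at $\tau_j(t_\tc{r}\pm t,\bm x)=t_\tc{r}\pm t$ the required identity holds with the constant phase $\nu_j=2\Omega_j t_\tc{r}$ (and $\nu_j=-2\Omega_j t_\tc{r}$ in the excited-state case). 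With all four conditions of Symmetry~\ref{sym:tR} verified, the conclusion follows, and by Corollary~\ref{cor:symmetriesT} the setup precludes destructive interference.
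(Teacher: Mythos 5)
Your proof is correct and follows essentially the same route as the paper: conditions 1 and 2 of Symmetry~\ref{sym:tR} via Lemma~\ref{lemma:MinkowskiGF}, and conditions 3 and 4 via Lemma~\ref{lemma:LambdaandMu} after identifying the prescription coordinates $(\tau_j,\widetilde{\bm x}_j)$ with the comoving inertial frame $(t,\bm x)$, with the phase hypothesis $\beta_j=\Omega_j\tau_{\tc{r},j}$ handled exactly as the paper does, by exploiting that $\beta_j$ is undetermined for energy eigenstates. Your explicit by-hand check of condition 4 with $\nu_j=2\Omega_j t_{\tc{r}}$ is a correct (and welcome) elaboration of what the paper dismisses as "trivially satisfied."
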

\begin{proof}
    First, conditions 1 and 2 of Symmetry~\ref{sym:tR} hold because of Lemma \ref{lemma:MinkowskiGF}. 

    To prove the conditions 3 and 4 of Symmetry~\ref{sym:tR}, we show next that the assumptions of Lemma~\ref{lemma:LambdaandMu} hold. First, choose both coordinate systems $(\tau_\tc{a},\widetilde{\bm x}_\tc{a})$ and $(\tau_\tc{b},\widetilde{\bm x}_\tc{b})$ to be $(t,\bm x)$. Then, the assumption 1 of Lemma~\ref{lemma:LambdaandMu} holds. The assumption 2 holds because of $\chi_j(t)=\chi_j(2t_{\tc{r}}-t)$ and $t_\tc{r} = \tau_{\tc{r},j}$. Finally, the assumption 3 upon the relative phases $\beta_j$ is trivially satisfied because both initial states are either $\ket{\psi_j} = \ket{g_j}$ or $\ket{\psi_j} =\ket{e_j}$.

\end{proof}
\begin{lemma} \label{lemma:sym2Cond34} Assume that, for a given constant $t_\tc{r}$,
    \begin{enumerate}
        \item The spacetime is flat.
        \item The field starts in the Minkowski vacuum.
        \item The detectors start both in their ground state or both in their excited state. 
        \item The detectors are inertial and comoving. 
        \item In the comoving frame $(t,\bm x)$, the spacetime smearing factorizes $\Lambda_j(t,\bm x)=\chi_j(t)F_j(\bm x)$.
        \item $\Omega_\tc{a} = \Omega_\tc{b}$.
        \item $\chi_\tc{a}(t)=\chi_\tc{b}(2t_{\tc{r}}-t)$.
        \item $F_\tc{a}(\bm x)=F_\tc{b}(\bm x)$.
    \end{enumerate}
    Then, the conditions of Symmetry~\ref{sym:ReversalPtSwap} are fulfilled.
\end{lemma}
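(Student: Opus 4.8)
The plan is to verify the four conditions of Symmetry~\ref{sym:ReversalPtSwap} by invoking the two auxiliary lemmas already established in this appendix, in close parallel to the proof of Lemma~\ref{lemma:sym1Cond34}. Conditions 1 and 2 (on the metric determinant and the Feynman propagator) are immediate: assumptions 1 and 2 of the present lemma are exactly the hypotheses of Lemma~\ref{lemma:MinkowskiGF}, so I would simply cite that lemma. Concretely, the middle equality in Eq.~\eqref{eq:symGF}, namely $G_\tc{f}(t,\bm x,t',\bm x')=G_\tc{f}(2t_{\tc{r}}-t',\bm x',2t_{\tc{r}}-t,\bm x)$, becomes precisely condition 2 of Symmetry~\ref{sym:ReversalPtSwap} after the relabeling $t\to t_{\tc{r}}+t'$, $\bm x\to\bm x'$, $t'\to t_{\tc{r}}+t$, $\bm x'\to\bm x$, and $g=-1$ gives condition 1.

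The substance of the proof is establishing conditions 3 and 4, which I would reduce to Lemma~\ref{lemma:LambdaandMuSwap}. The key first step is to choose both prescription coordinate systems $(\tau_\tc{a},\widetilde{\bm x}_\tc{a})$ and $(\tau_\tc{b},\widetilde{\bm x}_\tc{b})$ to coincide with the comoving inertial coordinates $(t,\bm x)$; this is legitimate because assumption 4 makes the detectors inertial and comoving, so $\tau_j(t,\bm x)=t$ and $\widetilde{\bm x}_j(t,\bm x)=\bm x$ for $j=\tc{a},\tc{b}$. With this choice one gets $\upsilon_\tc{r}=\tau_\tc{a}(t_{\tc{r}},\bm 0)+\tau_\tc{b}(t_{\tc{r}},\bm 0)=2t_{\tc{r}}$, and I would verify hypothesis 1 of Lemma~\ref{lemma:LambdaandMuSwap} directly from $T_{\tc{r}}(t,\bm x)=(2t_{\tc{r}}-t,\bm x)$: since $\tau_\tc{b}(T_{\tc{r}}(\mf x))=2t_{\tc{r}}-t$, both $\tau_\tc{a}(\mf x)=t=\upsilon_\tc{r}-\tau_\tc{b}(T_{\tc{r}}(\mf x))$ and $\widetilde{\bm x}_\tc{a}(\mf x)=\bm x=\widetilde{\bm x}_\tc{b}(T_{\tc{r}}(\mf x))$ hold, matching Eq.~\eqref{eq:hypLemmaLmu2}.

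It then remains to match hypotheses 2--6 of Lemma~\ref{lemma:LambdaandMuSwap} with the present assumptions: hypothesis 2 is assumption 8, hypothesis 3 is assumption 7 rewritten using $\upsilon_\tc{r}=2t_{\tc{r}}$, and hypothesis 4 is assumption 6. The point requiring the most care, which I expect to be the only genuine obstacle, is hypothesis 5, $\beta_\tc{a}+\beta_\tc{b}=\Omega\upsilon_\tc{r}$, together with hypothesis 6 on the $\alpha_j$. I would argue, exactly as in Lemma~\ref{lemma:sym1Cond34}, that since both detectors start either both in the ground state ($\alpha_j=0$) or both in the excited state ($\alpha_j=\pi/2$), we have $\alpha_\tc{a}=\alpha_\tc{b}$, so hypothesis 6 is immediate; moreover in these states the phase $\beta_j$ multiplies a vanishing amplitude (ground) or contributes only a global phase (excited), so it is a free parameter that may be fixed to satisfy $\beta_\tc{a}+\beta_\tc{b}=2\Omega t_{\tc{r}}$. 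With all six hypotheses of Lemma~\ref{lemma:LambdaandMuSwap} verified, conditions 3 and 4 follow, and together with conditions 1 and 2 this completes the verification that all conditions of Symmetry~\ref{sym:ReversalPtSwap} are fulfilled.
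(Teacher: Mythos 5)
Your proposal is correct and follows essentially the same route as the paper: the paper's proof simply states that the result is shown analogously to Lemma~\ref{lemma:sym1Cond34}, using Lemma~\ref{lemma:LambdaandMuSwap} in place of Lemma~\ref{lemma:LambdaandMu}, with the conditions $|\cos\alpha_\tc{a}|=|\cos\alpha_\tc{b}|$ and $|\sin\alpha_\tc{a}|=|\sin\alpha_\tc{b}|$ holding because both detectors start in their ground or both in their excited state. Your write-up just makes explicit the steps the paper leaves implicit (taking both prescription coordinate systems to be $(t,\bm x)$ so that $\upsilon_\tc{r}=2t_\tc{r}$, invoking Lemma~\ref{lemma:MinkowskiGF} for conditions 1 and 2 of Symmetry~\ref{sym:ReversalPtSwap}, and treating $\beta_j$ as a free phase for ground/excited initial states), all of which match the paper's intended argument.
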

\begin{proof}
    Shown analogously to Lemma~\ref{lemma:sym1Cond34}, using Lemma~\ref{lemma:LambdaandMuSwap} instead of Lemma~\ref{lemma:LambdaandMu}. Additionally, \mbox{$|\cos\alpha_\tc{a}|=|\cos\alpha_\tc{b}|$} and \mbox{$|\sin\alpha_\tc{a}|=|\sin\alpha_\tc{b}|$} hold because of restricting the initial states of the detectors to be both ground or both excited.
\end{proof}

\begin{lemma} \label{lemma:sym3Cond34} Assume that, for a given constant $t_\tc{r}$,
    \begin{enumerate}
        \item The spacetime is flat.
        \item The field starts in the Minkowski vacuum.
        \item The detectors start both in their ground state or both in their excited state. 
        \item The detectors are inertial and comoving. 
        \item In the comoving frame $(t,\bm x)$, the spacetime smearing factorizes $\Lambda_j(t,\bm x)=\chi_j(t)F_j(\bm x)$.
        \item $\Omega_\tc{a} = \Omega_\tc{b}$.
        \item $\chi_\tc{a}(t)=\chi_\tc{b}(2t_{\tc{r}}-t)$.
        \item $F_\tc{a}(\bm x)=F_\tc{b}(2\bm x_0 - \bm x)$.
    \end{enumerate}
    Then, the conditions of Symmetry~\ref{sym:ReversalPtSwapReflect} are fulfilled.
\end{lemma}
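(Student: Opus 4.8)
The plan is to mirror the structure of the proof of Lemma~\ref{lemma:sym2Cond34}, replacing the two ingredients that encode the point-swap by their reflection-augmented counterparts. Concretely, I would verify the four hypotheses of Symmetry~\ref{sym:ReversalPtSwapReflect} by splitting them into the pair $\{1,2\}$ (metric and Feynman propagator), handled directly by Lemma~\ref{lemma:MinkowskiGF}, and the pair $\{3,4\}$ (smearings and monopole matrix elements), obtained by checking the hypotheses of Lemma~\ref{lemma:LambdaandMuSwapReflect}.

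First I would dispatch conditions 1 and 2. Since the spacetime is flat and the coordinates are inertial, $g(t,\bm x)=-1$ everywhere, so condition 1, $g(t_{\tc{r}}+t,\bm x_0+\bm x)=g(t_{\tc{r}}-t,\bm x_0-\bm x)$, is immediate. For condition 2 I would invoke the first line of Eq.~\eqref{eq:symGF} in Lemma~\ref{lemma:MinkowskiGF}, valid for arbitrary constants $t_{\tc{r}}$ and $\bm x_0$: substituting the arguments $(t_{\tc{r}}+t',\bm x_0+\bm x',t_{\tc{r}}+t,\bm x_0+\bm x)$ into $G_\tc{f}(\mf x,\mf x')=G_\tc{f}(2t_{\tc{r}}-t',2\bm x_0-\bm x',2t_{\tc{r}}-t,2\bm x_0-\bm x)$ reproduces exactly the equality required by condition 2 of Symmetry~\ref{sym:ReversalPtSwapReflect}.

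The core of the argument is conditions 3 and 4, for which I would apply Lemma~\ref{lemma:LambdaandMuSwapReflect}. The key modelling choice is the selection of prescription coordinates: take $(\tau_\tc{a},\widetilde{\bm x}_\tc{a})=(t,\bm x)$ for detector A but $(\tau_\tc{b},\widetilde{\bm x}_\tc{b})=(t,2\bm x_0-\bm x)$ for detector B, so that the spatial reflection relating the two smearings is absorbed into the coordinate map. With this choice $\upsilon_\tc{r}=\tau_\tc{a}(t_{\tc{r}},\bm 0)+\tau_\tc{b}(t_{\tc{r}},\bm 0)=2t_{\tc{r}}$, and one checks directly that $T_{\tc{rr}}(t,\bm x)=(2t_{\tc{r}}-t,2\bm x_0-\bm x)$ yields both $\tau_\tc{a}(\mf x)=\upsilon_\tc{r}-\tau_\tc{b}(T_{\tc{rr}}(\mf x))$ and $\widetilde{\bm x}_\tc{a}(\mf x)=\widetilde{\bm x}_\tc{b}(T_{\tc{rr}}(\mf x))$, giving hypothesis 1. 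Hypothesis 2, $F_\tc{a}(\widetilde{\bm x})=F_\tc{b}(\widetilde{\bm x})$, follows because in the reflected coordinate $\widetilde{\bm x}_\tc{b}=2\bm x_0-\bm x$ the original relation $F_\tc{a}(\bm x)=F_\tc{b}(2\bm x_0-\bm x)$ becomes an identity of functions. Hypothesis 3, $\chi_\tc{a}(\tau)=\chi_\tc{b}(\upsilon_\tc{r}-\tau)$, is just assumption 7, $\chi_\tc{a}(t)=\chi_\tc{b}(2t_{\tc{r}}-t)$, rewritten with $\upsilon_\tc{r}=2t_{\tc{r}}$; hypothesis 4 is assumption 6, $\Omega_\tc{a}=\Omega_\tc{b}$; and hypotheses 5 and 6 are handled exactly as in Lemma~\ref{lemma:sym2Cond34}: restricting both detectors to the ground state or both to the excited state makes $|\cos\alpha_\tc{a}|=|\cos\alpha_\tc{b}|$ and $|\sin\alpha_\tc{a}|=|\sin\alpha_\tc{b}|$ automatic, while the relative phases $\beta_j$ are free (they multiply a vanishing amplitude or an irrelevant global phase) and can be fixed to enforce $\beta_\tc{a}+\beta_\tc{b}=\Omega\upsilon_\tc{r}$.

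I expect the only genuinely subtle step to be the choice of prescription coordinates for detector B. The apparent tension flagged in the remark after Lemma~\ref{lemma:LambdaandMuSwapReflect}---that the lemma wants $F_\tc{a}=F_\tc{b}$ while the setup relates the $\Lambda_j$ by a spatial reflection---is resolved precisely by letting $\widetilde{\bm x}_\tc{b}$ be the reflected coordinate, and I would state explicitly that this is legitimate: for inertial comoving detectors in flat spacetime $\tau_j(t,\bm x)=t$ on the constant-$t$ foliation, and the constant reflection $\bm x\mapsto 2\bm x_0-\bm x$ is an isometry of the spatial slices, so $(\tau_\tc{b},\widetilde{\bm x}_\tc{b})$ is a valid coordinate system in which $\Lambda_\tc{b}$ factorizes. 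Once that choice is made, every remaining verification reduces to direct substitution, so no further obstacle arises.
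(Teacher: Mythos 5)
Your proposal is correct and follows essentially the same route as the paper: the paper's proof also proceeds analogously to the earlier Minkowski lemmas (invoking Lemma~\ref{lemma:MinkowskiGF} for conditions 1--2 and Lemma~\ref{lemma:LambdaandMuSwapReflect} for conditions 3--4), with the one nontrivial ingredient being exactly your coordinate choice $\tau_\tc{b}(t,\bm x) = t$, $\widetilde{\bm x}_\tc{b}(t,\bm x) = 2\bm x_0-\bm x$ for detector B. Your explicit verification of hypothesis 1 of Lemma~\ref{lemma:LambdaandMuSwapReflect} and of the reflected-coordinate identity $F_\tc{a}=\widetilde F_\tc{b}$ simply spells out details the paper leaves implicit.
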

\begin{proof}
    Shown analogously to Lemma~\ref{lemma:sym1Cond34}, using Lemma~\ref{lemma:LambdaandMuSwapReflect} instead of Lemma~\ref{lemma:LambdaandMu}. The major difference is that while we still choose the coordinate system $(\tau_\tc{a},\widetilde{\bm x}_\tc{a})$ to be $(t,\bm x)$, here we choose $(\tau_\tc{b},\widetilde{\bm x}_\tc{b})$ to be instead
    \begin{equation}
        \tau_\tc{b}(t,\bm x) = t,\ \widetilde{\bm x}_\tc{b}(t,\bm x) = 2\bm x_0-\bm x.
    \end{equation}
\end{proof}


\begin{lemma} \label{lemma:sym4_Simple} Assume that, for a given constant $t_\tc{r}$,
\begin{enumerate}
    \item The spacetime is flat.
    \item The field starts in the Minkowski vacuum.
    \item The detectors start both in their ground state or both in their excited state. 
    \item The detectors are inertial and comoving. 
    \item In the comoving frame $(t,\bm x)$, the spacetime smearing factorizes $\Lambda_j(t,\bm x)=\chi_j(t)F_j(\bm x)$.
    \item $\Omega_\tc{a} = \Omega_\tc{b}$.
    \item $\chi_\tc{a}(t)=\chi_\tc{b}(2t_{\tc{r}}-t)$.
\end{enumerate}
Then, the conditions of Symmetry~\ref{sym:propertimeReversalSwap} are fulfilled, by picking both coordinate systems $(\tau_j,\widetilde{\bm x}_j)$ to be $(t,\bm x)$.
\end{lemma}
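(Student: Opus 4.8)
The plan is to verify, one at a time, the six hypotheses of Symmetry~\ref{sym:propertimeReversalSwap}, working throughout in the common coordinate system $(\tau_j,\widetilde{\bm x}_j)=(t,\bm x)$ for both detectors. This identification is legitimate under the lemma's assumptions: since the detectors are inertial and comoving (assumption 4), in the comoving frame their proper time equals the coordinate time $t$, and the spacetime smearing already factorizes in $(t,\bm x)$ (assumption 5), matching the form $\Lambda_j(\tau_j,\widetilde{\bm x}_j)=\chi_j(\tau_j)F_j(\widetilde{\bm x}_j)$ that the symmetry presupposes. With this choice one computes the reversal parameter as $\upsilon_\tc{r}=\tau_\tc{a}(t_\tc{r},\bm 0)+\tau_\tc{b}(t_\tc{r},\bm 0)=2t_\tc{r}$, the value I would use everywhere below.

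First I would dispatch the two geometric conditions. Condition 1 (the metric-determinant identity) is immediate because flat spacetime in inertial coordinates has constant $g_j=-1$ (assumption 1), so both sides equal $1$. For condition 2 I would invoke Lemma~\ref{lemma:MinkowskiGF}, whose proof establishes that in the Minkowski vacuum $G_\tc{f}(t,\bm x,t',\bm x')$ depends only on $|\Delta t|$ and $|\Delta\bm x|$. Setting $(\tau_\tc{a},\widetilde{\bm x}_\tc{a})=(t,\bm x)$ and $(\tau_\tc{b},\widetilde{\bm x}_\tc{b})=(t',\bm x')$, the transformed arguments $(\upsilon_\tc{r}-t',\bm x,\upsilon_\tc{r}-t,\bm x')=(2t_\tc{r}-t',\bm x,2t_\tc{r}-t,\bm x')$ carry the same $|\Delta t|=|t-t'|$ and $|\Delta\bm x|=|\bm x-\bm x'|$ as the original pair, so the two propagators agree (equivalently, this is the third line of Eq.~\eqref{eq:symGF} followed by the point-exchange symmetry of $G_\tc{f}$ and its evenness in $\Delta\bm x$).

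Next I would check the detector-level conditions. Condition 3, $\chi_\tc{a}(\tau)=\chi_\tc{b}(\upsilon_\tc{r}-\tau)$, is precisely assumption 7 after substituting $\upsilon_\tc{r}=2t_\tc{r}$, and condition 4 is assumption 6 verbatim. Conditions 5 and 6 follow from assumption 3: both detectors start in the ground state ($\alpha_j=0$) or both in the excited state ($\alpha_j=\pi/2$). In either case $\alpha_\tc{a}=\alpha_\tc{b}$, so $|\cos\alpha_\tc{a}|=|\cos\alpha_\tc{b}|$ and $|\sin\alpha_\tc{a}|=|\sin\alpha_\tc{b}|$, establishing condition 6. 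I would also note here that Symmetry~\ref{sym:propertimeReversalSwap} imposes no condition relating the spatial smearings $F_j$, since its map $\bm T$ leaves the spatial coordinates untouched; this is why the lemma needs no assumption analogous to $F_\tc{a}=F_\tc{b}$.

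The step I expect to be the main obstacle is condition 5, $\beta_\tc{a}+\beta_\tc{b}=\Omega\upsilon_\tc{r}$, because the lemma places no hypothesis on the phases $\beta_j$. The resolution I would emphasize is that for $\alpha_j\in\{0,\pi/2\}$ the initial state sits at a pole of the Bloch sphere, so the relative phase $\beta_j$ in the parametrization $\ket{\psi_j}=\cos\alpha_j\ket{g_j}+\sin\alpha_j e^{\ii\beta_j}\ket{e_j}$ drops out of the physical state $\ket{\psi_j}\!\bra{\psi_j}$ and hence out of $\mathcal M$ and out of conditions 1--4 and 6; it is therefore a free gauge parameter that I would simply fix as $\beta_\tc{a}=\beta_\tc{b}=\Omega t_\tc{r}$, yielding $\beta_\tc{a}+\beta_\tc{b}=2\Omega t_\tc{r}=\Omega\upsilon_\tc{r}$. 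Having matched all six hypotheses of Symmetry~\ref{sym:propertimeReversalSwap}, the conclusion of the lemma follows.
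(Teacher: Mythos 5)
Your proposal is correct and follows essentially the same route as the paper's proof: identify both prescription coordinate systems with the comoving frame $(t,\bm x)$ so that $\upsilon_\tc{r}=2t_\tc{r}$, verify conditions 1 and 2 of Symmetry~\ref{sym:propertimeReversalSwap} from flatness and Lemma~\ref{lemma:MinkowskiGF}, read off conditions 3 and 4 from assumptions 7 and 6, and settle conditions 5 and 6 using that both detectors start in the ground (or excited) state. Your explicit treatment of condition 5 --- noting that $\beta_j$ is a redundant parameter at the poles of the Bloch sphere and may be fixed to $\Omega t_\tc{r}$ --- simply spells out what the paper compresses into ``trivially satisfied,'' so there is no substantive difference.
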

\begin{proof}
    Since we picked both $(\tau_j,\widetilde{\bm x}_j)$ to be the inertial comoving frame $(t,\bm x)$, then the determinants of the metric in coordinates  $(\tau_j,\widetilde{\bm x}_j)$ satisfy \mbox{$g_\tc{a} = g_\tc{b} = -1$} under the given assumptions, hence fulfilling the condition 1 of Symmetry~\ref{sym:propertimeReversalSwap}. The condition 2 over $G_\tc{f}$ becomes the following: for a given constant $t_\tc{r}=\upsilon_\tc{r}/2$,
    \begin{align}
        G_\tc{f}(t,\bm x, t',\bm x')=G_\tc{f}(2t_{\tc{r}}-t,\bm x,2t_{\tc{r}}-t',\bm x'),
    \end{align}
    which is true regardless of $t_\tc{r}$ for the Minkowski vacuum, as shown in Lemma~\ref{lemma:MinkowskiGF}.

    The conditions 3 and 4 of Symmetry~\ref{sym:propertimeReversalSwap} are the assumptions 6 and 7 of the current lemma, using the relationship $t_\tc{r}=\upsilon_\tc{r}/2$.

    Finally, condition 5 upon the relative phases $\beta_j$ is trivially satisfied because of the initial states of the detectors are either both the ground or both the excited state, and condition 6 (\mbox{$|\cos\alpha_\tc{a}|=|\cos\alpha_\tc{b}|$}, \mbox{$|\sin\alpha_\tc{a}|=|\sin\alpha_\tc{b}|$}) is fulfilled as well for the same reason.
\end{proof}

Finally,  Proposition~\ref{prop:switchings} of the main text follows from combining the assumptions of the Lemmas~\ref{lemma:sym1Cond34}, \ref{lemma:sym2Cond34}, \ref{lemma:sym3Cond34}, \ref{lemma:sym4_Simple}. 
Specifically, fulfilling any of these lemmas is enough to have the corresponding symmetry and thus for Corollary \ref{cor:symmetriesT} to hold. It is worth noticing that although we included one lemma per symmetry of Appendix~\ref{apx:ExaSym} for completion,  Lemmas~\ref{lemma:sym2Cond34} and \ref{lemma:sym3Cond34} become redundant after Lemma~\ref{lemma:sym4_Simple}, because whenever Lemmas~\ref{lemma:sym2Cond34} or \ref{lemma:sym3Cond34} hold, then Lemma \ref{lemma:sym4_Simple} also holds. This explains why no condition upon $F_j(\bm x)$ needs to be checked in Proposition~\ref{prop:switchings}. Thus the assumptions of Proposition~\ref{prop:switchings} result from solely combining the assumptions of Lemmas~\ref{lemma:sym1Cond34} and \ref{lemma:sym4_Simple}. Moreover, we take $\Omega_\tc{a}=\Omega_\tc{b}$ to be a condition of Proposition~\ref{prop:switchings} for simplicity, even though this condition is not needed to satisfy Lemma~\ref{lemma:sym1Cond34}. In general, the conditions 5 and 6 of the Proposition~\ref{prop:switchings} can be replaced by the following less restrictive combined condition: in the comoving frame, for some time $t_{\tc{r}}$, it is satified that either
\begin{align}
    &\chi_j(t+t_{\tc{r}})=\chi_j(t_{\tc{r}}-t),\ j=A,B,\label{eq:reverse_apx}\\
    &\text{or }\;\chi_\tc{a}(t+t_{\tc{r}})=\chi_\tc{b}(t-t_{\tc{r}}) \text{ and } \Omega_\tc{a}=\Omega_\tc{b}.\label{eq:swapandreverse_apx}
\end{align}


For future convenience, we point out that, in the scenarios where Proposition~\ref{prop:switchings} applies and both detectors start in the ground state, $\mathcal M$ simplifies to
\begin{align}
    \mathcal M&=-\lambda^2 \int \dd t \dd{t'} \dd{\bm x} \dd {\bm x'} e^{\ii \Omega (t + t')} \chi_\tc{a}(t)\chi_\tc{b}(t')\nonumber\\
    &\qquad\qquad\times F_\tc{a}(\bm x)F_\tc{b}(\bm x') G_{\tc{f}}(t,\bm x,t', \bm x').\label{eq:Mforprop2}
\end{align}
If both detectors start in the excited state, $\Omega$ in the expression above picks up a negative sign. 

\subsection{Sufficient symmetries for constructive interference between harvesting and communication in FRW spacetimes}
\label{apx:proveFRWprop}

Here, we show that Proposition~\ref{prop:switchingsFRW} for FRW spacetimes holds for the entanglement harvesting scenarios prescribed in the subsection \ref{sec:deSitter}. The proof arises from the fact that $\mathcal M$ can be rewritten to have an analogous form to the $\mathcal M$ term of setups where Proposition~\ref{prop:switchings} applies. Concretely, the only changes are the replacement given in Eq.~\eqref{eq:replacement} and the substitution of $t$ by $\eta(t)$ in $G_\tc{f}$. Moreover, using the assumption $\Lambda_j(t,\bm x)= \mathfrak{T}_j(t)\mathfrak{X}_j(\bm x)$ results in
\begin{align}
    \mathcal M &=-\lambda^2 \int \dd t \dd{t'} \dd{\bm x} \dd {\bm x'} e^{\ii \Omega (t + t')} \mathfrak{T}_\tc{a}(t)\mathfrak{T}_\tc{b}(t')\nonumber\\
    &\qquad\qquad\times \mathfrak{X}_\tc{a}(\bm x)\mathfrak{X}_\tc{b}(\bm x')\frac{G_{\tc{f}}\big(\eta(t),\bm x,\eta(t'), \bm x'\big)}{a(t)a(t')}.\label{eq:MforFRW?}
\end{align}
Here, $G_\tc{f}$ is the usual expression in inertial coordinates of the Feynmann propagator for the Minkowski vacuum, as given in the proof of Lemma~\ref{lemma:MinkowskiGF}. 
Notice that this expression for $\mathcal M$ assumes that both detectors start in the ground state. The case where both detectors are prepared in the excited state amounts to considering $\Omega < 0$. Therefore, this case is automatically included in Eq.~\eqref{eq:MforFRW?}, and in the current proof. 

Notice that if we define
\begin{equation}
    \widetilde{G}_\tc{f}(t,\bm x,t, \bm x')=\frac{G_{\tc{f}}\big(\eta(t),\bm x,\eta(t'), \bm x'\big)}{a(t)a(t')},
\end{equation}
then the expression for $\mathcal M$ becomes the same as for the scenarios of Proposition~\ref{prop:switchings} (see Eq.~\eqref{eq:Mforprop2}), but with $\widetilde{G}_\tc{f}$, $\mathfrak{T}_j(t)$, $\mathfrak{X}_j(\bm x)$ instead of $G_{\tc{f}}$, $\chi_j(t)$, $F_j(\bm x)$. Therefore, to get the analogous Proposition~\ref{prop:switchingsFRW}, we need $\widetilde{G}_\tc{f}$ and $\mathfrak{T}_j(t)$ to fulfill the corresponding conditions specified in the Symmetries~\ref{sym:tR} and \ref{sym:propertimeReversalSwap}, which were the ones used to derive Proposition~\ref{prop:switchings}. Notice that now, to define the Symmetry~\ref{sym:propertimeReversalSwap}, we need to use the comoving coordinates $(t,\bm x)$ as prescription coordinates for the detectors. This is a consequence of assuming that the split $\Lambda_j(t,\bm x)= \mathfrak{T}_j(t)\mathfrak{X}_j(\bm x)$ happens in the comoving coordinates, and that the detectors' proper times coincide with the comoving time $t$.

The conditions over $\mathfrak{T}_j(t)$ are directly stated in Proposition~\ref{prop:switchingsFRW} and are the same as for $\chi_j(t)$ in Proposition~\ref{prop:switchings}. Furthermore, since condition 2 of Symmetries~\ref{sym:tR} and \ref{sym:propertimeReversalSwap} was needed to prove Proposition~\ref{prop:switchings}, here we also need to verify whether $\widetilde{G}_\tc{f}$ satisfies to the same condition. Namely, we would like to show that
\begin{align}
    \widetilde{G}_\tc{f}(t,\bm x, t',\bm x')
    &=\widetilde{G}_\tc{f}(2 t_R-t,\bm x, 2 t_R-t',\bm x'),\nonumber\\
    &=\widetilde{G}_\tc{f}(2 t_R-t',\bm x, 2 t_R-t,\bm x').\label{eq:condsTildeGF}
\end{align}
To show that this relation is indeed fulfilled, first notice that the second equal sign can be readily checked by substituting the definition of $\widetilde{G}_\tc{f}$ and using that $G_\tc{f}$ fulfills the same equality, as shown in Lemma~\ref{lemma:MinkowskiGF}. Second, we need to show that $\widetilde{G}_\tc{f}$ is time symmetric around $t_\tc{R}$. For this proof, we start with the hypothesis of  Proposition~\ref{prop:switchingsFRW} that $a(t)=a(2t_\tc{r}-t)$. Then, the conformal time (as defined in Eq.~\eqref{eq:defConformalTime}) fulfills
\begin{equation}
    \eta(t)=2\eta(t_\tc{r})-\eta(2t_\tc{r}-t),
\end{equation}
which can be readily verified by taking the derivative with respect to $t$. Then,
\begin{align}
    &\widetilde{G}_\tc{f}(t,\bm x, t',\bm x')\nonumber\\
    &=\frac{G_{\tc{f}}\big(\eta(t),\bm x,\eta(t'), \bm x'\big)}{a(t)a(t')}\nonumber\\
    &=\frac{G_{\tc{f}}\big(2\eta(t_\tc{r})-\eta(2t_\tc{r}-t),\bm x,2\eta(t_\tc{r})-\eta(2t_\tc{r}-t'), \bm x'\big)}{a(2t_\tc{r}-t)a(2t_\tc{r}-t')}\nonumber\\
    &=\frac{G_{\tc{f}}\big(\eta(2t_\tc{r}-t),\bm x,\eta(2t_\tc{r}-t'), \bm x'\big)}{a(2t_\tc{r}-t)a(2t_\tc{r}-t')}\nonumber\\
    &=\widetilde{G}_\tc{f}(2t_\tc{r}-t,\bm x, 2t_\tc{r}-t',\bm x'),
\end{align}
where we used the first line of Eq.~\eqref{eq:symGF} for the Minkowski vacuum $G_\tc{f}$. This completes the proof that Eq.~\eqref{eq:condsTildeGF} holds under the hypothesis that \mbox{$a(t)=a(2t_\tc{r}-t)$}. Bringing everything together, an analogous version of Proposition~\ref{prop:switchings} holds for FRW spacetimes, resulting in Proposition~\ref{prop:switchingsFRW}.

\section{Alternative symmetry conditions for detectors prescribed in Fermi normal coordinates}
\label{apx:SymmFermiNormal}

\subsection{Conditions for the time reversibility and spatial reflection symmetry of Fermi normal coordinates}
\label{apx:FermiNormal}
Here we shall study how pairs of Fermi normal coordinates are related by a time reversal or a time reversal combined with a spatial reflection. Knowing the symmetries respected by the Fermi normal coordinates is instrumental in determining the symmetries of the entanglement harvesting setup. This is the case when the harvesting setup uses the usual covariant prescription for the Unruh-DeWitt detector~\cite{TalesBrunoEdu2020} which is based on the Fermi normal coordinates adapted to the detector trajectory $\mf z(\tau)$. Concretely, given an arbitrary coordinate system $(t,\bm x)$, we will consider the transformations:
\begin{equation}
    T_{\tc{r}}(t,\bm x) = (2t_{\tc{r}}-t,\bm x),\ T_\tc{rr}(t,\bm x) = (2t_{\tc{r}}-t,2\bm x_0-\bm x),\label{eq:TrTrr_apxdef}
\end{equation}
where time reversals are performed around the time $t_{\tc{r}}$, and the spatial reflection is performed about the point $\bm x_0$. The lemmas in this appendix provide the conditions on the metric and trajectories for the pair of Fermi normal coordinates to be related either by $T_{\tc{r}}$ or by $T_\tc{rr}$, which will be later used in Appendix~\ref{apx:FNCpropositions}.

First, we recall how to construct the Fermi normal coordinates $(\tau,\bm X)$ adapted to a timelike curve $\mf z(\tau)$, where $\tau$ is the curve's proper time. Choose a fixed proper time $\tau_0$, and pick an orthonormal basis
$\big\{\mf{e}_\mu(\tau_0)\big\}$ in the tangent space to $\mf z(\tau_0)$, such that $\mf{e}_0 = \dot{\mf z}(\tau_0)$. This basis extends to the whole curve, $\big\{\mf{e}_\mu(\tau)\big\}$, by imposing that the basis vectors are Fermi-Walker (FW)  transported along the curve. A vector field $v^\alpha$ on the curve is FW transported if
\begin{equation}
    \frac{\text{D} v^\alpha}{\dd \tau} + (a^\alpha u^\beta - u^\alpha a^\beta) v_\beta = 0,\label{eq:FW_transport}
\end{equation}
where $u^\alpha$ and $a^\alpha$ are the components of $\mf u = \dd \mf z /\dd \tau$ and \mbox{$\mf a = \text{D} \mf u /\dd \tau$}, the four-velocity and four-acceleration of the curve, respectively. Moreover, we use $\text{D}/\dd \tau$ to denote the directional covariant derivative along $\mf z(\tau)$, which for a vector $ \mf v = v^\alpha \partial_{\alpha}$ is
\begin{equation}
    \frac{\text{D} v^\alpha}{\dd \tau} = \frac{\dd v^\alpha}{\dd \tau} + \Gamma^\alpha_{\beta\gamma} v^\beta u^\gamma.
\end{equation}
Notably, $\mf u$ is always FW transported, therefore $\mf{e}_0(\tau) = \mf{u}(\tau)$. Moreover, $\big\{\mf{e}_\mu(\tau)\big\}$ remains an orthonormal basis for each point of the curve. The idea behind the definition of the FW transport is to generalize the parallel transport to account for the possible change in orientation introduced by the acceleration of the trajectory. That is, we would like to transport the frame $\{\mf e_{\mu}(\tau)\}$ in a way that it does not rotate with respect to the curve. To see this concretely, consider coordinates where $u^{\mu} = (u^{0}, 0, 0, 0)$, and define the two-form  $\omega_{\alpha \beta} = 2a_{[\alpha}u_{\beta]}$. In this case, the only non-vanishing components of $\omega_{\alpha \beta}$ are $\omega_{i0} = -\omega_{0i}$, $i=1, 2, 3$, since $a^{\mu}u_{\mu} = 0$. Thus, the second term in Eq.~\eqref{eq:FW_transport} is essentially describing the projection of $v^{\mu}$ along the $3$-vector $n^{i} = \omega^{i0}$, which describes the rotation of the orthonormal ``spatial'' frame $\mf e_{i}$ along the trajectory.

Now, we define the Fermi normal coordinates as follows. Let $\mathcal N_{\mf p}$ be the normal neighborhood of $\mf p$, which is the set of points that are connected to $\mf p$ by a unique geodesic. For each $\tau$, denote with $\Sigma_\tau$ the space-like hypersurface orthogonal to $\mf{u}(\tau)$. This hypersurface consists of all the points in $\mathcal N_{\mf z(\tau)}$ that can be reached by geodesics that start from $\mf z(\tau)$ and start with tangent vectors orthogonal to $\mf{u}(\tau)$. These hypersurfaces $\Sigma_\tau$ constitute rest spaces around $\mf z(\tau)$, and locally define a foliation of spacetime. Then, the coordinates $(\tau,\bm X)$ are assigned to the point $\exp_{\mf z(\tau)}\big(X^a \mf{e}_a(\tau)\big)\in \Sigma_\tau$. Here, we use the convention that latin indices run over spatial components, $a=1,\ldots,n$. The exponential map is defined as
\begin{equation}
    \exp_{\mf p}\big( \mf v\big) = \gamma_{\mf v}(1),
\end{equation}
where $\gamma_{\mf v}(s)$ is a geodesic that fulfills $\gamma_{\mf v}(0)= \mf p$, \mbox{$\dot\gamma_{\mf v}(0)= \mf v$}. Notice that in this prescription $\mf z(\tau)$ has coordinates $(\tau, \bm 0)$, and its proper distance to a point with coordinates $(\tau,\bm X)$ is $|\bm X|=\sqrt{\sum_a (X^a)^2}$.

\begin{lemma} \label{lemma:FNreversible}
    Choose arbitrary $(t, \bm x)$ coordinates, in which the metric components are $g_{\mu\nu}$. Consider a pair of time-like curves $\mf z_j(\tau_j)$, $j=A,B$, parametrized by their proper times $\tau_j$. Denote the coordinates of the curves' points as $\big(t,\bm z_j(t)\big)$. Assume that for some $t_{\tc{r}}$,
    \begin{enumerate}
        \item $g_{\mu\nu}(t_{\tc{r}}+t,\bm x) = s(\mu, \nu) g_{\mu\nu}(t_{\tc{r}}-t,\bm x)$,
        \item $\bm z_\tc{a}(t_{\tc{r}}+t)=\bm z_\tc{b}(t_{\tc{r}}-t)$,
    \end{enumerate}
    Here, $s(\mu_1, \mu_{2}, \ldots, \mu_{n})= 1$ if the number of zeros among the set of arguments $\mu_{1}, \ldots, \mu_{n}$ is even and $s(\mu_1, \mu_{2}, \ldots, \mu_{n})= -1$ otherwise.
    Then, there exist Fermi normal coordinates $(\tau_j,\widetilde{\bm x}_j)$ adapted to the trajectories $\mf z_j(\tau_j)$ that are related by
    \begin{align}
        \tau_\tc{a}(t_{\tc{r}}+t, \bm x)&  = \upsilon_{\tc{r}} -\tau_\tc{b}(t_{\tc{r}}-t, \bm x), \nonumber\\
        \widetilde{\bm x}_\tc{a}(t_{\tc{r}}+t, \bm x)& = \widetilde{\bm x}_\tc{b}(t_{\tc{r}}-t, \bm x),\label{eq:RelateFNC}
    \end{align}
    with $\upsilon_\tc{r} = \tau_{{\tc{r}},\tc{a}}+\tau_{\tc{r},\tc{b}}$, 
    where $\tau_{\tc{r},j}$ are the proper times that fulfill $t_{\tc{r}} = z^0_j(\tau_{\tc{r},j})$.
    
\end{lemma}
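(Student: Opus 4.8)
The plan is to recognize that hypotheses~1 and~2 together say a single geometric map --- the time reversal about $t_{\tc{r}}$ --- is an isometry that interchanges the two trajectories, and then to exploit the equivariance of the Fermi normal coordinate construction under isometries. Throughout, write $R$ for the coordinate map $R(t_{\tc{r}}+t,\bm x)=(t_{\tc{r}}-t,\bm x)$, i.e. the $T_{\tc{r}}$ of Eq.~\eqref{eq:TrTrr_apxdef}. First I would show hypothesis~1 is precisely the statement that $R$ is an isometry. The Jacobian of $R$ has $\partial R^0/\partial t=-1$ and $\partial R^i/\partial x^j=\delta^i_j$, so the pullback satisfies $(R^*g)_{00}=g_{00}\circ R$, $(R^*g)_{0i}=-g_{0i}\circ R$ and $(R^*g)_{ij}=g_{ij}\circ R$. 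Evaluating at $(t_{\tc{r}}+t,\bm x)$ and inserting hypothesis~1, the parity of the Jacobian signs matches exactly the factors $s(0,0)=s(i,j)=+1$, $s(0,i)=-1$, giving $R^*g=g$.

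Next I would use hypothesis~2 to track the trajectories. The point of curve B at coordinate time $t_{\tc{r}}-s$ is sent by $R$ to $(t_{\tc{r}}+s,\bm z_\tc{b}(t_{\tc{r}}-s))=(t_{\tc{r}}+s,\bm z_\tc{a}(t_{\tc{r}}+s))$, so $R$ maps $\mf z_\tc{b}$ onto $\mf z_\tc{a}$ as point sets, carrying the event of proper time $\tau_\tc{b}$ on B to an event of proper time $\tau_\tc{a}$ on A. Because $R$ is an isometry it preserves arc length, and because it reverses the time orientation (so that $R_*\mf u_\tc{b}=-\mf u_\tc{a}$) the two proper-time parametrizations run oppositely, giving $\mathrm d\tau_\tc{a}=-\mathrm d\tau_\tc{b}$; matching the base point $s=0$ fixes the constant and yields $\tau_\tc{a}=\upsilon_\tc{r}-\tau_\tc{b}$ with $\upsilon_\tc{r}=\tau_{{\tc{r}},\tc{a}}+\tau_{\tc{r},\tc{b}}$.

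Then I would build the frame on A from the one on B by setting the spatial legs $\mf{e}^{\tc{a}}_{a}(\tau_\tc{a})\coloneqq R_*\mf{e}^{\tc{b}}_{a}(\upsilon_\tc{r}-\tau_\tc{a})$. This is orthonormal and orthogonal to $\mf u_\tc{a}$ (both because $R$ is an isometry with $R_*\mf u_\tc{b}=-\mf u_\tc{a}$), so it is a legitimate spatial FW frame along A. The relation~\eqref{eq:RelateFNC} then follows from the equivariance $R\big(\exp_{\mf p}(\mf v)\big)=\exp_{R(\mf p)}(R_*\mf v)$: for a point $\mf q$ at coordinates $(t_{\tc{r}}-t,\bm x)$ with B-coordinates $(\tau_\tc{b},\bm X)$, i.e. $\mf q=\exp_{\mf z_\tc{b}(\tau_\tc{b})}(X^a\mf{e}^{\tc{b}}_{a})$, applying $R$ gives $R(\mf q)=\exp_{\mf z_\tc{a}(\upsilon_\tc{r}-\tau_\tc{b})}(X^a\mf{e}^{\tc{a}}_{a})$, so the point $R(\mf q)$ at coordinates $(t_{\tc{r}}+t,\bm x)$ has A-coordinates $(\upsilon_\tc{r}-\tau_\tc{b},\bm X)$. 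Reading off the two components reproduces $\tau_\tc{a}(t_{\tc{r}}+t,\bm x)=\upsilon_\tc{r}-\tau_\tc{b}(t_{\tc{r}}-t,\bm x)$ and $\widetilde{\bm x}_\tc{a}(t_{\tc{r}}+t,\bm x)=\widetilde{\bm x}_\tc{b}(t_{\tc{r}}-t,\bm x)$.

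The main obstacle is the careful bookkeeping of the time-orientation reversal. Since $R_*\mf u_\tc{b}=-\mf u_\tc{a}$, the pushed-forward time-leg points into A's past, so one must verify that the FW transport equation~\eqref{eq:FW_transport} is invariant under the proper-time reversal $\tau_\tc{b}\mapsto\upsilon_\tc{r}-\tau_\tc{b}$ in order to know that $\{R_*\mf{e}^{\tc{b}}_{a}\}$ is genuinely FW-transported along A. This holds because under $\tau\mapsto-\tau$ one has $\mf u\mapsto-\mf u$ while $\mf a=\mathrm D\mf u/\mathrm d\tau\mapsto\mf a$, so the antisymmetric tensor $a^\alpha u^\beta-u^\alpha a^\beta$ flips sign in step with $\mathrm D/\mathrm d\tau$, leaving Eq.~\eqref{eq:FW_transport} unchanged; the same observation shows $\pm\mf u_\tc{a}$ are both FW-transported, so extracting only the spatial legs is consistent. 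Once this invariance is established the construction is self-consistent and the coordinate identities follow without further computation.
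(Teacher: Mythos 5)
Your proof is correct, and it reaches the paper's conclusion by a genuinely more abstract route. The paper's proof is a hands-on coordinate computation: it derives component-wise transformation rules under $t\mapsto 2t_{\tc{r}}-t$ for the inverse metric, the metric derivatives, the Christoffel symbols, and the four-velocities and accelerations (its relations i)--v)), then defines the partner frame explicitly as $(\mf{f}_\mu)^\alpha = s(\mu,\alpha)[(\mf{e}_\mu)^\alpha]_{\tc{r}}$, verifies by direct substitution that this frame satisfies the Fermi--Walker equation along $\mf z_\tc{b}$, and finally checks by substitution into the geodesic equation that $T_{\tc{r}}\circ\gamma_\tc{a}$ is the geodesic generating the Fermi coordinates of the image point, concluding $T_{\tc{r}}(\mf p_\tc{a})=\mf p_\tc{b}$. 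You instead recognize hypothesis~1 as the statement $T_{\tc{r}}^*g=g$ (the $s(\mu,\nu)$ signs being exactly the Jacobian parities), hypothesis~2 as the statement that this isometry carries curve B onto curve A with reversed orientation, and then invoke the naturality of the entire Fermi construction under isometries: preservation of orthonormality, of FW transport, and the equivariance $R\circ\exp_{\mf p}=\exp_{R(\mf p)}\circ R_*$. Your frame $\mf{e}^{\tc{a}}_a = R_*\mf{e}^{\tc{b}}_a$ is in fact the same frame as the paper's $\mf f_\mu$ (the paper's extra $s(\mu)$ factor just flips the pushed-forward time leg back to $+\mf u$, which you accomplish by pushing forward only the spatial legs). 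The one genuine subtlety — the orientation reversal, $R_*\mf u_\tc{b}=-\mf u_\tc{a}$ and $\dd\tau_\tc{a}=-\dd\tau_\tc{b}$ — you handle correctly by checking that the FW equation is invariant under proper-time reversal, which is the coordinate-free counterpart of the paper's sign bookkeeping. What each approach buys: yours is shorter, explains \emph{why} the lemma holds (Fermi normal coordinates are natural under isometries), and generalizes immediately to the reflected case of Lemma~\ref{lemma:FNreversibleReflect}; the paper's is self-contained, requiring no appeal to the standard (but unproved there) facts that isometries preserve the Levi-Civita connection, geodesics, and FW transport — facts which its computations i)--v) effectively establish by hand. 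If you wanted to make your argument airtight at the paper's level of rigor, the only step to spell out is precisely that naturality claim for FW transport under an isometry, since you verify only the reparametrization half of it.
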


\begin{proof}
    Let us start defining a convenient notation: for any function $f(t)$, let
    \begin{equation}
        [f(t)]_{{\tc{r}}} = f(2t_\tc{r} -t).
    \end{equation}
    Then, the assumptions of the lemma can be simply rewritten as
    \begin{equation}
          [g_{\mu\nu}]_{\tc{R}} = s(\mu,\nu) g_{\mu\nu},\  [{\bm z}_\tc{a}]_{\tc{r}}=[\bm z_\tc{b}]_{\tc{r}}.
    \end{equation}
    Next, let us show that these assumptions imply that
    \begin{enumerate}[label=\roman*)]
        \item \label{it:inv} $[g^{\mu\nu}]_{\tc{r}} = s(\mu,\nu)g^{\mu\nu}$, 
        \item \label{it:derg} $ [\partial_\lambda g_{\mu\nu}]_{\tc{r}} = s(\mu, \nu, \lambda)\partial_\lambda g_{\mu\nu}$,
        \item \label{it:chris} $[\Gamma_{\mu\nu}^\lambda]_{\tc{r}} = s(\mu, \nu, \lambda)\Gamma_{\mu\nu}^\lambda$,
        \item \label{it:speed} $[ u^\mu_\tc{a}]_{\tc{r}} = -s(\mu) u^\mu_\tc{b}$,
        \item \label{it:acc} $[a^\mu_\tc{a}]_{\tc{r}} = s(\mu) a^\mu_\tc{b}$.
    \end{enumerate}
    Showing that $s(\mu,\nu)g^{\mu\nu}$ is the inverse of $[g_{\mu\nu}]_{\tc{r}}$ is enough for \ref{it:inv},
    \begin{equation}
        s(\mu,\nu)g^{\mu\nu}[ g_{\nu\lambda}]_{\tc{r}} = s(\mu,\lambda) g^{\mu\nu}g_{\nu\lambda} = \delta^\lambda_\mu.
    \end{equation}
    For \ref{it:derg}, it is enough to use \ref{it:inv} and to notice that for a general function $h(t,\bm x)$,
    \begin{equation}
        \partial_0[h]_{\tc{r}}= - [\partial_0 h]_{\tc{r}},\ \partial_a[h]_{\tc{r}}= [\partial_a h]_{\tc{r}}.\label{eq:timeDerRel}
    \end{equation}
    Relation \ref{it:chris} follows from combining \ref{it:inv} and \ref{it:derg} with the formula for the Christoffel symbols, namely
    \begin{equation}
        \Gamma_{\mu\nu}^\lambda=\frac{1}{2} g^{\lambda \sigma} \left(\partial_{\nu}g_{\sigma \mu} + \partial_{\mu}g_{\sigma \nu} - \partial_{\sigma}g_{\mu \nu}\right).
    \end{equation}
    Proving \ref{it:speed} starts from taking the derivative on the trajectory condition, to obtain $[{\dot {\bm z}}_\tc{a}]_{\tc{r}}=-\dot{\bm z}_\tc{b}$. Then, using
    \begin{equation}
        u_j^0 = \frac{\dd t}{\dd \tau_j},\ u_j^a =  \frac{\dd t}{\dd \tau_j} \dot {\bm z}_j^a,\label{eq:speedsRel}
    \end{equation}
    together with $g_{\mu\nu} u_j^\mu u_j^\nu  =-1$, implies
    \begin{equation}
        \frac{\dd t}{\dd \tau_j} = \sqrt{-g_{00}-2g_{a0}\dot{\bm z}_j^a-g_{ab}\dot{\bm z}_j^a\dot{\bm z}_j^b}.
    \end{equation}
    Therefore, using $[{\dot {\bm z}}_\tc{a}]_{\tc{r}}=-\dot{\bm z}_\tc{b}$ and \ref{it:derg}, we get
    \begin{equation}
        \left[\frac{\dd t}{\dd \tau_\tc{a}}\right]_{\tc{r}} = \frac{\dd t}{\dd \tau_\tc{b}}.\label{eq:gammaFactorRel}
    \end{equation}
    Then, substituting back on Eq.~\eqref{eq:speedsRel} allows showing \ref{it:speed}.
    Finally, to prove \ref{it:acc}, combine 
    \begin{equation}
        a_j^\alpha = \frac{\dd t}{\dd \tau_j}\frac{\dd u_j^\alpha}{\dd t} + \Gamma^\alpha_{\beta\gamma} u_j^\beta u_j^\gamma
    \end{equation}
    with Eqs.~\eqref{eq:timeDerRel},~\eqref{eq:gammaFactorRel}, and relations \ref{it:chris}, \ref{it:speed}.
    
    Next, we prove that the FW transported orthonormal bases along the curves can be related by a time reversal transformation.  Consider an orthonormal basis $\{\mf{e}_\mu(t)\}$ that is FW transported along $\mf z_\tc{a}$ and fulfills $\mf{e}_0=\mf{u}_\tc{a}$. Define $\mf{f}_\mu$ such that
    \begin{equation}
        (\mf{f}_\mu)^\alpha = s(\mu,\alpha)[(\mf{e}_\mu)^\alpha]_{\tc{r}}.\label{eq:ONbasisRel}
    \end{equation}
    We now will show that this basis is FW transported along $\mf z_\tc{b}$.
    Let us start reversing the FW transport equation that the $\{\mf{e}_\mu\}$ fulfill,
    \begin{align}
        0&=\left[\frac{\text{D} (\mf{e}_\mu)^\alpha}{\dd \tau_\tc{a}} + (a_\tc{a}^\alpha u_\tc{a}^\beta - u_\tc{a}^\alpha a_\tc{a}^\beta) g_{\beta\gamma}(\mf{e}_\mu)^\gamma\right]_{\tc{r}} \nonumber\\
        &=-s(\mu,\alpha) \bigg(\frac{\text{D} (\mf{f}_\mu)^\alpha}{\dd \tau_\tc{b}} + (a_\tc{b}^\alpha u_\tc{b}^\beta - u_\tc{b}^\alpha a_\tc{b}^\beta) g_{\beta\gamma}(\mf{f}_\mu)^\gamma\bigg),
    \end{align}
    where we used the relations \ref{it:derg}, \ref{it:speed}, \ref{it:acc}, and also
    \begin{align}
        \left[\frac{\text{D} (\mf{e}_\mu)^\alpha}{\dd \tau_\tc{a}}\right]_{\tc{r}} &= \left[\frac{\dd t}{\dd \tau_\tc{a}}\frac{\dd (\mf{e}_\mu)^\alpha}{\dd t} + \Gamma^\alpha_{\beta\gamma} (\mf{e}_\mu)^\beta u_\tc{a}^\gamma \right]_{\tc{r}}\nonumber\\
        &=-s(\mu,\alpha)\frac{\text{D} (\mf{f}_\mu)^\alpha}{\dd \tau_\tc{b}},
    \end{align}
    which used Eqs.~\eqref{eq:timeDerRel},~\eqref{eq:gammaFactorRel},~\eqref{eq:ONbasisRel}, and relations \ref{it:chris}, \ref{it:speed}. Therefore, the $\{\mf{f}_\mu\}$, as defined in Eq.~\eqref{eq:ONbasisRel} are FW transported along $\mf z_\tc{b}$. Moreover, the $\{\mf{f}_\mu\}$ are orthonormal because
    \begin{equation}
        (\mf{f}_\mu)^\alpha (\mf{f}_\nu)^\beta g_{\alpha\beta} = \left[(\mf{e}_\mu)^\alpha (\mf{e}_\nu)^\beta g_{\alpha\beta}\right]_{\tc{r}} = \eta_{\mu\nu},
    \end{equation}
    where $\eta_{\mu\nu}$ is the  Minkowski metric. Lastly, $\mf{f}_0 = \mf{u}_\tc{b}$, because 
    \begin{equation}
        (\mf{f}_0)^\alpha = -s(\alpha)[(\mf{e}_0)^\alpha]_{\tc{r}} = -s(\alpha) \left[ u_\tc{a}^\alpha \right]_{\tc{r}}= u_\tc{b}^\alpha.
    \end{equation}

    All these results guarantee that the orthonormal basis $\{\mf{f}_\mu(t)\}$ can be used to define the Fermi normal coordinates along $\mf z_\tc{b}$. From now on, let us use $\{\mf{e}_\mu\}$ and $\{\mf{f}_\mu\}$ to respectively define the Fermi normal coordinates $(\tau_\tc{a},\widetilde{\bm x}_\tc{a})$ around $\mf z_\tc{a}(\tau_\tc{a})$ and $(\tau_\tc{b},\widetilde{\bm x}_\tc{b})$ around $\mf z_\tc{b}(\tau_\tc{b})$. 

    Next, we check the relationships in Eq.~\eqref{eq:RelateFNC} between $(\tau_\tc{a},\widetilde{\bm x}_\tc{a})$ and $(\tau_\tc{b},\widetilde{\bm x}_\tc{b})$. 
    For any spacetime point $\mf p_\tc{a}$, we define $\mf p_\tc{b}$ as follows. 
    We denote the Fermi normal coordinates $j$ of the point $\mf p_j$ as
    \begin{equation}
         \tau_{\mf p_j}=\tau_j(\mf p_j),\ \widetilde{\bm x}_{\mf p_j}=\widetilde{\bm x}_j(\mf p_j).\label{eq:shortennotationpj}
    \end{equation}
    Then, $\mf p_\tc{b}$ is chosen so that the coordinates of $\mf p_\tc{a}$ and $\mf p_\tc{b}$ are related by
    \begin{equation}
        \tau_{\mf p_\tc{a}} = \upsilon_{\tc{r}} -\tau_{\mf p_\tc{b}},\ \widetilde{\bm x}_{\mf p_\tc{a}}=\widetilde{\bm x}_{\mf p_\tc{b}}, \label{eq:properTimesRel}
    \end{equation}
    with $\upsilon_{\tc{r}} = \tau_{\tc{r},\tc{a}}+\tau_{\tc{r},\tc{b}}$, where $\tau_{\tc{r},j}$ are the proper times that fulfill $t_{\tc{r}} = z^0_j(\tau_{\tc{r},j})$. 
    By expressing the points $\mf p_j$ in the coordinate system $(t,\bm x)$ on Eqs.~\eqref{eq:shortennotationpj} and~\eqref{eq:properTimesRel}, one can see that the Eq.~\eqref{eq:RelateFNC} that we want to prove is equivalent to showing $T_\tc{r}(\mf p_\tc{a})=\mf p_\tc{b}$, with $T_\tc{r}$ as defined in \eqref{eq:TrTrr_apxdef}. 
    Therefore, to complete the proof, it will be enough to show that $T_\tc{r}(\mf p_\tc{a})=\mf p_\tc{b}$ is true, as is done next. 
    
    Consider a geodesic $\gamma_\tc{a}(\lambda)$ such that
    \begin{equation}
         \gamma_\tc{a}(0)=\mf z_\tc{a}(\tau_{\mf p_\tc{a}}),\ 
        \dot\gamma_\tc{a}(0) = \widetilde{x}_{\mf p_\tc{a}}^a \mf{e}_a\big(z^0_\tc{a}(\tau_{\mf p_\tc{a}})\big).
    \end{equation}
    Notice that, by the definition of the Fermi normal coordinates, we have $\gamma_\tc{a}(1)=\mf p_\tc{a}$.
    Next, define $\gamma_\tc{b}(\lambda)$ as
    \begin{equation}
        \gamma_\tc{b}(\lambda) = T_\tc{r}(\gamma_\tc{a}(\lambda)).\label{eq:defgammab}
    \end{equation}
    The next steps are to show that such $\gamma_\tc{b}(\lambda)$ is a geodesic that fulfills 
    \begin{equation}
        \gamma_\tc{b}(0)=\mf z_\tc{b}(\tau_{\mf p_\tc{b}}),\  \dot\gamma_\tc{b}(0)=\widetilde{x}_{\mf p_\tc{b}}^a \mf{f}_a\big(z^0_\tc{b}(\tau_{\mf p_\tc{b}})\big).\label{eq:gammabObjective}
    \end{equation}
    We start checking the first identity in the equation above by computing
    \begin{align}
        \frac{\dd}{\dd \tau}\Big(\big(\mf z_\tc{b}^0\big)^{-1}\big(2 t_{\tc{r}} - z^0_\tc{a}(\tau)\big)\Big)
        &=-\frac{u_\tc{a}^0\big(z^0_\tc{a}(\tau)\big)}{u_\tc{b}^0\big(2 t_{\tc{r}} - z^0_\tc{a}(\tau)\big)}&=-1,
    \end{align}
    where we used the statement \ref{it:speed} that we proved at the start of the proof. Therefore, for some  constant $\tau_0$,
    \begin{equation}
        z_\tc{b}^0(\tau_0-\tau) = 2 t_{\tc{r}} - z^0_\tc{a}(\tau). \label{eq:relzbza}
    \end{equation}
    To find $\tau_0$, choose $\tau=\tau_{\tc{r},\tc{a}}$, so that
    \begin{equation}
        z_\tc{b}^0(\tau_0-\tau_{\tc{r},\tc{a}}) = 2 t_{\tc{r}} - z^0_\tc{a}(\tau_{\tc{r},\tc{a}})=t_{\tc{r}},
    \end{equation}
    which means $\tau_0-\tau_{\tc{r},\tc{a}}=\tau_{\tc{r},\tc{b}}$, and thus \mbox{$\tau_0=\upsilon_\tc{r}$}. Substituting back to Eq.~\eqref{eq:relzbza}, and using $\tau_{\mf p_\tc{a}} = \upsilon_\tc{r} -\tau_{\mf p_\tc{b}}$,
    \begin{equation}
        z_\tc{b}^0(\tau_{\mf p_\tc{b}}) = 2 t_{\tc{r}} - z^0_\tc{a}(\tau_{\mf p_\tc{a}}),\label{eq:reverseza0zb0}
    \end{equation}
    and therefore, $\gamma_\tc{b}^0(0)= z_\tc{b}^0(\tau_{\mf p_\tc{b}})$. Moreover, we have \mbox{$\gamma_\tc{b}^a(0) = z_\tc{b}^a(\tau_{\mf p_\tc{b}})$} because $\gamma_\tc{b}^a(0)= z_\tc{a}^a(\tau_{\mf p_\tc{a}})$ and
    \begin{align}
        \bm z_\tc{a}\big(z^0_\tc{a}(\tau_{\mf p_\tc{a}})\big)=\bm z_\tc{b}\big(2t_{\tc{r}}-z^0_\tc{a}(\tau_{\mf p_\tc{a}})\big)=\bm z_\tc{b}\big(z^0_\tc{b}(\tau_{\mf p_\tc{b}})\big),
    \end{align}
    where we used the assumption $\bm z_\tc{a}(t)=\bm z_\tc{b}(2t_{\tc{r}}-t)$. This completes showing the claim from Eq.~\eqref{eq:gammabObjective} that \mbox{$\gamma_\tc{b}(0)=\mf z_\tc{b}(\tau_{\mf p_\tc{b}})$}.

    To get the identity for $\dot\gamma_\tc{b}(0)$ in Eq.~\eqref{eq:gammabObjective}, we start from the definition of $\gamma_\tc{b}$ given in Eq.~\eqref{eq:defgammab}. Expressing this definition in $(t,\bm x)$ coordinates gives
    \begin{equation}
        \gamma_\tc{b}^0(\lambda)=2t_\tc{r} - \gamma_\tc{a}^0(\lambda),\ \gamma_\tc{b}^a(\lambda)=\gamma_\tc{a}^a(\lambda).
    \end{equation}
    Then, taking derivatives w.r.t. $\lambda$, we get
    \begin{equation}
        \dot\gamma_\tc{b}^\alpha(\lambda) = s(\alpha) \dot\gamma_\tc{a}^\alpha(\lambda). \label{eq:derivative_gammab}
    \end{equation}
    Combining this result with Eqs.~\eqref{eq:ONbasisRel}, \eqref{eq:properTimesRel} and~\eqref{eq:reverseza0zb0} leads to
    \begin{align}
        \dot\gamma_\tc{b}^\alpha(0)&= s(\alpha)\dot\gamma_\tc{a}^\alpha(0)\nonumber\\
        &=s(\alpha)\widetilde{x}_{\mf p_\tc{a}}^a (\mf{e}_a)^\alpha\big(z^0_\tc{a}(\tau_{\mf p_\tc{a}})\big)\nonumber\\
        &=\widetilde{x}_{\mf p_\tc{b}}^a (\mf{f}_a)^\alpha\big(z^0_\tc{b}(\tau_{\mf p_\tc{b}})\big),
    \end{align}
    which, as we wanted to show, is the expression for $\dot\gamma_\tc{b}(0)$ in Eq.~\eqref{eq:gammabObjective}.
    
    Finally, let us show that $\gamma_\tc{b}$ is a geodesic. 
    Taking the derivative w.r.t. $\lambda$ in Eq.~\eqref{eq:derivative_gammab},
    \begin{equation}
        \dot\gamma_\tc{b}^\alpha(\lambda) = s(\alpha) \dot\gamma_\tc{a}^\alpha(\lambda),\ \ddot\gamma_\tc{b}^\alpha(\lambda) =s(\alpha) \ddot\gamma_\tc{a}^\alpha(\lambda).
    \end{equation}
    Then, substituting these expressions into the geodesic equation which $\gamma_\tc{a}$ fulfills, 
    \begin{align}
        \ddot \gamma_\tc{a}^\alpha &= -\Gamma^\alpha_{\beta\gamma}\big(\gamma^0_\tc{a},\gamma^a_\tc{a}\big)\,\dot \gamma_\tc{a}^\beta\dot \gamma_\tc{a}^\gamma,\nonumber\\
        s(\alpha)\ddot \gamma_\tc{b}^\alpha &= -s(\alpha)\Gamma^\alpha_{\beta\gamma}\big(2t_{\tc{r}}-\gamma^0_\tc{a},\gamma^a_\tc{a}\big)\,\dot \gamma_\tc{b}^\beta\dot \gamma_\tc{b}^\gamma,\nonumber\\
        \ddot \gamma_\tc{b}^\alpha &= -\Gamma^\alpha_{\beta\gamma}\big(\gamma^0_\tc{b},\gamma^a_\tc{b}\big)\,\dot \gamma_\tc{b}^\beta\dot \gamma_\tc{b}^\gamma,
    \end{align}
    where we also used the statement \ref{it:chris} and Eq.~\eqref{eq:defgammab}. The last line is, as we wanted, the geodesic equation for $\gamma_\tc{b}$.
    
    Therefore, $\gamma_\tc{b}$ matches the requirements of the exponential map definition and thus $\mf p_\tc{b}=\gamma_\tc{b}(1)$. Now, remember that $T_\tc{r}(\gamma_\tc{a}(\lambda))=\gamma_\tc{b}(\lambda)$. Then, choosing $\lambda=1$, $T_\tc{r}(\mf p_\tc{a})=\mf p_\tc{b}$. This finally completes the proof that 
    \begin{align}
        \tau_\tc{a}(t_{\tc{r}}+t, \bm x) = \upsilon_\tc{r} -\tau_\tc{b}(t_{\tc{r}}-t, \bm x), \nonumber\\
        \widetilde{\bm x}_\tc{a}(t_{\tc{r}}+t, \bm x) = \widetilde{\bm x}_\tc{b}(t_{\tc{r}}-t, \bm x).
    \end{align}
\end{proof}

\begin{lemma}\label{lemma:SingleFNCreversible}
    Choose arbitrary coordinates $(t, \bm x)$, in which the metric components are $g_{\mu\nu}$. Consider a time-like curve $\mf z(\tau)$, parametrized by its proper time $\tau$. Denote the coordinates of the curve's points as $\big(t,\bm z(t)\big)$. Assume that for some $t_{\tc{r}}$,
    \begin{enumerate}
        \item $g_{\mu\nu}(t_{\tc{r}}+t,\bm x) = s(\mu,\nu) g_{\mu\nu}(t_{\tc{r}}-t,\bm x)$,
        \item $\bm z(t_{\tc{r}}+t)=\bm z(t_{\tc{r}}-t)$,
    \end{enumerate}
    Here, $s(\mu_1, \mu_{2}, \ldots, \mu_{n})= 1$ if the number of zeros among the set of arguments $\mu_{1}, \ldots, \mu_{n}$ is even and $s(\mu_1, \mu_{2}, \ldots, \mu_{n})= -1$ otherwise.
    Then, the Fermi normal coordinates $(\tau,\widetilde{\bm x})$ adapted to the trajectory $\mf z(\tau)$ fulfill
    \begin{align}
        \tau(t_{\tc{r}}+t, \bm x)&  = 2\tau_{\tc{r}} -\tau(t_{\tc{r}}-t, \bm x), \nonumber\\
        \widetilde{\bm x}(t_{\tc{r}}+t, \bm x)& = \widetilde{\bm x}(t_{\tc{r}}-t, \bm x),\label{eq:RelateSingleFNC}
    \end{align}
    where $\tau_{\tc{r}}$ is the proper time that fulfills $t_{\tc{r}} = z^0(\tau_{\tc{r}})$.
\end{lemma}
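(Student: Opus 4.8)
The plan is to recognize that Lemma~\ref{lemma:SingleFNCreversible} is simply the single-trajectory specialization of Lemma~\ref{lemma:FNreversible}, obtained by setting $\mf z_\tc{a}=\mf z_\tc{b}=\mf z$. The cleanest approach is therefore to invoke the machinery already built for the two-curve case and collapse it onto a single curve. First I would set both trajectories equal, $\mf z_\tc{a}(\tau)=\mf z_\tc{b}(\tau)=\mf z(\tau)$, so that assumption~2 of the present lemma, $\bm z(t_{\tc{r}}+t)=\bm z(t_{\tc{r}}-t)$, is exactly assumption~2 of Lemma~\ref{lemma:FNreversible}. The metric assumption~1 is literally identical in both lemmas. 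Consequently $\tau_{\tc{r},\tc{a}}=\tau_{\tc{r},\tc{b}}=\tau_{\tc{r}}$, and the constant $\upsilon_\tc{r}=\tau_{\tc{r},\tc{a}}+\tau_{\tc{r},\tc{b}}$ from Lemma~\ref{lemma:FNreversible} becomes $\upsilon_\tc{r}=2\tau_{\tc{r}}$.

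Next I would observe that, with a single curve, the Fermi-Walker transported orthonormal frames coincide: the frame $\{\mf e_\mu\}$ used to build $(\tau_\tc{a},\widetilde{\bm x}_\tc{a})$ and the frame $\{\mf f_\mu\}$ used to build $(\tau_\tc{b},\widetilde{\bm x}_\tc{b})$ are built along the same trajectory from the same data, so $(\tau_\tc{a},\widetilde{\bm x}_\tc{a})=(\tau_\tc{b},\widetilde{\bm x}_\tc{b})=(\tau,\widetilde{\bm x})$. Substituting $\upsilon_\tc{r}=2\tau_{\tc{r}}$ into the conclusion~\eqref{eq:RelateFNC} of Lemma~\ref{lemma:FNreversible},
\begin{align}
    \tau_\tc{a}(t_{\tc{r}}+t,\bm x) &= \upsilon_\tc{r}-\tau_\tc{b}(t_{\tc{r}}-t,\bm x),\nonumber\\
    \widetilde{\bm x}_\tc{a}(t_{\tc{r}}+t,\bm x) &= \widetilde{\bm x}_\tc{b}(t_{\tc{r}}-t,\bm x),
\end{align}
yields immediately the desired relations~\eqref{eq:RelateSingleFNC}, namely $\tau(t_{\tc{r}}+t,\bm x)=2\tau_{\tc{r}}-\tau(t_{\tc{r}}-t,\bm x)$ and $\widetilde{\bm x}(t_{\tc{r}}+t,\bm x)=\widetilde{\bm x}(t_{\tc{r}}-t,\bm x)$.

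The main subtlety, and the only point requiring care, is verifying that the frame-relation construction of Lemma~\ref{lemma:FNreversible} is genuinely consistent when the two curves coincide, i.e.\ that the prescription $(\mf f_\mu)^\alpha=s(\mu,\alpha)[(\mf e_\mu)^\alpha]_{\tc{r}}$ applied to a single curve really reproduces the same Fermi-Walker frame rather than a distinct one. This is what guarantees that the two coordinate systems are identical and not merely isometric. I expect this to be straightforward because the intermediate identities~\ref{it:inv}--\ref{it:acc} established in the proof of Lemma~\ref{lemma:FNreversible} (the behavior of $g^{\mu\nu}$, $\partial_\lambda g_{\mu\nu}$, the Christoffel symbols, and the four-velocity and four-acceleration under the reversal $[\cdot]_{\tc{r}}$) hold verbatim once $\mf z_\tc{a}=\mf z_\tc{b}$, so the geodesic argument showing $T_\tc{r}(\mf p)=\mf p'$ goes through unchanged with $\mf p_\tc{a}$ and $\mf p_\tc{b}$ now labeling points of the same coordinate grid. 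Thus the statement follows as a corollary of Lemma~\ref{lemma:FNreversible}, and no new calculation beyond the substitution $\upsilon_\tc{r}\mapsto 2\tau_{\tc{r}}$ is needed.
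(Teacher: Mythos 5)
Your overall route is the same as the paper's: both proofs specialize Lemma~\ref{lemma:FNreversible} to the case $\mf z_\tc{a}=\mf z_\tc{b}=\mf z$, note that $\upsilon_\tc{r}=\tau_{\tc{r},\tc{a}}+\tau_{\tc{r},\tc{b}}=2\tau_\tc{r}$, and read off Eq.~\eqref{eq:RelateSingleFNC} from Eq.~\eqref{eq:RelateFNC}. However, there is a genuine gap at exactly the point you flag and then defer. Lemma~\ref{lemma:FNreversible} produces \emph{two} Fermi normal charts $(\tau_\tc{a},\widetilde{\bm x}_\tc{a})$ and $(\tau_\tc{b},\widetilde{\bm x}_\tc{b})$ adapted to the same curve, built from the frames $\{\mf{e}_\mu\}$ and $(\mf{f}_\mu)^\alpha=s(\mu,\alpha)[(\mf{e}_\mu)^\alpha]_{\tc{r}}$; these are \emph{not} ``built from the same data,'' since the second is obtained from the first by time-reversing its components. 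Your justification for why they nonetheless coincide --- that the intermediate identities for $g^{\mu\nu}$, $\Gamma^\lambda_{\mu\nu}$, $u^\mu$ and $a^\mu$ hold verbatim --- does not close the issue: those identities only show that $\{\mf{f}_\mu\}$ is \emph{some} orthonormal Fermi--Walker frame along $\mf z$ with $\mf{f}_0=\mf{u}$. A Fermi--Walker frame is fixed only once its value at one instant is fixed, so a priori $\{\mf{f}_\mu\}$ could differ from $\{\mf{e}_\mu\}$ by a constant spatial rotation or reflection, in which case the two charts would be isometric but distinct, and the self-relation~\eqref{eq:RelateSingleFNC} for a \emph{single} chart would not follow. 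Saying that $\mf p_\tc{a}$ and $\mf p_\tc{b}$ ``label points of the same coordinate grid'' assumes precisely what must be proven.

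The paper closes this gap with a short argument you are missing: evaluating Eq.~\eqref{eq:RelateFNC} at $t=0$ gives $\widetilde{\bm x}_\tc{a}(t_\tc{r},\bm x)=\widetilde{\bm x}_\tc{b}(t_\tc{r},\bm x)$, i.e.\ the two charts agree on the entire spatial leaf through $\mf z(\tau_\tc{r})$; since a Fermi normal chart is determined by its spatial frame on one leaf (Fermi--Walker transport then propagates that frame uniquely), the two charts must be one and the same, and Eq.~\eqref{eq:RelateFNC} becomes the desired self-relation. Alternatively, you could verify the frame coincidence directly at $t_\tc{r}$: assumption 1 at $t=0$ forces $g_{0a}(t_\tc{r},\bm x)=0$, and assumption 2 forces $\dot{\bm z}(t_\tc{r})=0$, so orthogonality of $\mf e_a$ to $\mf u$ yields $(\mf{e}_a)^0(t_\tc{r})=0$; hence the sign flips $s(\mu,\alpha)$ act trivially at $t_\tc{r}$, giving $\mf{f}_\mu(t_\tc{r})=\mf{e}_\mu(t_\tc{r})$, and uniqueness of Fermi--Walker transport then gives $\mf{f}_\mu\equiv\mf{e}_\mu$. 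Either argument is needed to legitimately collapse the two charts into one; with that step supplied, the rest of your proof is correct.
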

\begin{proof}
   The assumptions of this lemma fulfill the hypotheses of Lemma~\ref{lemma:FNreversible} for the case $ \mf z(\tau)= \mf z_\tc{a}(\tau)=\mf z_\tc{b}(\tau)$. Then, Lemma~\ref{lemma:FNreversible} provides two coordinate systems $(\tau_\tc{a},\widetilde{\bm x}_\tc{a})$ and $(\tau_\tc{b},\widetilde{\bm x}_\tc{b})$ adapted to $\mf z(\tau)$, which are related by the Eq.~\eqref{eq:RelateFNC}. A priori, these two coordinate systems could be different, because there is freedom in the choice of orthonormal basis that defines any Fermi normal coordinates adapted to $\mf z(\tau)$. However, Eq.~\eqref{eq:RelateFNC} implies that
   \begin{align}
       \widetilde{\bm x}_\tc{a}(t_{\tc{r}}, \bm x)& = \widetilde{\bm x}_\tc{b}(t_{\tc{r}}, \bm x),
   \end{align}
   which means that there is a spatial surface where both Fermi normal coordinates $(\tau_\tc{a},\widetilde{\bm x}_\tc{a})$ and $(\tau_\tc{b},\widetilde{\bm x}_\tc{b})$ adapted to $\mf z(\tau)$ are the same. This can only happen if $(\tau_\tc{a},\widetilde{\bm x}_\tc{a})$ and $(\tau_\tc{b},\widetilde{\bm x}_\tc{b})$ are the same coordinate systems, thus proving that there are Fermi normal coordinates adapted to $\mf z(\tau)$ that fulfill Eq.~\eqref{eq:RelateSingleFNC}, finishing the proof.

\end{proof}

\begin{lemma} \label{lemma:FNreversibleReflect}
    Choose arbitrary coordinates $(t, \bm x)$, in which the metric components are $g_{\mu\nu}$. Consider a pair of time-like curves $\mf z_j(\tau_j)$, $j=A,B$, parametrized by their proper times $\tau_j$. Denote the coordinates of the curves' points as $\big(t,\bm z_j(t)\big)$. Assume that for some $t_{\tc{r}}$,
    \begin{enumerate}
        \item $g_{\mu\nu}(t_{\tc{r}}+t,\bm x_0+\bm x) = g_{\mu\nu}(t_{\tc{r}}-t,\bm x_0-\bm x)$,
        \item $\bm z_\tc{a}(t_{\tc{r}}+t)=2\bm x_0-\bm z_\tc{b}(t_{\tc{r}}-t)$.
    \end{enumerate}
    Then, there exist Fermi normal coordinates $(\tau_j,\widetilde{\bm x}_j)$ adapted to the trajectories $\bm z_j(t)$ that are related by
    \begin{align}
        \tau_\tc{a}(t_{\tc{r}}+t, \bm x_0+\bm x) = \upsilon_\tc{r} -\tau_\tc{b}(t_{\tc{r}}-t, \bm x_0-\bm x), \nonumber\\
        \widetilde{\bm x}_\tc{a}(t_{\tc{r}}+t, \bm x_0+\bm x) = \widetilde{\bm x}_\tc{b}(t_{\tc{r}}-t, \bm x_0-\bm x),
    \end{align}
     with $\upsilon_\tc{r} = \tau_{\tc{r},\tc{a}}+\tau_{\tc{r},\tc{b}}$, where $\tau_{\tc{r},j}$ are the proper times that fulfill $t_{\tc{r}} = z^0_j(\tau_{\tc{r},j})$.
\end{lemma}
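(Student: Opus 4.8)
The plan is to mirror the proof of Lemma~\ref{lemma:FNreversible} essentially line by line, replacing the pure time reversal $T_{\tc{r}}$ by the combined time-reversal-and-reflection $T_{\tc{rr}}(t,\bm x)=(2t_{\tc{r}}-t,2\bm x_0-\bm x)$ and updating every sign factor. As before I write $[f(t)]_{\tc{r}}=f(2t_{\tc{r}}-t)$ for a function of coordinate time along a curve, and introduce $[h]_{\tc{rr}}=h\circ T_{\tc{rr}}$ for a spacetime field. The first remark I would make is why Assumption~1 now carries \emph{no} $s(\mu,\nu)$ factor: under $T_{\tc{rr}}$ the Jacobian is $\partial(T_{\tc{rr}})^\mu/\partial x^\nu=-\delta^\mu_\nu$, so each mixed component $g_{0a}$ receives one sign from the time reversal and one from the reflection and is left invariant; hence Assumption~1 is simply $[g_{\mu\nu}]_{\tc{rr}}=g_{\mu\nu}$.

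I would then reprove the intermediate relations \ref{it:inv}--\ref{it:acc} in their $T_{\tc{rr}}$ form. Since \emph{all} coordinate derivatives now flip, these become
\begin{align}
&[g^{\mu\nu}]_{\tc{rr}}=g^{\mu\nu},\quad [\partial_\lambda g_{\mu\nu}]_{\tc{rr}}=-\partial_\lambda g_{\mu\nu},\quad [\Gamma^\lambda_{\mu\nu}]_{\tc{rr}}=-\Gamma^\lambda_{\mu\nu},\nonumber\\
&[u^\mu_\tc{a}]_{\tc{r}}=u^\mu_\tc{b},\qquad [a^\mu_\tc{a}]_{\tc{r}}=-a^\mu_\tc{b}.\nonumber
\end{align}
The arguments are the same in spirit as in Lemma~\ref{lemma:FNreversible}: differentiating $[g_{\mu\nu}]_{\tc{rr}}=g_{\mu\nu}$ yields the derivative relation, the Christoffel formula inherits a single sign, and for the four-velocity one differentiates Assumption~2 to get $[\dot{\bm z}_\tc{a}]_{\tc{r}}=\dot{\bm z}_\tc{b}$ and combines it with $[\dd t/\dd\tau_\tc{a}]_{\tc{r}}=\dd t/\dd\tau_\tc{b}$. (In Lemma~\ref{lemma:FNreversible} the sign of $g_{a0}$ cancelled that of $\dot{\bm z}_\tc{a}$ in the $\dd t/\dd\tau$ radicand; here neither flips, so the equality is preserved all the same.) The acceleration then reverses entirely because both the $\dd u/\dd t$ term (time-derivative sign) and the $\Gamma u u$ term (relation for $\Gamma$) each contribute a minus.

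Next I would Fermi--Walker transport an orthonormal frame $\{\mf e_\mu\}$ along $\mf z_\tc{a}$ into a frame $\{\mf f_\mu\}$ along $\mf z_\tc{b}$ via $(\mf f_\mu)^\alpha=-s(\mu)[(\mf e_\mu)^\alpha]_{\tc{r}}$, and check the three requirements. Substituting this into Eq.~\eqref{eq:FW_transport} and using the relations above, the overall factor $-s(\mu)$ pulls out of the whole transport equation, so transport is preserved; orthonormality follows from $[g_{\mu\nu}]_{\tc{rr}}=g_{\mu\nu}$ and $(-s(\mu))^2=1$; and $\mf f_0=\mf u_\tc{b}$ follows from $-s(0)=1$ together with $[u^\mu_\tc{a}]_{\tc{r}}=u^\mu_\tc{b}$. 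With this frame defining $(\tau_\tc{b},\widetilde{\bm x}_\tc{b})$, I take the geodesic $\gamma_\tc{a}(\lambda)$ generating a point $\mf p_\tc{a}$ and set $\gamma_\tc{b}=T_{\tc{rr}}(\gamma_\tc{a})$, so that $\dot\gamma_\tc{b}^\alpha=-\dot\gamma_\tc{a}^\alpha$ for every $\alpha$. The geodesic equation for $\gamma_\tc{b}$ is recovered because the minus from $[\Gamma]_{\tc{rr}}=-\Gamma$ and the two minuses from the tangents combine to the correct overall sign, while the $\lambda=0$ data reproduce $\gamma_\tc{b}(0)=\mf z_\tc{b}(\tau_{\mf p_\tc{b}})$ and $\dot\gamma_\tc{b}(0)=\widetilde{x}^a_{\mf p_\tc{b}}\mf f_a$; hence $T_{\tc{rr}}(\mf p_\tc{a})=\mf p_\tc{b}$, which is exactly the asserted relation between the two Fermi normal charts.

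The delicate point, and the step I would verify most carefully, is the sign in the spatial legs of the frame relation. Choosing $(\mf f_a)^\alpha=-s(a)[(\mf e_a)^\alpha]_{\tc{r}}=-[(\mf e_a)^\alpha]_{\tc{r}}$ (opposite to the naive $+$) is precisely what makes the reflected geodesic tangent $\dot\gamma_\tc{b}^a=-\dot\gamma_\tc{a}^a$ match $\widetilde{x}^a_{\mf p_\tc{b}}\mf f_a$ with $\widetilde{\bm x}_{\mf p_\tc{b}}=\widetilde{\bm x}_{\mf p_\tc{a}}$, so that \emph{no} spurious reflection appears in the tilde coordinates. The Fermi--Walker and orthonormality checks are blind to this sign (any $\pm$ works there), so it is pinned down only at the geodesic-tangent matching step; getting it wrong would incorrectly send $\widetilde{\bm x}_\tc{b}\mapsto-\widetilde{\bm x}_\tc{b}$. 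Everything else is routine bookkeeping carried over from Lemma~\ref{lemma:FNreversible}.
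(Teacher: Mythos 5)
Your proposal is correct and follows the same strategy as the paper's own proof: establish the $T_{\tc{rr}}$ versions of the auxiliary relations for $g^{\mu\nu}$, $\partial_\lambda g_{\mu\nu}$, $\Gamma^\lambda_{\mu\nu}$, $u^\mu_j$, $a^\mu_j$, transport a frame from curve A to curve B, and rerun the exponential-map argument with $\gamma_\tc{b}=T_{\tc{rr}}\circ\gamma_\tc{a}$. In fact, your treatment of the ``delicate point'' is sharper than the paper's: the paper's frame relation, Eq.~\eqref{eq:ONbasisRelwithRefl}, reads $(\mf{f}_\mu)^\alpha=[(\mf{e}_\mu)^\alpha]_{\tc{r}}$ with no sign, whereas your relation $(\mf{f}_\mu)^\alpha=-s(\mu)[(\mf{e}_\mu)^\alpha]_{\tc{r}}$ flips the spatial legs. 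Your sign is the right one: since under $T_{\tc{rr}}$ every tangent component reverses, $\dot\gamma^\alpha_\tc{b}(0)=-\dot\gamma^\alpha_\tc{a}(0)$, taking the paper's relation literally at the tangent-matching step would yield $\widetilde{\bm x}_\tc{b}\big(T_{\tc{rr}}(\mf x)\big)=-\widetilde{\bm x}_\tc{a}(\mf x)$, i.e.\ a spuriously reflected chart, contradicting the unreflected equality claimed in the lemma (and needed downstream in Lemma~\ref{lemma:LambdaandMuSwapReflect}); your choice produces the stated equality, while leaving the Fermi--Walker, orthonormality, and $\mf f_0=\mf u_\tc{b}$ checks untouched, exactly as you argue.
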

\begin{proof}
    This lemma is proven analogously to Lemma~\ref{lemma:FNreversible}, so we will only point out the differences. Here, it is convenient to instead use the following notation for any functions $f(t)$, $h(t,\bm x)$,
    \begin{align}
        &[f(t)]_{\tc{r}} = f(2t_{\tc{r}}-t),\nonumber\\
        &[h(t,\bm x)]_{\tc{rr}} = h(2t_{\tc{r}}-t,2\bm x_0-\bm x).
    \end{align}
    Then, following similar steps to the previous proof,
    \begin{enumerate}[label=\roman*)]
        \item $[g_{\mu\nu}]_{\tc{r}} = g_{\mu\nu}$,
        \item $[\partial_\lambda g_{\mu\nu}]_{\tc{rr}} = -\partial_\lambda g_{\mu\nu}$,
        \item $[\Gamma_{\mu\nu}^\lambda]_{\tc{rr}} = -\Gamma_{\mu\nu}^\lambda$,
        \item $[u^\mu_\tc{a}]_{\tc{r}} = u^\mu_\tc{b}$,
        \item $[a^\mu_\tc{a}]_{\tc{r}} = - a^\mu_\tc{b}$.
    \end{enumerate}
    Notice that these relations are useful because the trajectories transform into one another when performing a time reversal and a spatial reflection. These relations can be used to find a pair of orthonormal basis $\{\mf{e}_\mu(t)\}$ and $\{\mf{f}_\mu(t)\}$, which respectively are FW transported along $\mf z_\tc{a}(\tau_\tc{a})$ and $\mf z_\tc{b}(\tau_\tc{b})$, and that fulfill
    \begin{equation}
        [(\mf{e}_\mu)^\alpha]_{\tc{r}} =  (\mf{f}_\mu)^\alpha. \label{eq:ONbasisRelwithRefl}
    \end{equation}
    Then, we again pick $\{\mf{e}_\mu(t)\}$ and $\{\mf{f}_\mu(t)\}$ to respectively define the Fermi normal coordinates adapted to $\mf z_\tc{a}(\tau_\tc{a})$ and $\mf z_\tc{b}(\tau_\tc{b})$. The final part of the proof is also analogous to Lemma~\ref{lemma:FNreversible}, but changing Eq.~\eqref{eq:defgammab} to relate $\gamma_\tc{a}$ and $\gamma_\tc{b}$ by $T_\tc{rr}$ instead of $T_\tc{r}$.
\end{proof}

\subsection{Simplified conditions to satisfy Corollary \ref{cor:symmetriesT} when detectors are prescribed in the Fermi normal coordinates}
\label{apx:FNCpropositions}

Here we combine the results obtained in the previous Subappendix with the results of the Appendices~\ref{apx:ExaSym} and \ref{apx:symcoordinates} to get simplified propositions that guarantee no destructive interference between harvesting and communication. This section assumes that we follow the covariant UDW detector prescription of \cite{TalesBrunoEdu2020}, where the detector $j$ is prescribed in the Fermi normal coordinates $(\tau_j,\widetilde{\bm x}_j)$ adapted to its trajectory $\mf z_j(\tau_j)$. These coordinates are valid locally around the trajectory and factorize the smearing function, \mbox{$\Lambda_j(\tau_j,\widetilde{\bm x}_j)=\chi_j(\tau_j)F_j(\widetilde{\bm x}_j)$}. Moreover, $\tau_j$ are the proper times of the detectors. 

\begin{prop} \label{prop:finalForm?}
Let $g_{\mu \nu}$ be the metric components in an arbitrary coordinate system $(t,\bm x)$, $\bm z_j(t)$ the spatial components of the detectors' trajectories and $(\tau_j,\widetilde{\bm x}_j)$ the Fermi normal coordinates adapted to $\bm z_j(t)$. Assume that
    \begin{enumerate}
        \item $g_{\mu\nu}(t_{\tc{r}}-t,\bm x) = s(\mu,\nu)g_{\mu\nu}(t_{\tc{r}}+t,\bm x)$,
        \item $G_\tc{f}(t_{\tc{r}}-t,\bm x, t_{\tc{r}}-t',\bm x')=G_\tc{f}(t_{\tc{r}}+t,\bm x, t_{\tc{r}}+t',\bm x')$,
        \item $\bm z_j(t_{\tc{r}}-t)=\bm z_j(t_{\tc{r}}+t)$,
        \item $\Lambda_j(\tau_j,\widetilde{\bm x}_j)=\chi_j(\tau_j)F_j(\widetilde{\bm x}_j)$,
        \item $\chi_j(\tau_{\tc{r},j}-\tau_j)=\chi_j(\tau_{\tc{r},j}+\tau_j)$,
        \item $\beta_j = \Omega_j\tau_{\tc{r},j}$,
    \end{enumerate}
    where $\tau_{\tc{r},j} = \tau_j(t_{\tc{r}}, \bm z_j(t_{\tc{r}}))$, and $s(\mu,\nu)$ is the sign function defined in Lemma~\ref{lemma:SingleFNCreversible}. Moreover, recall that the detectors' initial states are $\ket{\psi_j}=\cos\alpha_j\ket{g_j}+\sin\alpha_j e^{\ii \beta_j}\ket{e_j}$.
    
    These assumptions are sufficient to fulfill the conditions given in Symmetry \ref{sym:tR}, thus fulfilling Corollary~\ref{cor:symmetriesT}.
\end{prop}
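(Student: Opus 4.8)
The plan is to show that assumptions 1--6 of the proposition imply the four conditions listed in Symmetry~\ref{sym:tR}, since that symmetry already packages the conclusion that Corollary~\ref{cor:symmetriesT} holds. Conditions 1 and 2 of Symmetry~\ref{sym:tR} (time-reversal invariance of the metric determinant and of the Feynman propagator) are the quick part. Condition 2 is literally assumption 2, so nothing needs to be done there. For condition 1, I would observe that the sign pattern $s(\mu,\nu)$ appearing in assumption 1 is exactly the one produced by conjugating the metric matrix with $D=\mathrm{diag}(-1,1,\dots,1)$, i.e.\ $s(\mu,\nu)g_{\mu\nu}=(D\,g\,D)_{\mu\nu}$: the purely-temporal entry $g_{00}$ and the purely-spatial entries $g_{ab}$ are unchanged while the mixed entries $g_{0a}$ flip sign. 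Taking determinants of $g_{\mu\nu}(t_{\tc{r}}-t,\bm x)=(D\,g(t_{\tc{r}}+t,\bm x)\,D)_{\mu\nu}$ and using $\det(D)^2=1$ gives $g(t_{\tc{r}}-t,\bm x)=g(t_{\tc{r}}+t,\bm x)$, which is condition 1.

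The substantive part is conditions 3 and 4 of Symmetry~\ref{sym:tR} (the reflection relations for the smearings $\Lambda_j$ and for the monopole matrix elements). Rather than attacking these directly, I would invoke Lemma~\ref{lemma:LambdaandMu}, which states precisely that these two conditions follow from its three hypotheses. Two of those hypotheses are immediate: hypothesis 2 of Lemma~\ref{lemma:LambdaandMu}, namely $\chi_j(\tau)=\chi_j(2\tau_{\tc{r},j}-\tau)$, is just a reparametrization of assumption 5 (substitute $\tau=\tau_{\tc{r},j}+\tau_j$, so that $2\tau_{\tc{r},j}-\tau=\tau_{\tc{r},j}-\tau_j$), and hypothesis 3, $\beta_j=\Omega_j\tau_{\tc{r},j}$, is assumption 6 verbatim.

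The only nontrivial hypothesis to verify is hypothesis 1 of Lemma~\ref{lemma:LambdaandMu}, namely that the Fermi normal coordinates reflect correctly, $\tau_j(t_{\tc{r}}+t,\bm x)=2\tau_{\tc{r},j}-\tau_j(t_{\tc{r}}-t,\bm x)$ and $\widetilde{\bm x}_j(t_{\tc{r}}+t,\bm x)=\widetilde{\bm x}_j(t_{\tc{r}}-t,\bm x)$. Here I would apply Lemma~\ref{lemma:SingleFNCreversible} to each detector separately: its two hypotheses are exactly assumption 1 (the $s(\mu,\nu)$ sign relation on the metric components, equivalent to the form in the lemma because $s(\mu,\nu)^2=1$) and assumption 3 (the trajectory satisfies $\bm z_j(t_{\tc{r}}+t)=\bm z_j(t_{\tc{r}}-t)$), and its conclusion is exactly the desired coordinate reflection. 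With all three hypotheses of Lemma~\ref{lemma:LambdaandMu} met, conditions 3 and 4 follow, and together with conditions 1 and 2 this establishes Symmetry~\ref{sym:tR} and hence Corollary~\ref{cor:symmetriesT}.

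I expect the main obstacle to be bookkeeping rather than genuine difficulty, since the deep content is already contained in Lemma~\ref{lemma:SingleFNCreversible}, which we are free to assume. The delicate point is to make explicit that the three different-looking conventions for $\tau_{\tc{r},j}$ (the ``$\tau_j(t_{\tc{r}},\bm 0)$'' of Lemma~\ref{lemma:LambdaandMu}, the ``$\tau_j(t_{\tc{r}},\bm z_j(t_{\tc{r}}))$'' of the present proposition, and the ``$z^0_j(\tau_{\tc{r},j})$'' of Lemma~\ref{lemma:SingleFNCreversible}) all denote the same proper time of the trajectory at coordinate time $t_{\tc{r}}$, because the detector sits at $\widetilde{\bm x}_j=\bm 0$ in its own Fermi frame. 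Once this identification is made explicit, the argument is a clean chaining of the two lemmas.
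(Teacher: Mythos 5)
Your proposal is correct and follows essentially the same route as the paper's own (very terse) proof: verify condition 2 of Symmetry~\ref{sym:tR} trivially, get condition 1 from the $s(\mu,\nu)$ assumption on the metric components, and obtain conditions 3 and 4 by chaining Lemma~\ref{lemma:SingleFNCreversible} (applied to each trajectory) into hypothesis 1 of Lemma~\ref{lemma:LambdaandMu}. Your explicit determinant argument via $D=\mathrm{diag}(-1,1,\dots,1)$ and your flagging of the $\tau_{\textsc{r},j}$ convention match are useful elaborations of steps the paper leaves implicit, not deviations from it.
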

This proposition follows from fulfilling the hypotheses of Lemma~\ref{lemma:SingleFNCreversible} in order to get the time reversibility condition over the Fermi normal coordinates in Eq.~\eqref{eq:hypLemmaLmu1} of Lemma~\ref{lemma:LambdaandMu} and, in turn, fulfill conditions 3 and 4 of Symmetry~\ref{sym:tR}. Moreover, notice that condition 1 of Symmetry~\ref{sym:tR}, i.e., $g(t_{\tc{r}}-t,\bm x)=g(t_{\tc{r}}+t,\bm x)$, is fulfilled due to the assumption of the proposition on the components $g_{\mu\nu}$.


\begin{prop} \label{prop:finalForm2}
    Let $g_{\mu \nu}$ be the metric components in an arbitrary coordinate system $(t,\bm x)$, $\bm z_j(t)$ the spatial components of the detectors' trajectories and $(\tau_j,\widetilde{\bm x}_j)$ the Fermi normal coordinates adapted to $\bm z_j(t)$. Assume that
    \begin{enumerate}
        \item $g_{\mu\nu}(t_{\tc{r}}-t,\bm x) = s(\mu,\nu)g_{\mu\nu}(t_{\tc{r}}+t,\bm x)$,
        \item $G_\tc{f}(t_{\tc{r}}-t,\bm x, t_{\tc{r}}-t',\bm x')=G_\tc{f}(t_{\tc{r}}+t',\bm x', t_{\tc{r}}+t,\bm x)$,
        \item $\bm z_\tc{a}(t_{\tc{r}}-t)=\bm z_\tc{b}(t_{\tc{r}}+t)$,
        \item $\widetilde{\bm x}_\tc{a}(t_{\tc{r}}, \bm x) = \widetilde{\bm x}_\tc{b}(t_{\tc{r}}, \bm x)$,
        \item $\Lambda_j(\tau_j,\widetilde{\bm x}_j)=\chi_j(\tau_j)F_j(\widetilde{\bm x}_j)$,
        \item $\chi_\tc{a}(\tau)=\chi_\tc{b}(\upsilon_\tc{r}-\tau)$,
        \item $F_\tc{a}(\widetilde{\bm x})=F_\tc{b}(\widetilde{\bm x})$,
        \item $\Omega_\tc{a} = \Omega_\tc{b}=\Omega$,
        \item $\beta_\tc{a}+\beta_\tc{b}=\Omega \upsilon_\tc{r}$, 
        \item $|\cos \alpha_\tc{a}|=|\cos \alpha_\tc{b}|$, $|\sin \alpha_\tc{a}|=|\sin \alpha_\tc{b}|$,
    \end{enumerate}
    where $\upsilon_\tc{r} = \tau_\tc{a}(t_{\tc{r}}, \bm z_\tc{a}(t_{\tc{r}}))+\tau_\tc{b}(t_{\tc{r}}, \bm z_\tc{b}(t_{\tc{r}}))$ and $s(\mu,\nu)$ is the sign function defined in Lemma~\ref{lemma:FNreversible}. Moreover, recall that the detectors' initial states are expressed as \mbox{$\ket{\psi_j}=\cos\alpha_j\ket{g_j}+\sin\alpha_j e^{\ii \beta_j}\ket{e_j}$}.
    
    These assumptions are sufficient to fulfill the conditions given in Symmetry \ref{sym:ReversalPtSwap}, thus fulfilling Corollary~\ref{cor:symmetriesT}.
\end{prop}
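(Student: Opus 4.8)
The plan is to verify the four hypotheses of Symmetry~\ref{sym:ReversalPtSwap} in turn; once all four hold, Corollary~\ref{cor:symmetriesT} applies and the conclusion follows immediately. Two of them (the conditions on the metric determinant and on the Feynman propagator) are essentially restatements of the proposition's assumptions, while the two involving the smearings $\Lambda_j$ and the monopole matrix elements will be routed through the Fermi-normal-coordinate machinery of Lemmas~\ref{lemma:FNreversible} and~\ref{lemma:LambdaandMuSwap}.

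First I would dispose of conditions~1 and~2 of Symmetry~\ref{sym:ReversalPtSwap}. Condition~2, namely $G_\tc{f}(t_{\tc{r}}+t',\bm x', t_{\tc{r}}+t,\bm x)=G_\tc{f}(t_{\tc{r}}-t,\bm x, t_{\tc{r}}-t',\bm x')$, is verbatim assumption~2 of the proposition. For condition~1, I need the metric determinant $g$ to satisfy $g(t_{\tc{r}}+t,\bm x)=g(t_{\tc{r}}-t,\bm x)$. Writing $s(\mu,\nu)=\epsilon_\mu\epsilon_\nu$ with $\epsilon_0=1$ and $\epsilon_i=-1$, assumption~1 reads $g_{\mu\nu}(t_{\tc{r}}-t,\bm x)=(DgD)_{\mu\nu}(t_{\tc{r}}+t,\bm x)$ with $D=\mathrm{diag}(\epsilon_\mu)$, so the two determinants differ only by the factor $\det(D)^2=1$; hence condition~1 holds.

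The substance lies in conditions~3 and~4. The strategy is to show that the Fermi normal coordinates $(\tau_j,\widetilde{\bm x}_j)$ in which the detectors are prescribed obey the time-reversal-and-swap relation~\eqref{eq:hypLemmaLmu2}, after which Lemma~\ref{lemma:LambdaandMuSwap} yields conditions~3 and~4 directly. To obtain~\eqref{eq:hypLemmaLmu2}, I would invoke Lemma~\ref{lemma:FNreversible}, whose hypotheses are exactly assumption~1 (the sign-weighted metric reversal, equivalent to the lemma's statement since $s=\pm1$) and assumption~3, $\bm z_\tc{a}(t_{\tc{r}}+t)=\bm z_\tc{b}(t_{\tc{r}}-t)$. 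That lemma then produces Fermi normal coordinates related by~\eqref{eq:RelateFNC}, which is precisely~\eqref{eq:hypLemmaLmu2}. The remaining hypotheses of Lemma~\ref{lemma:LambdaandMuSwap}---the factorization $\Lambda_j=\chi_j F_j$, equality of spatial smearings $F_\tc{a}=F_\tc{b}$, the switching relation $\chi_\tc{a}(\tau)=\chi_\tc{b}(\upsilon_\tc{r}-\tau)$, equal gaps, $\beta_\tc{a}+\beta_\tc{b}=\Omega\upsilon_\tc{r}$, and the equalities of $|\cos\alpha_j|$ and $|\sin\alpha_j|$---are supplied by assumptions~5,~7,~6,~8,~9 and~10, so the lemma applies and gives conditions~3 and~4.

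The main obstacle, and the reason assumption~4 is needed, is the freedom in the choice of Fermi--Walker-transported orthonormal basis defining each Fermi normal frame: Lemma~\ref{lemma:FNreversible} only guarantees the existence of \emph{some} pair of adapted frames satisfying~\eqref{eq:RelateFNC}, and these need not coincide with the frames in which the detectors are actually prescribed. I would resolve this as follows. Running the construction of Lemma~\ref{lemma:FNreversible} starting from the prescription frame of detector~A produces a companion frame for detector~B whose coordinates satisfy~\eqref{eq:RelateFNC}; evaluating that relation at $t=t_{\tc{r}}$ gives $\widetilde{\bm x}_\tc{a}(t_{\tc{r}},\bm x)=\widetilde{\bm x}_\tc{b}^{\,\text{comp}}(t_{\tc{r}},\bm x)$. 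Comparing with assumption~4, $\widetilde{\bm x}_\tc{a}(t_{\tc{r}},\bm x)=\widetilde{\bm x}_\tc{b}(t_{\tc{r}},\bm x)$, shows that the companion and prescription frames for~B agree on a full spatial slice. Since two Fermi normal systems adapted to the same trajectory that coincide on an open slice must coincide everywhere---this is exactly the uniqueness argument used at the end of Lemma~\ref{lemma:SingleFNCreversible}---the companion frame is the prescription frame, so~\eqref{eq:RelateFNC}, i.e.~\eqref{eq:hypLemmaLmu2}, holds for the prescription coordinates. With all four conditions of Symmetry~\ref{sym:ReversalPtSwap} thereby established, Corollary~\ref{cor:symmetriesT} applies, proving the proposition.
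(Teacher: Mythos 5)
Your proof is correct and takes essentially the same route as the paper, whose entire proof is a single sentence stating that the proposition follows by combining the conditions of Symmetry~\ref{sym:ReversalPtSwap} with Lemmas~\ref{lemma:LambdaandMuSwap} and~\ref{lemma:FNreversible}. Your explicit resolution of the Fermi--Walker frame ambiguity---matching the companion frame produced by Lemma~\ref{lemma:FNreversible} to detector B's prescription frame on the $t_{\tc{r}}$ slice via assumption~4, then invoking the uniqueness argument of Lemma~\ref{lemma:SingleFNCreversible}---is exactly the detail the paper leaves implicit, so it is a faithful filling-in of the same argument rather than a different approach.
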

This proposition follows analogously to the previous one, by combining the conditions in Symmetry~\ref{sym:ReversalPtSwap} and the Lemmas~\ref{lemma:LambdaandMuSwap} and \ref{lemma:FNreversible}.


\begin{prop} \label{prop:finalForm3}
    Let $g_{\mu \nu}$ be the metric components in an arbitrary coordinate system $(t,\bm x)$, $\bm z_j(t)$ the spatial components of the detectors' trajectories and $(\tau_j,\widetilde{\bm x}_j)$ the Fermi normal coordinates adapted to $\bm z_j(t)$. Assume that
    \begin{enumerate}
        \item $g_{\mu\nu}(t_{\tc{r}}-t,\bm x_0-\bm x) = g_{\mu\nu}(t_{\tc{r}}+t,\bm x_0+\bm x)$,
        \item $G_\tc{f}(t_{\tc{r}}-t,\bm x_0-\bm x, t_{\tc{r}}-t',\bm x_0-\bm x')\\
        =G_\tc{f}(t_{\tc{r}}+t',\bm x_0+\bm x', t_{\tc{r}}+t,\bm x_0+\bm x)$,
        \item $\bm z_\tc{a}(t_{\tc{r}}-t)=2\bm x_0-\bm z_\tc{b}(t_{\tc{r}}+t)$,
        \item  $\widetilde{\bm x}_\tc{a}(t_{\tc{r}}, \bm x_0-\bm x) = \widetilde{\bm x}_\tc{b}(t_{\tc{r}}, \bm x_0+\bm x)$,
        \item $\Lambda_j(\tau_j,\widetilde{\bm x}_j)=\chi_j(\tau_j)F_j(\widetilde{\bm x}_j)$,
        \item $\chi_\tc{a}(\tau)=\chi_\tc{b}(\upsilon_\tc{r}-\tau)$,
        \item $F_\tc{a}(\widetilde{\bm x})=F_\tc{b}(\widetilde{\bm x})$,
        \item $\Omega_\tc{a} = \Omega_\tc{b}=\Omega$,
        \item $\beta_\tc{a}+\beta_\tc{b}=\Omega\upsilon_\tc{r}$, 
        \item $|\cos \alpha_\tc{a}|=|\cos \alpha_\tc{b}|$, $|\sin \alpha_\tc{a}|=|\sin \alpha_\tc{b}|$,
    \end{enumerate}
    where $\upsilon_\tc{r} = \tau_\tc{a}(t_{\tc{r}}, \bm x_0+\bm z_\tc{a}(t_{\tc{r}}))+\tau_\tc{b}(t_{\tc{r}}, \bm x_0-\bm z_\tc{b}(t_{\tc{r}}))$. Moreover, recall that the detectors' initial states are expressed as $\ket{\psi_j}=\cos\alpha_j\ket{g_j}+\sin\alpha_j e^{\ii \beta_j}\ket{e_j}$.
    
    These assumptions are sufficient to fulfill the conditions given in Symmetry \ref{sym:ReversalPtSwapReflect}, thus fulfilling Corollary~\ref{cor:symmetriesT}.
\end{prop}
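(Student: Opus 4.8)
The plan is to verify, one by one, the four hypotheses of Symmetry~\ref{sym:ReversalPtSwapReflect}, feeding assumptions 1 and 3 of the proposition into Lemma~\ref{lemma:FNreversibleReflect} and assumptions 5--10 into Lemma~\ref{lemma:LambdaandMuSwapReflect}, in complete parallel with the remark following Proposition~\ref{prop:finalForm2}. First I would dispatch the two ``background'' conditions. Condition~1 of the symmetry asks that the determinant $g$ be invariant under $T_\tc{rr}$; since assumption~1 here carries \emph{no} sign factor (the combined time reversal and spatial reflection flips two differentials, so every component obeys $g_{\mu\nu}(t_{\tc{r}}-t,\bm x_0-\bm x)=g_{\mu\nu}(t_{\tc{r}}+t,\bm x_0+\bm x)$), the determinant is manifestly preserved. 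Condition~2 of the symmetry is literally assumption~2 of the proposition, so it holds by hypothesis.

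The heart of the argument is conditions 3 and 4 of the symmetry, which concern the smearings $\Lambda_j$ and the monopole matrix elements. Here I would invoke Lemma~\ref{lemma:FNreversibleReflect}, whose hypotheses are exactly assumptions~1 and~3 of the proposition (the trajectory relation $\bm z_\tc{a}(t_{\tc{r}}-t)=2\bm x_0-\bm z_\tc{b}(t_{\tc{r}}+t)$ coincides with the lemma's after relabelling $t\to -t$). The lemma then produces Fermi normal coordinates adapted to the two trajectories that satisfy precisely Eq.~\eqref{eq:hypLemmaLmu3}, namely $\tau_\tc{a}(\mf x)=\upsilon_\tc{r}-\tau_\tc{b}(T_\tc{rr}(\mf x))$ and $\widetilde{\bm x}_\tc{a}(\mf x)=\widetilde{\bm x}_\tc{b}(T_\tc{rr}(\mf x))$. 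With this coordinate relation supplying hypothesis~1 of Lemma~\ref{lemma:LambdaandMuSwapReflect}, assumptions~5--10 of the proposition furnish the remaining hypotheses of that lemma ($F_\tc{a}=F_\tc{b}$, $\chi_\tc{a}(\tau)=\chi_\tc{b}(\upsilon_\tc{r}-\tau)$, $\Omega_\tc{a}=\Omega_\tc{b}$, $\beta_\tc{a}+\beta_\tc{b}=\Omega\upsilon_\tc{r}$, and the $|\cos\alpha|$, $|\sin\alpha|$ equalities), so Lemma~\ref{lemma:LambdaandMuSwapReflect} delivers conditions~3 and~4 of Symmetry~\ref{sym:ReversalPtSwapReflect}.

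The one genuine subtlety --- and the main obstacle to a purely mechanical citation --- is that Lemma~\ref{lemma:FNreversibleReflect} only guarantees that \emph{there exist} Fermi normal coordinates respecting the symmetry, whereas the covariant prescription fixes the physical $(\tau_j,\widetilde{\bm x}_j)$ only up to the initial choice of Fermi--Walker spatial frame. This is exactly the purpose of assumption~4, $\widetilde{\bm x}_\tc{a}(t_{\tc{r}},\bm x_0-\bm x)=\widetilde{\bm x}_\tc{b}(t_{\tc{r}},\bm x_0+\bm x)$: it forces the physical coordinate systems to agree on the $t=t_{\tc{r}}$ slice with the symmetric ones produced by the lemma. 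As in Lemma~\ref{lemma:SingleFNCreversible}, two Fermi normal systems adapted to the same trajectory that coincide on a single spatial surface must coincide everywhere they are defined, so the physical prescription coordinates are indeed the symmetric ones, and Eq.~\eqref{eq:hypLemmaLmu3} holds for them.

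Having verified all four conditions, the conclusion is immediate: Symmetry~\ref{sym:ReversalPtSwapReflect} is realized by the map $\bm T(\mf x,\mf x')=(T_\tc{rr}(\mf x'),T_\tc{rr}(\mf x))$, and by the summary closing Appendix~\ref{apx:ExaSym} this $\bm T$ satisfies the hypotheses of Corollary~\ref{cor:symmetriesT}, which is the stated claim. In writing this up I would keep it to the same two- or three-sentence form as the remark after Proposition~\ref{prop:finalForm2}, emphasizing only that the reflection case differs from the pure-swap case by carrying the spatial reflection $\bm x\to 2\bm x_0-\bm x$ unchanged through the identical chain of lemmas.
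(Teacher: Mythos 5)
Your proposal is correct and follows the same route as the paper, which proves this proposition in a single sentence by combining Symmetry~\ref{sym:ReversalPtSwapReflect} with Lemmas~\ref{lemma:LambdaandMuSwapReflect} and \ref{lemma:FNreversibleReflect}. Your write-up simply makes explicit what the paper leaves implicit, in particular the role of assumption~4 in promoting the existence statement of Lemma~\ref{lemma:FNreversibleReflect} to a statement about the prescription's own Fermi normal coordinates via the rigidity argument of Lemma~\ref{lemma:SingleFNCreversible}.
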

This following proposition follows analogously to the previous ones, by combining the conditions in Symmetry~\ref{sym:ReversalPtSwapReflect} and the Lemmas~\ref{lemma:LambdaandMuSwapReflect} and \ref{lemma:FNreversibleReflect}.

\section{Additional symmetries when swapping the times at which the detectors are switched}
\label{apx:plotSymmetries}
In the numerical results of figures \ref{Non_symmetric_alpha=2.35} and \ref{different_detectors_1D} we have observed symmetries with respect to changing the sign of $\Delta t = t_\tc{b}-t_\tc{a}$. Here we show why these symmetries are expected.


\begin{prop}\label{prop:equallife}
    Assume that
    \begin{enumerate}
        \item The spacetime is flat.
        \item The detectors are inertial and comoving, and start in the ground state.
        \item $G_\tc{f}(t,\bm x, t',\bm x') = G_\tc{f}(t',\bm x, t,\bm x')$, in the comoving frame of reference.
        \item The switching functions are equal besides a time shift, $\chi_j(t) = \chi(t-t_j)$ (in the comoving frame of reference, where $\Lambda_j(t, \bm x)=\chi_j(t)F_j(\bm x)$).
        \item Equal gaps, $\Omega_\tc{a}=\Omega_\tc{b}$.
    \end{enumerate}
    Then,
    \begin{align}
        \mathcal M=\mathcal M_\tc{S},\quad \mathcal M^\pm=\mathcal M^\pm_\tc{S},
    \end{align}
    where $\tc{S}$ denotes swapping $t_\tc{a}$ and $t_\tc{b}$. 
\end{prop}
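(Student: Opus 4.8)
The plan is to establish both identities by a single change of integration variables in the explicit integral for $\mathcal M$, using only the stated propagator symmetry. Under the present hypotheses (flat spacetime, inertial comoving detectors, ground-state preparation, separable smearings, and equal gaps) the correlation term takes the form of Eq.~\eqref{eq:Mforprop2},
\begin{align*}
\mathcal M = -\lambda^2 \int \dd t\, \dd{t'}\, \dd{\bm x}\, \dd{\bm x'}\;
& e^{\ii\Omega(t+t')}\,\chi(t-t_\tc{a})\,\chi(t'-t_\tc{b})\\
&\times F_\tc{a}(\bm x)\,F_\tc{b}(\bm x')\,G_\tc{f}(t,\bm x,t',\bm x'),
\end{align*}
where I have already inserted assumption 4, $\chi_j(t)=\chi(t-t_j)$, and assumption 5, $\Omega_\tc{a}=\Omega_\tc{b}=\Omega$. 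Since the swap $\tc{S}$ exchanges only the switching centers $t_\tc{a}\leftrightarrow t_\tc{b}$ (leaving $F_\tc{a}$, $F_\tc{b}$ and the common gap untouched), $\mathcal M_\tc{S}$ is the same integral with $\chi(t-t_\tc{a})\chi(t'-t_\tc{b})$ replaced by $\chi(t-t_\tc{b})\chi(t'-t_\tc{a})$.

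First I would relabel the dummy time variables $t\leftrightarrow t'$ in $\mathcal M_\tc{S}$ while holding the spatial variables fixed. This restores the original product $\chi(t-t_\tc{a})\chi(t'-t_\tc{b})$ and leaves both $e^{\ii\Omega(t+t')}$ and $F_\tc{a}(\bm x)F_\tc{b}(\bm x')$ invariant --- here the equality of the gaps is crucial, because a phase $e^{\ii(\Omega_\tc{a}t+\Omega_\tc{b}t')}$ with $\Omega_\tc{a}\neq\Omega_\tc{b}$ would not survive the relabeling. The only residual difference from $\mathcal M$ is that the propagator now appears as $G_\tc{f}(t',\bm x,t,\bm x')$. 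Invoking assumption 3, $G_\tc{f}(t',\bm x,t,\bm x')=G_\tc{f}(t,\bm x,t',\bm x')$, then identifies $\mathcal M_\tc{S}$ with $\mathcal M$.

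For the components $\mathcal M^\pm$ the same relabeling applies verbatim with $G_\tc{f}\to G_\tc{f}^\pm$, so the proof reduces to showing that hypothesis 3 is inherited by the symmetric and antisymmetric propagators, i.e. $G_\tc{f}^\pm(t',\bm x,t,\bm x')=G_\tc{f}^\pm(t,\bm x,t',\bm x')$. To obtain this I would first observe that any object of the form $\Theta(t-t')W^{\pm}(\mf x,\mf x')+\Theta(t'-t)W^{\pm}(\mf x',\mf x)$ is manifestly invariant under the full exchange of the two spacetime points. Combining this with assumption 3 shows that assumption 3 is equivalent to invariance of $G_\tc{f}$ under exchanging only the spatial arguments at fixed times, which at the level of the Wightman function reads $W(t,\bm x,t',\bm x')=W(t,\bm x',t',\bm x)$. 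Because $W^{+}$ and $W^{-}$ are, respectively, the even and odd parts of $W$ under point exchange, each of them inherits this spatial-exchange symmetry; hence so do $G_\tc{f}^\pm$, and undoing the full-exchange identity returns the desired time-exchange relation.

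The change of variables is entirely routine; the step that requires genuine care --- and the main obstacle --- is the last one, namely verifying that the single propagator symmetry of assumption 3 descends \emph{separately} to $\mathcal M^+$ and $\mathcal M^-$. The resolution is to split assumption 3 into the always-valid full-point-exchange symmetry of any Feynman-type propagator and the nontrivial spatial-exchange symmetry of $W$, the latter being inherited by both $W^+$ and $W^-$.
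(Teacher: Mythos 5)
Your proof is correct, and its core step is exactly the paper's: insert the hypotheses into the explicit integral for $\mathcal M$, relabel the dummy time variables $t\leftrightarrow t'$ while holding the spatial variables fixed (which is where equal gaps and the common switching profile are used), and invoke assumption 3 to restore the propagator, giving $\mathcal M=\mathcal M_\tc{S}$. Where you genuinely go beyond the paper is in the treatment of $\mathcal M^\pm$: the paper disposes of this case in one sentence, asserting that $G_\tc{f}^\pm$ ``have the same symmetries as $G_\tc{f}$,'' whereas you prove that inheritance. Your decomposition — noting that any object of the form $\Theta(t-t')W^\pm(\mf x,\mf x')+\Theta(t'-t)W^\pm(\mf x',\mf x)$ is identically symmetric under full point exchange, so that assumption 3 is equivalent to a spatial-exchange symmetry of $W$, which then passes separately to $W^+$ and $W^-$ as the even and odd parts of $W$ under point exchange, and hence to $G_\tc{f}^\pm$ — is the clean way to close this step, since a symmetry of a sum does not automatically descend to its parts. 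In the paper's intended application the assertion is easy to believe because the Minkowski-vacuum propagator of Lemma~\ref{lemma:MinkowskiGF} depends only on $|\Delta t|$ and $|\Delta \bm x|$, and so do its symmetric and antisymmetric parts; your argument instead establishes the proposition at the stated level of generality, using nothing but the abstract symmetry of assumption 3, which is a modest but real strengthening of the written proof.
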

\begin{proof}
Using the assumptions to simplify  the Eqs.~\eqref{eq:MGen} and~\eqref{M_harvesting}, and defining $\Omega=\Omega_\tc{a}=\Omega_\tc{b}$,
\begin{align}
        \mathcal M &=-\lambda^2 \int\dd t \dd{\bm x} \dd{t'} \dd {\bm x'} e^{\ii \Omega (t + t')}\chi(t-t_\tc{a})\chi(t'-t_\tc{b})\nonumber\\
        &\qquad\qquad\times F_\tc{a}(\bm x)F_\tc{b}(\bm x')G_{\tc{f}}(t,\bm x,t', \bm x')\nonumber\\
        & = \mathcal M_\tc{S},
    \end{align}
    where, to obtain the last equality, we performed a change of variables $t \to t'$, $t' \to t$ and used the hypotheses over and $G_\tc{f}$.

    The same result holds for $\mathcal M^\pm$, because the only difference is using $G_\tc{f}^\pm$, which have the same symmetries as $G_\tc{f}$. 
\end{proof}
This Proposition \ref{prop:equallife} implies that the quantities in the plots of Figure \ref{Non_symmetric_alpha=2.35} must be even functions along the $\Delta t$ axis, which is exactly the case. The condition over $G_\tc{f}$ is automatically fulfilled for the vacuum of Minkowski, as seen in Lemma~\ref{lemma:MinkowskiGF}.

Finally, the following proposition applies to the scenario of Figure \ref{different_detectors_1D}, causing the odd symmetry that can be seen in the plot of the relative phase (after adding $\pi/2$ to the relative phase). In such plot, the swap indicated by $\tc{S}$ in what will be Eqs.~\eqref{eq:swaptimes} is equivalent to changing the sign of $\Delta t = t_\tc{b}-t_\tc{a}$. The condition over $G_\tc{f}$ is automatically fulfilled for the vacuum of Minkowski, as seen in Lemma~\ref{lemma:MinkowskiGF}.

\begin{prop} \label{prop:swapMpm} 
Assume that
\begin{enumerate}
    \item The spacetime is flat.
    \item The detectors are inertial and comoving, and start in the ground state.
    \item $\Lambda_j(t, \bm x)=\chi_j(t)F_j(\bm x)$ in the comoving frame $(t,\bm x)$.
    \item $\chi_j(t_j+t)=\chi_j(t_j-t)$.
    \item  $G_\tc{f}(t,\bm x, t',\bm x') = G_\tc{f}(t_\tc{a}+t_\tc{b}-t,\bm x, t_\tc{a}+t_\tc{b}-t',\bm x')$.
\end{enumerate}
Then, 
\begin{align}
    |\mathcal M^\pm| = |\mathcal M^\pm_\tc{S}|,\ \cos \Delta \gamma = -\cos\, (\Delta\gamma)_\tc{S}\label{eq:swaptimes},
\end{align}
where $\tc{S}$ indicates swapping $t_\tc{a}$, $t_\tc{b}$, and $\Delta \gamma$ the relative phase between $\mathcal M^+$ and $\mathcal M^-$.
\end{prop}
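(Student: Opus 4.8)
The plan is to reduce both $\mathcal M^\pm$ and their swapped versions $\mathcal M^\pm_\tc{S}$ to the same single integral and relate them through the reflection $t\to t_\tc{a}+t_\tc{b}-t$ combined with complex conjugation. Since both detectors start in the ground state ($\alpha_j=0$ in Eq.~\eqref{eq:states}, so $\ket{\psi_j}=\ket{g_j}$), only the first term of Eq.~\eqref{eq:MGen} survives, and for inertial comoving detectors in the comoving frame ($\tau_j=t$) the correlation term and its symmetric/antisymmetric pieces read
\begin{equation}
    \mathcal M^\pm = -\lambda^2\!\int\! \dd t \dd t' \dd\bm x \dd\bm x'\, e^{\ii(\Omega_\tc{a}t + \Omega_\tc{b}t')}\chi_\tc{a}(t)\chi_\tc{b}(t')F_\tc{a}(\bm x)F_\tc{b}(\bm x')G_\tc{f}^\pm(t,\bm x,t',\bm x'),
\end{equation}
with $G_\tc{f}^\pm$ replacing $G_\tc{f}$ in Eq.~\eqref{eq:Mforprop2}. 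Writing $t_s=t_\tc{a}+t_\tc{b}$ and reading the swap $\tc{S}$ as relocating the (by assumption 4, symmetric) switching centers $t_\tc{a}\leftrightarrow t_\tc{b}$, the evenness condition $\chi_j(t_j+t)=\chi_j(t_j-t)$ yields the key identities $\chi_\tc{a}(t_s-t)=\chi_\tc{a}^\tc{S}(t)$ and $\chi_\tc{b}(t_s-t')=\chi_\tc{b}^\tc{S}(t')$, where $\chi_j^\tc{S}$ are the swapped switchings.

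The first nontrivial step is to upgrade assumption 5, stated only for the full $G_\tc{f}$, into symmetry statements for $G_\tc{f}^\pm$. I would observe that with $R(t,\bm x)=(t_s-t,\bm x)$ the condition $G_\tc{f}(\mf x,\mf x')=G_\tc{f}(R\mf x,R\mf x')$ is equivalent, after resolving the time orderings in Eq.~\eqref{eq:defGF}, to $W(\mf x,\mf x')=W(R\mf x',R\mf x)$. Inserting this into the definitions of $W^\pm$ gives $W^+(R\mf x,R\mf x')=W^+(\mf x,\mf x')$ but $W^-(R\mf x,R\mf x')=-W^-(\mf x,\mf x')$. Feeding these back into the definition of $G_\tc{f}^\pm$, and tracking how the Heaviside reordering under $R\times R$ compensates the sign flip of $W^-$, one finds that \emph{both} $G_\tc{f}^\pm$ are invariant, $G_\tc{f}^\pm(R\mf x,R\mf x')=G_\tc{f}^\pm(\mf x,\mf x')$. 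I would also record the state-independent reality/imaginarity $(W^+)^*=W^+$, $(W^-)^*=-W^-$, which give the conjugation rule $(G_\tc{f}^\pm)^*=\pm G_\tc{f}^\pm$.

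With these ingredients the main computation is a change of variables $t\to t_s-t$, $t'\to t_s-t'$ in $\mathcal M^\pm_\tc{S}$ (unit Jacobian, limits preserved). Using $\chi_j^\tc{S}(t_s-\cdot)=\chi_j(\cdot)$, the invariance of $G_\tc{f}^\pm$, and then $(G_\tc{f}^\pm)^*=\pm G_\tc{f}^\pm$ to match against $(\mathcal M^\pm)^*$, the exponential collapses to $e^{\ii\Phi}e^{-\ii(\Omega_\tc{a}t+\Omega_\tc{b}t')}$ with $\Phi=(\Omega_\tc{a}+\Omega_\tc{b})t_s$, and one obtains
\begin{equation}
    \mathcal M^+_\tc{S}=e^{\ii\Phi}(\mathcal M^+)^*,\qquad \mathcal M^-_\tc{S}=-e^{\ii\Phi}(\mathcal M^-)^*.
\end{equation}
Taking moduli gives $|\mathcal M^\pm_\tc{S}|=|\mathcal M^\pm|$ at once. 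Writing $\mathcal M^\pm=|\mathcal M^\pm|e^{\ii\gamma^\pm}$, the relations above give $\gamma^+_\tc{S}=\Phi-\gamma^+$ and $\gamma^-_\tc{S}=\pi+\Phi-\gamma^-$, so $(\Delta\gamma)_\tc{S}=\gamma^+_\tc{S}-\gamma^-_\tc{S}=-\Delta\gamma-\pi$, and therefore $\cos(\Delta\gamma)_\tc{S}=-\cos\Delta\gamma$, as claimed.

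I expect the main obstacle to be the second step: faithfully propagating the single hypothesis on the full $G_\tc{f}$ down to the individual $G_\tc{f}^\pm$. The subtlety is that $W^-$ reverses sign under the double reflection while $W^+$ does not, yet both $G_\tc{f}^\pm$ remain invariant because $R\times R$ also reverses the time ordering enforced by the Heaviside functions; this compensation must be verified case by case on the sign of $\Delta t$. It is precisely the opposite conjugation behavior $(G_\tc{f}^\pm)^*=\pm G_\tc{f}^\pm$ that injects the extra $-1$ in the $\mathcal M^-$ sector, generating the $\pi$ shift in the relative phase responsible for the sign flip of $\cos\Delta\gamma$.
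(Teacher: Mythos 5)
Your proof is correct, and its core computation is the same as the paper's: realize the swap $\tc{S}$ as a relocation of the (even) switching centers, perform the reflection $t\to t_\tc{a}+t_\tc{b}-t$, $t'\to t_\tc{a}+t_\tc{b}-t'$ inside the integral, and collect the overall phase $e^{\ii(\Omega_\tc{a}+\Omega_\tc{b})(t_\tc{a}+t_\tc{b})}$. Where you diverge is the endgame: the paper identifies the reflected integral with $\mathcal M^\pm(-\Omega_\tc{a},-\Omega_\tc{b})$ and then invokes Proposition~\ref{prop:transformation} (i.e., $\widetilde{\mathcal M}^\pm=\pm(\mathcal M^\pm)^*$) to conclude, whereas you derive $\mathcal M^+_\tc{S}=e^{\ii\Phi}(\mathcal M^+)^*$ and $\mathcal M^-_\tc{S}=-e^{\ii\Phi}(\mathcal M^-)^*$ directly from the conjugation rule $(G_\tc{f}^\pm)^*=\pm G_\tc{f}^\pm$ and carry out the phase arithmetic by hand. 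The two mechanisms are the same underneath (Proposition~\ref{prop:transformation} itself rests on $\mathcal M^\pm_{G_\tc{f}^*}=\pm\mathcal M^\pm$, Eq.~\eqref{eq:Mpm_relation}), so yours amounts to a self-contained rederivation rather than a reuse of the earlier machinery; the paper's route is shorter, yours is more explicit about where the extra $\pi$ in the relative phase comes from. One point genuinely in your favor: you explicitly promote hypothesis 5, which is stated for $G_\tc{f}$ only, to the invariance of each $G_\tc{f}^\pm$ under the double time reflection --- a step the paper uses silently when it says it ``used the hypotheses upon $\chi_j$ and $G_\tc{f}$'' inside integrals whose integrands actually contain $G_\tc{f}^\pm$. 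Your Heaviside-plus-$W^\pm$ bookkeeping for this is valid, though there is a one-line shortcut: since $W^+$ is real and $W^-$ purely imaginary, one has pointwise $G_\tc{f}^+=\mathrm{Re}\,G_\tc{f}$ and $G_\tc{f}^-=\ii\,\mathrm{Im}\,G_\tc{f}$, so any invariance of $G_\tc{f}$ under a real point transformation is inherited separately by its real and imaginary parts, i.e., by both $G_\tc{f}^\pm$.
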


\begin{proof}
    Using the assumptions to simplify the Eqs.~\eqref{eq:MGen} and~\eqref{M_harvesting}, the definition of $\mathcal M^\pm$, and that switching $t_\tc{a}$ and $t_\tc{b}$ changes $\chi_{\tc{a}}(t)$ into $\chi_{\tc{a}}(t+t_\tc{a}-t_\tc{b})$ and $\chi_{\tc{b}}(t)$ into $\chi_{\tc{b}}(t+t_\tc{b}-t_\tc{a})$,
    \begin{align}
        \mathcal M^\pm_\tc{S}&=-\lambda^2 \int\dd t \dd{\bm x} \dd{t'} \dd {\bm x'} e^{\ii( \Omega_{\tc{a}}t+\Omega_{\tc{b}} t')}G_{\tc{f}}^\pm(t,\bm x, t',\bm x')\nonumber\\
        &\qquad\times \chi_{\tc{a}}(t+t_\tc{a}-t_\tc{b})\chi_{\tc{b}}(t'+t_\tc{b}-t_\tc{a}) F_\tc{a}(\bm x)F_\tc{b}(\bm x')\nonumber\\
        &=-\lambda^2 e^{\ii (\Omega_\tc{a}+\Omega_\tc{b})(t_\tc{a}+t_\tc{b})} \int\dd t \dd{\bm x} \dd{t'} \dd {\bm x'} e^{-\ii( \Omega_{\tc{a}}t+\Omega_{\tc{b}} t')}\nonumber\\
        &\qquad\times\chi_{\tc{a}}(t) \chi_{\tc{b}}(t') F_\tc{a}(\bm x)F_\tc{b}(\bm x')G_{\tc{f}}^\pm(t,\bm x, t',\bm x')\nonumber\\
        &= e^{\ii (\Omega_\tc{a}+\Omega_\tc{b})(t_\tc{a}+t_\tc{b})}\mathcal M^\pm(-\Omega_\tc{a},-\Omega_\tc{b}).
    \end{align}
    Here, to get the second equality, we performed the change of variables $t\to t_\tc{a}+t_\tc{b}-t$, $t'\to t_\tc{a}+t_\tc{b}-t'$ and used the hypotheses upon $\chi_j$ and $G_\tc{f}$.     
    
    Finally, we get the Eq.~\eqref{eq:swaptimes} by applying Proposition~\ref{prop:transformation} for the case where the initial states of the detector are the ground states, and thus $\mathcal M^\pm(-\Omega_\tc{a},-\Omega_\tc{b})=\widetilde{\mathcal M}^\pm$.
\end{proof}

\bibliography{references}

\end{document}